\algnewcommand\algorithmicinput{\textbf{Input:}}
\algnewcommand\algorithmicoutput{\textbf{Output:}}
\algnewcommand\algorithmicuses{\textbf{Uses:}}
\algnewcommand\Input{\item[\algorithmicinput]}%
\algnewcommand\Output{\item[\algorithmicoutput]}%
\algnewcommand\Uses{\item[\algorithmicuses]}%
\newcommand{\SHORTTEX}{} 
\newcommand{\eqbreak}[1][2]{\\&\hskip#1em}
\renewcommand{\cite}{\unspace~\autocite}
\def\@fnsymbol#1{\ensuremath{\ifcase#1\or *\or \dagger\or \ddagger\or
   \mathsection\or \mathparagraph\or \|\or **\or \dagger\dagger
   \or \ddagger\ddagger \else\@ctrerr\fi}}
\def\@fnsymbol#1{\ensuremath{\ifcase#1\or *\or \dagger\or \ddagger\or
   \mathsection\or \mathparagraph\or \|\or **\or \dagger\dagger
   \or \ddagger\ddagger \else\@ctrerr\fi}}
\renewcommand{\d}[1]{\ensuremath{\operatorname{d}\!{#1}}}
\def\R{\mathbb{R}}
\DeclarePairedDelimiter{\abs}{\lvert}{\rvert}
\newcommand{\norm}[1]{\left\lVert#1\right\lVert}
\DeclareMathOperator{\rect}{rect}
\DeclareMathOperator{\step}{u}
\DeclareMathOperator{\signum}{sign}
\newcommand{\bfx}{\mathbf{x}}
\newcommand{\bfe}{\mathbf{e}}
\newcommand{\bfu}{\mathbf{u}}
\newcommand{\bfa}{\mathbf{a}}
\newcommand{\bfb}{\mathbf{b}}
\newcommand{\bfg}{\mathbf{g}}
\newcommand{\bfn}{\mathbf{n}}
\newcommand{\bfzero}{\mathbf{0}}
\newcommand{\bfomega}{\bm{\omega}}
\newcommand{\dx}{\d{x}}
\newcommand{\dy}{\d{y}}
\newcommand{\dt}{\d{t}}
\newcommand{\dr}{\d{r}}
\newcommand{\sfF}{\mathsf{F}}
\newcommand{\Fp}{\mathsf{F}_\text{P}}
\newcommand{\Fo}{\mathsf{F}_\text{0}}
\newcommand{\scrT}{\mathcal{T}}
\newcommand{\scrG}{\mathcal{G}}
\newcommand{\scrS}{\mathcal{S}}
\newcommand{\scrH}{\mathcal{H}}
\newcommand{\convast}{\circledast}
\newcommand{\conv}[1]{\underset{#1}{\convast}}
\def\PSI{\eta} %this was switch to \eta. 
\def\PHI{\Phi}
\newtheorem{theorem}{Theorem}[section]
\newtheorem{proposition}[theorem]{Proposition}
\newcommand*\patchAmsMathEnvironmentForLineno[1]{%
  \expandafter\let\csname old#1\expandafter\endcsname\csname #1\endcsname
  \expandafter\let\csname oldend#1\expandafter\endcsname\csname end#1\endcsname
  \renewenvironment{#1}%
     {\linenomath\csname old#1\endcsname}%
     {\csname oldend#1\endcsname\endlinenomath}}% 
\newcommand*\patchBothAmsMathEnvironmentsForLineno[1]{%
  \patchAmsMathEnvironmentForLineno{#1}%
  \patchAmsMathEnvironmentForLineno{#1*}}%
\newcommand\copyrighttext{%
%  \footnotesize 
\tiny 
%\scriptsize
  \textcopyright 2020 IEEE. Personal use of this material is permitted.
  Permission from IEEE must be obtained for all other uses, in any current or future
  media, including reprinting/republishing this material for advertising or promotional
  purposes, creating new collective works, for resale or redistribution to servers or
  lists, or reuse of any copyrighted component of this work in other works.    
  %This work is published in
  Please cite the published version of this work as: 
  Zalev, J. and Kolios, M. C., 
  "Fast 3-D Opto-Acoustic Simulation for Linear Array With Rectangular Elements."  
  IEEE Transactions on Ultrasonics, Ferroelectrics, and Frequency Control 68.5 (2020): 1885-1906.  
  DOI: 10.1109/TUFFC.2020.3038622%
  }
\newcommand\copyrightnotice{%
\begin{tikzpicture}[remember picture,overlay]
\node[anchor=south,yshift=10pt] at (current page.south) {\fbox{\parbox{\dimexpr\textwidth-\fboxsep-\fboxrule\relax}{\copyrighttext}}};
\end{tikzpicture}%
}
\begin{document}

\label{sec:title}

\title
{%
Fast Three-dimensional Opto-acoustic Simulation \\for Linear Array with Rectangular Elements
}

\author{
 Jason Zalev
}

\author
{
 Michael C. Kolios
}

\author{
\IEEEauthorblockN{Jason Zalev and Michael C. Kolios}

\thanks{J. Zalev and M. C. Kolios are with the Department of Physics,  Ryerson University, Toronto, ON, Canada
(email: jzalev@ryerson.ca, mkolios@ryerson.ca)}

}

\date{Draft \today}

\maketitle

\copyrightnotice

%%%%%%%%%%%%%%%%%%%%%%%%%%%%%%%%
%\input{uffc_THESIS}

\begin{abstract}
\label{sec:abstract}
\addcontentsline{toc}{section}{Abstract}
Simulation involves predicting 
responses of
 a physical system. 
In this article, 
we simulate
opto-acoustic signals
generated
in a three-dimensional volume
due to 
absorption of an optical pulse. 
A separable computational model is developed, which splits processing into two steps,  
permitting
 an order-of-magnitude 
improvement in computational efficiency 
over
a non-separable model. 
The simulated signals 
represent
acoustic waves,
measured
by a probe 
with a linear transducer array,
in a rotated and translated coordinate frame.
Light is delivered 
by an optical source 
that moves with
the probe's  frame. 
A spatio-temporal impulse response for rectangular-element transducer geometry is 
derived using a 
 Green's function solution to the acoustic wave equation. 
The approach permits fast and accurate simulation 
for a 
probe 
with arbitrary trajectory, 
which is useful for modeling free-hand acquisitions. 
For a 3D volume of $n^3$ voxels, computation is accelerated by a factor of $n$. 
This  may potentially  have application in opto-acoustic imaging, 
where clinicians 
 visualize 
structural and functional features
of biological tissue
for 
 assessment of cancer and other diseases.

\end{abstract}

%\maketitle

\section{Introduction}
\label{sec:intro}
Acoustic waves are generated when light heats optically absorbing materials. 
In the case of
biological tissues, 
near-infrared
laser pulses will
 produce strong opto-acoustic signals from blood and other absorbing structures. 
To permit visualization of 
%visualize
%For visualizing 
 cancer and other diseases, transducer arrays must detect 
 these
% the opto-acoustic
  signals to form images. 
Rapid simulation 
of this process
 may be required 
during 
image formation
 to solve 
 an inverse problem,  %Simulations are also used for analysis purposes.
 or for performing standalone analysis. 
 
Accurate simulation of
the system's physics, 
including 
%which includes
modeling of 
transducer elements and their 
impulse responses, is
critical to avoid unwanted distortions that may arise in practice. 
%%Accurate simulation of
%%the system physics,  
%%including modeling of 
%%transducer elements and their 
%% impulse responses, is
%%. % in certain applications. 
 However, the processing time required for 3D simulation
can be a limiting factor, especially when a more accurate transducer model is used.  
  Here we provide a mathematical derivation and approach for performing fast opto-acoustic simulation of a 3D volume 
for a linear array with rectangular elements. 

In our approach, a forward (and/or adjoint) operator is computed in the time-domain using a 
\emph{separable cascade}. We prove that a non-separable 3D Green's function model of the system response is equivalent to a cascaded series of two perpendicular 2D operations, even when the optical energy distribution, and spatial impulse response of the transducer's rectangular aperture are included. 
This can greatly improve the computational efficiency. 
For a linear array, 
time-domain data  is computed efficiently due to shift-invariance in the cascaded operation.
 Furthermore, the linear array can have arbitrary position and orientation, relative to the 3D volume of opto-acoustic sources. This facilitates simulating multiple frames of data when the linear array moves along an arbitrary trajectory 
 to record data for several laser pulses. 
In clinical imaging, this type of motion occurs during free-hand scanning with a probe at the surface of a subject's skin. 
Accordingly, we plan to 
incorporate 
the proposed technique 
towards 
improving image quality 
for
opto-acoustic image reconstruction. 

\subsection{Outline}
In Section~\ref{sec:intro}, 
we introduce 
the context of fast opto-acoustic simulation relative to other approaches in the literature, 
and give an overview of our approach. 

In Section~\ref{sec:model}, a mathematical model that defines opto-acoustic simulation for a linear array probe is described. 
The model is represented as a \emph{mathematical operator} that includes the effects of: i) acoustic wave propagation for a 3D volume with constant speed of sound and density; ii) spatial impulse response due to finite transducer aperture on the received signal (e.g. effects of rectangular transducer elements and directionality); and iii) a spatially dependent optical fluence distribution of the delivered light energy. We show how the probe's position and orientation, in a coordinate frame relative to the volume's coordinates, influences the resulting signal. We also demonstrate mathematical properties that are necessary for the following sections. 

In Section~\ref{sec:separable}, we prove that the mathematical operator developed in Section~\ref{sec:model} is separable. 
An exact  analytic solution, which is separable, is formulated for when rectangular transducer elements are used. 
Furthermore, if an exact solution is not required, we also present a far-field approximation that 
permits additional performance increase. 

In Section~\ref{sec:impl}, we describe the results of our implementation for fast 3D simulation, and 
compare the simulated acoustic output to what is obtained with other 3D simulations. 
In Section~\ref{sec:discussion}, we discuss the model and analyze its computational complexity. We conclude in Section~\ref{sec:concl}.
A list of symbols and equations is provided in Table~\ref{tab:symbols}. 

\subsection{Significance} 
To our best knowledge, our work is the first that involves combining a compositionally-separable time-domain operator for acoustic simulation with an efficient method to include the spatial impulse response from multiplicatively-separable rectangular transducer elements. We also believe that this is the first work to describe a separable operator for 3D acoustic simulation in the rotated frame of a linear array probe, and to recognize that there is computational advantage of doing so during implementation. Also, our proof in Section~\ref{sec:separable}, which is based on showing separability using the Dirac delta function, is novel. It illustrates 
 the separability of the 3D acoustic wave Green's function, and generalizes to a wider class of Green's function kernels that are separable. Additionally, our opto-acoustic simulation model in Section~\ref{sec:model} is novel in the sense that the transducer aperture function and the optical energy distribution move in the coordinate reference frame of the linear array relative to the volume's coordinates, which is applicable for simulating transducer arrays undergoing motion.

\subsection{Relation to other work}

Opto-acoustic imaging is a new modality that has recently been investigated clinically  for improved accuracy in breast cancer diagnosis\cite{oraevsky2018clinical, zalev2015opto, neuschler2018pivotal, menezes2018downgrading, oraevsky2014optoacoustic, xu2006photoacoustic, zalev2018image}.  
The modality is able to visualize breast tumors several centimetres deep within tissue, 
and may potentially  improve distinction between benign and malignant lesions\cite{oraevsky2018clinical, zalev2015opto, neuschler2018pivotal, menezes2018downgrading}. The simulation model described in this article was developed by considering the geometry and specifications of a hand-held opto-acoustic probe with a linear transducer array and light delivery unit that would be suitable for clinical imaging.

There are numerous approaches for performing acoustic and opto-acoustic simulations. 
The approach proposed in our paper is formulated for opto-acoustic simulation in the \emph{time-domain}. 
For an omni-directional point detector, it is known that opto-acoustic wave propagation can be computed as a convolutional operator in the time-domain, based on a Green's function solution to the scalar wave equation
\cite{morse1968theoretical,oraevsky2014optoacoustic, xu2006photoacoustic}.
Several time-domain based approaches for acoustic simulation that involve rectangular elements\cite{ lasota1982application,selfridge1980theory,
jensen1996field,jensen1992calculation,
stepanishen1971transient, lasota1984acoustic, scarano1985new, san1992diffraction}
are briefly summarized in this section.

Separable acoustic wave methods have also been studied.  
The approach of 
Jakubowicz~et~al.~(1989)~\cite{jakubowicz1989two} showed that 
seismological 
migration for 3D acoustic wave fields  was separable in the time-domain using cascaded operations. 
The 
method 
involved
 a \emph{backward} (inversion)  operator 
  to  \mbox{\emph{reconstruct}} an imaging volume from its measured signals\cite{claerbout2008basic}. 
However, to \emph{simulate} signals from an imaging volume, a \emph{forward} operator is required. 
By implementing the equations of 
Jakubowicz~et~al.~\cite{jakubowicz1989two}
 in the opposite order, 
 we develop a 
 forward operator that is separable. 
  Furthermore, we show that efficient computation and separability is still possible when rectangular elements are used, and that the equation can be applied in a rotated coordinate system of the probe.

For approaches of performing simulations in the time-domain, the impulse response for a rectangular element has been well studied\cite{stepanishen1971transient, lasota1984acoustic, scarano1985new, san1992diffraction,jensen1992calculation, hunt1983ultrasound,
cobbold2006foundations}. 
 The exact time-domain impulse response for a rectangular element was described by 
San Emeterio and Ullate (1992)~\cite{san1992diffraction}. 
For a line source, the exact time-domain impulse response is described by Lasota et al. (1984)~\cite{lasota1984acoustic}.  
 Scarano et al. (1985)~\cite{scarano1985new} proposed 
an impulse response calculation for rectangular transducer elements 
that involved a separable aperture function. However, this 
involved
 multiplicative separability for rectangular geometry, 
 which is different from our approach that uses 
  separability of 
convolutional operators by composition.  
A trapezoidal far-field approximation for the impulse response of rectangular transducers was described by Stepanishen~(1971)~\cite{stepanishen1971transient,stepanishen1972comments} and Freedman~(1971)~\cite{freedman1971farfield}.
McGough~(2004)~\cite{mcgough2004rapid} describes rapid calculation of time-harmonic near-field pressures from rectangular elements, which improves numerical performance by removing singularities that arise in integral equations.

 Software for acoustic simulation is also available. 
 Field II is a program designed for computing pressure fields from arbitrarily shaped, apodized, and excited ultrasound transducers\cite{jensen1996field,jensen1992calculation}. 
 The approach subdivides an aperture into multiple rectangular elements. 
FOCUS is another toolbox, which uses an approach called the fast near-field method to efficiently compute accurate acoustic fields 
 for various geometries\cite{mcgough2004rapid,chen20082d,mcgough2004efficient}. 
In Field II, simulation of opto-acoustic signals from a 3D volume can be performed by forming a superposition of weighted impulse responses from a list of absorbers located at different 3D positions. This is less efficient than the proposed approach, which uses volumetric separability to compute output for a linear-array.

In addition to time-domain approaches, separable approaches for acoustic waves models have also been described for the 
\emph{frequency-domain}\cite{jakubowicz1983simple, fokkema1986comments}. 
This involves using a pair of two-dimensional cascaded operators (analogous to the cascaded time-domain operators). 
In each cascaded 2D frequency-domain operation, the space of 2D Fourier data is warped using a 
non-linear transformation.
The frequency-domain computations work best when elements are aligned along a 2D grid, which  is preferred for efficient implementation of the multi-dimensional fast Fourier transform operator.

Other frequency-domain methods for computing acoustic signals include pseudo-spectral or \enquote{k-space} approaches\cite{bojarski1982k,cox2007k,cox2005fast}. 
One such simulator, called K-wave\cite{treeby2010k},
is designed to produce opto-acoustic signals from a dense volume with non-homogeneous acoustic properties. 
For non-homogeneous volumes, k-space methods use temporally discretized steps 
that compute
repeated frequency-domain operations. 
Each step can be implemented using separable operations; however,   
general purpose k-space simulation of non-homogeneous volumes
 cannot exploit the same efficiency that arises in a separable model for the homogeneous case.

For opto-acoustic imaging, 
the effects of transducer aperture  
on image reconstruction
have previously been studied\cite{anastasio2007application,xu2003analytic}. 
In addition, models involving far-field approximations for rectangular elements 
using a frequency-domain approach
 have been investigated
  for use in opto-acoustic reconstruction\cite{wang2011imaging, mitsuhashi2014investigation, wang2013accelerating}. 
However, these methods do not involve a separable simulation model 
and 
are different from our proposed approach.

%\clearpage
%%%%%%%%%%%%%%%%%%%%%%%%%%%%%%%%%%%%%
\section{Opto-acoustic Simulation Model}
\label{sec:model}
%%%%%%%%%%%%%%%%%%%%%%%%%%%%%%%%%%%%%
In this section, we describe our model for opto-acoustic simulation. 
Subsequently, in Section~\ref{sec:separable}, we will prove that it has an equivalent separable implementation that can be computed efficiently.
The model simulates the system-response of a 3D volume of optical absorbers, when  illuminated by a pulse of light. 
It is assumed that the source of light moves in the same coordinate reference frame as a transducer array that records opto-acoustic signals. 

In opto-acoustics, 
signal strength
is proportional to a volume's optical absorption coefficient and to its Gr{\"u}neisen coefficient. %. 
These 
 vary spatially within the 3D volume. 
When multiplied together, they form a parameter called the \emph{opto-acoustic conversion efficiency}, %. 
which is 
 treated as an input to our simulation model.  
The 
signal strength
is also proportional to the \emph{optical energy distribution}, 
another system input,
which describes the amount of light that reaches each location in the volume. 

In the model,  it is assumed 
that 
the optical energy distribution depends only on \emph{bulk} optical properties of the volume, 
independent from the volume's spatially dependent optical absorption coefficient. 
This 
 approximation  makes it possible
 to rotate and translate the spatial distribution of optical energy, 
which moves with the position and orientation of the light source.
It is also assumed that the acoustic properties of the volume are homogeneous (constant speed of sound, density, etc.).

The output from the simulation includes a set of time-domain acoustic signals corresponding to what would be measured by an array of rectangular transducer elements, positioned in the coordinate frame of the probe relative to the volume.

\ifdefined\SHORTTEX
\begin{table}[t!]
\caption{List of symbols and equations}
\begin{adjustbox}{height=0.48\textheight}
\centering
\begin{tabular}{rp{2.4in}}
\hline\hline\\
\else
\vfill
\subsection{List of symbols and equations}
\hfill
%\clearpage
%\section*{List of symbols}
\begin{center}
\begin{tabular}{rl}
\fi

$\psi(\bfx)$
&  acoustic source distribution (initial excess pressure) \\
&$\qquad\psi(\bfx)=\PSI(\bfx)\PHI_{A,\bfb}(\bfx)$\\

$\PSI(\bfx)$
& opto-acoustic conversion efficiency \\

$\PHI(\bfx)$& optical energy distribution \\
&$\,\,\PHI_{A,\bfb}(\bfx)=\scrT_{A,\bfb}\{\PHI\}(\bfx)=\PHI\big(A^T(\bfx-\bfb)\big)$\\
\\

$p(\bfx,t)$& acoustic pressure \\
&$\qquad p(\bfx, t) = \scrH\{\psi\}(\bfx,t)$\\
$s(\bfu,t)$& simulated opto-acoustic signal \\
&$\qquad s(\bfu, t) = \scrH^{A,\bfb}_{f, \PHI}\{\PSI\}(\bfu,t)$\\

\\

$\scrH\{\psi\}(\bfx, t)$ & 
ideal system response operator\\
$\scrH_f\{\psi\}(\bfx, t)$ & 
system response operator for aperture $f$\\
$\scrH^{A,\bfb}\{\psi\}(\bfu, t)$ & 
system response operator in probe frame\\
&$\,\, \scrH_f^{A,\bfb}\{\psi\}(\bfu, t)=h_f(\bfu, t) \conv{\bfu} \psi(A\bfu+\bfb)$\\
$\scrH_{f,\PHI}^{A,\bfb}\{\PSI\}(\bfu, t)$ &
generalized system response operator\\
&$ \scrH_{f,\PHI}^{A,\bfb}\{\PSI\}(\bfu, t)=h_f(\bfu, t) \conv{\bfu} 
\big[\underbrace{\PSI(A\bfu+\bfb)\PHI(\bfu)}_{\psi^{A,\bfb}(\bfu)}\big]$\\

$\scrG\{\psi\}(\bfx, t)$ & 
factor of system response operator \\
&$\qquad \scrH^{A,\bfb}= \tilde \scrG \circ \scrG^{A,\bfb} $\\

\\

$f_\bfa(\bfx)$
&  rectangular aperture function \\
&$\qquad f_\bfa(\bfx)=\rect(\frac{x_1}{a_1},\frac{x_2}{a_2})\delta(x_3)$\\

$g(\bfx,t)$& 
d'Alembert free-space Green's function  \\
&$\qquad g(\bfx,t)= \frac{1}{4\pi t}\delta(c t-\norm\bfx) $\\

$h_f(\bfx,t)$
& pressure impulse response for aperture $f$\\
&$\quad h_f(\bfx, t)=\bigg[\underbrace{\frac{\partial}{\partial t}g(\bfx, t)}_{h(\bfx,t)}\bigg] \conv{\bfx} f(\bfx)$\\

$\bar h_f(\bfx, t)$
& modified pressure impulse response for aperture $f$\\
&$\quad \bar h_f(\bfx, t)=\big[
\underbrace{ h(\bfx, t)\alpha(\bfx)}_{\bar h(\bfx,t)} \big] \conv{\bfx} f(\bfx)$\\

$\alpha(\bfx)$
& obliquity function \\

\\
$\delta(t)$& Dirac impulse function \\
$\step(t)$& Heaviside step function \\
\\
\ifdefined\SHORTTEX
\else
\end{tabular}

\begin{tabular}{rl}
\fi

$\bfx$& global volume coordinates in $\sfF_0$ \\
&$\qquad \bfx=(x_1,x_2,x_3)
%=x_1 \hat \bfe_1+x_2 \hat \bfe_2+x_3 \hat \bfe_3
$\\
$\bfu$& local coordinates in probe frame $\sfF_\text{P}$ \\
&$\qquad\bfu=
(u_1,u_2,u_3)
%A^T(\bfx-\bfb)
$\\

\\
$\hat \bfe_1,\hat \bfe_2, \hat \bfe_3$ & standard basis vectors \\

$\hat \bfn_1,\hat \bfn_2, \hat \bfn_3$ & basis vectors for $\sfF_\text P$\\

\\
$\sfF_\text{0}$ & base coordinate reference frame \\
$\sfF_\text{P}$ & probe coordinate reference frame \\
\\
$T_{A,\bfb}(\bfu)$& affine transform function  \\
&$\qquad T_{A,\bfb}(\bfu) = A\bfu + \bfb$\\
$T^{-1}_{A,\bfb}(\bfx)$& inverse affine transform function  \\
&$\qquad T^{-1}_{A,\bfb}(\bfx) = A^T(\bfx - \bfb)$\\

$\scrT_{A,\bfb}\{\psi\}(\bfx)$
 & affine transform operator \\
&$\,\,%quad 
\scrT_{A,\bfb}^{-1}\{\psi\}(\bfu)=\psi\!\left( T_{A,\bfb} (\bfu)\right)
=\psi^{A,\bfb}(\bfu)
%=\psi_\text{P}(\bfu)
$\\

\\
$A$& rotation matrix  \\
$\bfb$& translation vector  \\
\\

$f\circledast \tilde f$& convolution\\
&$\quad [f\circledast \tilde f](\bfx)=\int_{\R^3} \!f(\bfx')\tilde f(\bfx-\bfx')\,\d\bfx' $\\
$f\circ \bfg\hspace{1.2pt}$& composition\\
&$\qquad  [f\circ\bfg](\bfx)=f(\bfg(\bfx))$\\
&$\quad  [\scrS\circ\tilde\scrS]\{\psi\}(\bfx)= \scrS\{ \tilde\scrS\{\psi\} \}(\bfx)$\\
$f\odot \tilde f$& pointwise multiplication \\
&$\qquad [f\odot \tilde f](\bfx)=f(\bfx)\tilde f(\bfx)$\\

\ifdefined\SHORTTEX
\\\hline\hline
\end{tabular}
\end{adjustbox}
\vspace{-10px}
\label{tab:symbols}
\end{table}
\else
\end{tabular}%
\label{tab:symbols}
\end{center}
\fi

\subsection{Notation and preliminaries}
\label{sec:notation}

In our notation, 
vectors and vector functions appear in a bold font 
(e.g. $\bfx,\, \bfg(\bfx), \ldots$). 
Scalars and scalar functions appear in plain font (e.g. $f,\,g(\bfx),\,k,\ldots\,$).
Spatial position is represented by a vector in Euclidean space. 
The components of a three-dimensional position vector $\bfx \in \R^3$ are written $x_1$, $x_2$, $x_3$. 
A matrix (e.g. $A$) is described with uppercase roman letters. 
The transpose of matrix $A$ is written $A^T$. 
Column vectors are written with round brackets, and row vectors use square brackets. 
Thus, 
$\bfu=(u_1,u_2,u_3)=[u_1,u_2,u_3]^T$ and $\bfu^T=[u_1,u_2,u_3]$. 
A function is often written without its argument.
For example, $\psi(\bfx)$ can be shortened to $\psi$, when usage is clear from the context. 
A mathematical operator (e.g. $\scrG, \scrH, \scrT, \ldots$), which maps one space of functions to another space of functions, is written with script.
When a function $\psi$ is transformed (into a new function) by an operator $\scrS$, this is written $\scrS\{\psi\}$. 
 The affine transform \emph{function} $T_{A,\bfb}$, which acts on a vector and produces a vector, is written in uppercase; it should not be confused with the affine transform \emph{operator} $\scrT_{A,\bfb}$ which acts on a function and produces a function, and is written in script. Also, since the number of letters in the alphabet is limited, a tilde symbol over a letter (e.g. $\tilde a, \tilde \scrG, \ldots$) is used to indicate a different (and possibly related) variable or operator.

\hfill

To perform acoustic simulation, 
operators involving \emph{convolution} and \emph{composition} are used. These are described below. 

\subsubsection{Composition}

 The composition of a function $f(\bfx)$ with a function ${\mathbf{g}}(\bfx)$ 
 is defined as 
\begin{align}
[f \circ {\bfg}](\bfx) \equiv f({\bfg}(\bfx)).
\end{align}

For composition to be valid, the domain of $f$ must match the result produced by $\bfg$. For example, if $f\colon\R^3 \rightarrow \R$ (which means the input  of $f$ is a vector in $\R^3$ and its output is a real number), then $\bfg$ must output a vector in $\R^3$ to match the domain of $f$.  

Composition can also be applied when the function is a matrix. For example, if $A\in \R^{3\times3}$, then $[f \circ A](\bfx) = f(A\bfx)$. 
Composition with the affine transform function is written \mbox{$[f \circ T_{A,\bfb}](\bfx)$}, which is equal to $f(T_{A,\bfb}(\bfx))$.

\hfill 

\subsubsection{Mathematical operators}

In this paper, composition is  applied to mathematical operators.   
We refer to the composition of two operators as a \emph{cascade}. For fast acoustic simulation, we are concerned with the cascade 
\mbox{$\scrH\{\psi\}(\bfx) = [\tilde\scrG  \circ \scrG]\{\psi\}(\bfx)$}. 
That is to say, $\scrH$ is \emph{compositionally-separable} into $\tilde \scrG$ and  $\scrG$, when applied to any source function $\psi$. 
Here, the composition of two  operators $\tilde\scrG$ and  $\scrG$ is defined  
\mbox{%
$[\tilde\scrG \circ \scrG]\{\psi\}(\bfx) \equiv \tilde\scrG  \{ \scrG \{\psi\}\}(\bfx)$}.

To be somewhat more precise,  
a  {transform operator} is a function that transforms 
one function-space $S_1$ into another function-space $S_2$. 
One familiar  transform operator is the (temporal) Fourier transform  $\mathcal F \colon (\R\!\rightarrow\!\R) \longrightarrow (\R\! \rightarrow \!\mathbb C)$,
which
 converts a real valued function in the time-domain $f(t)$  to a complex-valued function in the frequency domain $\hat f({\bfomega})$,
  with the relationship $\hat f({\bfomega}) = \mathcal F\{f \}({\bfomega})$.

When an operator $\mathcal{S} \colon S_1 \rightarrow S_2$ is applied to a function \mbox{$f\in S_1$}, this is written as $\mathcal{S}\{f\}(\bfx)$. This means that the function $f$ is transformed by $\mathcal{S}$ from $S_1$ to $S_2$. The result is in the function-space $S_2$. The argument $\bfx$, which is in the domain of $S_2$, is applied to the transformed function. 

There are two major reasons why operator notation is useful. 
First, it is similar to linear matrix form (beyond the scope of this paper), 
which is commonly used in image reconstruction to 
invert measured data (see \cite{zibulevsky2010l1, mccann2019biomedical,zalev2018image,provost2009application, claerbout2008basic}). 
For example,
 $\scrH$ would directly correspond to a discretized system matrix $H$, 
which 
enables solving inverse problems with 
linear algebra. 
In a similar manner,
 the cascade $\tilde\scrG \circ \scrG$ corresponds to the matrix multiplication $\tilde G G$, where $\tilde G$ and $G$ are discretized matrices. 
This can offer computational advantages \cite{zalev2020convex} when using standard inversion methods. 
Secondly, for mathematical proofs, our notation succinctly permits manipulation of continuous variables,   
which are 
hidden in linear matrix notation. 
This simplifies  mathematical descriptions used in this paper and future work involving image reconstruction\cite{zalev2020convex}.

\hfill

\subsubsection{Convolution}

The acoustic signal model involves convolution. 
The convolution of function $f(\bfx)$ with function $g(\bfx)$, in the vector variable $\bfx$, is defined as 
\begin{align}
\label{def:conv}
[f \conv{} g](\bfx) 
&\equiv 
%\int_\infty^\infty\!
%\int_\infty^\infty\!
\int_{-\infty}^{\infty}\!
f(\bfx')g(\bfx-\bfx')
\d {\bfx'}.
%\\
%f(\bfx) \conv{\bfx} g(\bfx) 
%&= 
%%\int_\infty^\infty\!
%%\int_\infty^\infty\!
%\int_\infty^\infty\!
%f(\bfx')g(\bfx-\bfx')
%\d {\bfx'}.
%%\\&=
%%\int_\infty^\infty\!\!
%%\int_\infty^\infty\!\!
%%\int_\infty^\infty\!\!
%%f(\bfx')g(\bfx-\bfx')
%%\d {\bfx'}
\end{align}
Here, $\d{\bfx'}$ stands for $\d{x_1'}\d{x_2'}\d{x_3'}$ and  
implies that the integration is carried out over a triple integral $\int_{-\infty}^\infty \int_{-\infty}^\infty \int_{-\infty}^\infty$ due to the three components of $\bfx'$. 
The primed index indicates
that the integration is performed over an intermediate variable $\bfx'$ which is different from (and not to be confused with) the argument variable $\bfx$. 
An intermediate vector variable's elements also have primed indices (e.g., $x_1'$, $x_2'$ and $x_3'$).
If convolution is performed in an expression that  already has one intermediate variable (e.g. $\bfx'$), then a double-primed index is used to indicate a second intermediate variable (e.g. $\bfx''$). 

Often, convolution will be performed over a function of several variables, but the integration will only apply to a subset of these variables. This is indicated by placing the name of the applicable integration variable(s) below the convolution symbol. For example, an expression may read \mbox{$h(\bfx)= f(\bfx) \conv{\bfx} g(\bfx,t)$}. 
In this case, when a replacement for $\bfx$ occurs in the left-hand side $h(\bfx)$, then the right-hand side $f(\bfx) \conv{\bfx} g(\bfx,t)$
should be interpreted 
as in 
 \eqref{def:conv}. 
An explicit replacement uses lower subscripts on square brackets. 
For example, 
\begin{align*}
\left[f(\bfx) \conv{\bfx} g(\bfx) \right]_{\bfx=A\bfu+\bfb}
\end{align*}
is equal to
\begin{align*}
%\left[f(\bfx) \conv{\bfx} g(\bfx) \right]_{\bfx=A\bfu+\bfb}
%=
\int_{-\infty}^{\infty}\!
f(\bfx')g(A\bfu+\bfb-\bfx')
\d {\bfx'}.
\end{align*}

As another shorthand, the expression $f(\bfx) \!\conv{x_1,x_2} \!g(\bfx)$ indicates that the convolution is carried out only over $x_1$ and $x_2$. This avoids having to include a factor of Dirac delta (e.g. $\delta(x_3)$) to cancel the variable that is not integrated over.

\subsection{Coordinate frames of linear array} 
\label{sec:coordinate}
In this section we describe the
coordinate frames 
and geometry for the opto-acoustic probe
used in our model.

Figure~\ref{fig.probe_geom}  shows an opto-acoustic probe with linear array in a rotated coordinate frame. 
The coordinate frame of the probe is $\sfF_\text P$. 
The probe (and thus its coordinate frame) is located at position $\bfb$, relative to the base coordinate system $\sfF_0$. 
The advantage of this setup is that the 3D volume of acoustic sources can be maintained in 
$\sfF_0$, which facilitates computation. 
Light is delivered from $\sfF_\text P$, so the optical energy distribution remains stationary relative to the probe, but is translated and rotated relative to the volume. 

The vector $\bfx \in \R^3$ is used to describe a position in the global frame $\sfF_0$, and 
the vector $\bfu \in \R^3$ describes a position in the local frame of the probe $\sfF_\text{P}$. 

The probe orientation is represented by direction vectors $\hat\bfn_1$, $\hat\bfn_2$, and $\hat\bfn_3$ that point along the axes of the probe. 
In a linear array, the elements of the array are arranged on a line pointing along the  
$\hat\bfn_1$ axis. 
Therefore, in frame $\sfF_0$, the line of elements is given by the equation
\begin{align}
\label{eq:line_eq}
\bfx = \hat\bfn_{1} u_1 + \bfb, 
\end{align}
where $u_1$ is a positive or negative distance from $\bfb$ in the $\hat\bfn_{1}$ direction. 
In frame 
 $\sfF_\text P$, a position on the 
line of transducer elements corresponds to $\bfu = (u_1, 0,0)$. 
The \mbox{$k$-th} element of the array is located at $u_1 = (k-1) \Delta_{u_1} + u_0$, where the element spacing is $\Delta_{u_1}$,  
and the first element's position corresponds to  $u_0$.

\begin{figure}
\ifdefined\SHORTTEX
    \centerline{\includegraphics[width=1.0\columnwidth, trim=10 20 10 30, clip]{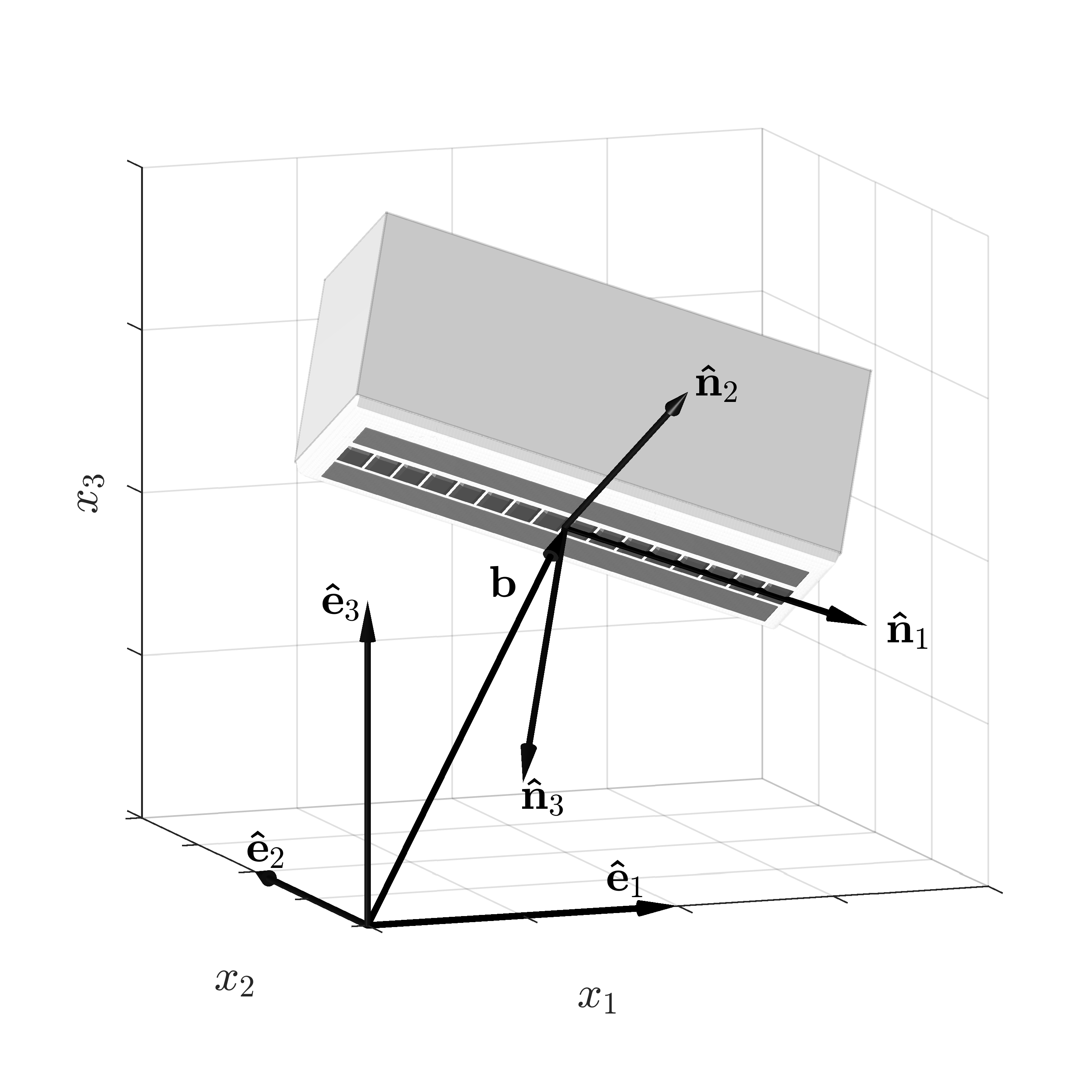}}
\else
    \centerline{\includegraphics[width=0.7\columnwidth, trim=10 20 10 30, clip]{fig_probe_1.png}}
\fi
    \centering
\caption[Geometry of an opto-acoustic probe in rotated and translated coordinate system]{Geometry of an opto-acoustic probe in rotated and translated coordinate system.
A linear array of 16 transducer elements (small dark rectangles on the probe face) is shown along the $\hat\bfn_1$ axis. 
Two rectangular optical source apertures (long dark rectangles on the probe face) are shown 
adjacent
to the array, which illustrate how light is delivered to the volume in a typical scenario. 
The basis vectors for the probe's local coordinate frame $\Fp$, are $\hat\bfn_1$, $\hat\bfn_2$, and $\hat\bfn_3$. 
For the global coordinate frame $\Fo$, the standard basis vectors $\hat\bfe_1$, $\hat\bfe_2$, and $\hat\bfe_3$ are used. 
The probe is translated from the origin by $\bfb$, and rotated by 
the rotation matrix $A=[\hat\bfn_1\quad\hat\bfn_2\quad\hat\bfn_3]$. 
}
\label{fig.probe_geom}
\end{figure}

We consider a rigid coordinate transform from $\sfF_\text{P}$ to $\sfF_0$ using an affine transformation. 
Let $A \in \R^{3\times 3}$ be a rotation matrix, and $\bfb \in \R^3$ be a spatial displacement vector.  
The function $T_{A,\bfb}(\bfu)$ applies the rotation  $A$ to $\bfu$, followed by a translation of $\bfb$.  Thus, 
conversion between the local coordinates $\bfu$ of the probe  and global coordinates  $\bfx$ of the volume  can be performed by
\begin{subequations}
\label{eq:atrans}
\begin{align}
\bfx={T}_{A,\bfb} (  \bfu ) &= A\bfu+\bfb \label{eq:atrans_fwd}
\\
\bfu={T}^{-1}_{A,\bfb} (  \bfx ) &= A^{T}(\bfx-\bfb). \label{eq:atrans_inv}
\end{align}
\end{subequations}

The rotation matrix $A$ for converting orientation from $\sfF_P$ to $\sfF_0$ is given by
\begin{align}
A 
=
\begin{bmatrix}
a_{11} & a_{12} & a_{13} \\
a_{21} & a_{22} & a_{23}  \\
a_{31} & a_{32} & a_{33} 
\end{bmatrix}
= 
\begin{bmatrix}
\uparrow&\uparrow&\uparrow\\
\hat\bfn_{1}& \hat\bfn_{2}& \hat\bfn_{3} \\  
\downarrow&\downarrow&\downarrow
\end{bmatrix}
.
\end{align}
The columns of the rotation matrix correspond to the direction vectors $\hat\bfn_1$, $\hat\bfn_2$, and $\hat\bfn_3$ for the probe orientation. 
The direction vectors form an orthonormal coordinate system, and are related by
$\hat\bfn_1 \cdot\, \hat\bfn_2 =  \hat\bfn_2 \cdot\, \hat\bfn_3 = \hat\bfn_3 \cdot\, \hat\bfn_1 =0$ and $\hat\bfn_3 = \hat\bfn_1 \times\, \hat\bfn_2$.
Since the columns are orthogonal unit vectors, the matrix $A$ is orthogonal; therefore, its inverse $A^{-1}$ is equal to its transpose $A^T$, which is a property of rotation matrices. 
The translation $\bfb$ is the position of the probe in $\sfF_0$, as shown in Figure~\ref{fig.probe_geom}. 

At any position $\bfu$ in the local coordinate frame of the probe, a position $\bfx$ in volume coordinates can be found from Equation \eqref{eq:atrans_fwd}. 
To convert to local coordinates from volume coordinates, the inverse transform of Equation \eqref{eq:atrans_inv} is used.

There is a special situation for a linear-array.
When $\bfu=(u_1, 0,0)$, which means it is constrained to the $\hat\bfn_1$ axis, then $T_{A,\bfb}\left([u_1, 0,0]^T\right)$ reduces to Equation \eqref{eq:line_eq}, which is the equation of a line in $\sfF_0$. Also, applying $T_{A,\bfb}$ to the plane $\bfu=(u_1, 0,u_3)$ will produce a plane in $\sfF_0$ corresponding to the \emph{imaging plane} cross section of the volume. 
The imaging plane is what would be seen on the display of a 2D imaging device when the probe is in position over an imaging volume. 
 When $\bfu=(0, u_2,u_3)$, this corresponds to a plane perpendicular to the imaging plane in $\sfF_0$.

\hfill 

The affine transform function $T_{A,\bfb}$, acts on a vector and generates a vector. 
This is different from the affine transform operator $\scrT_{A,\bfb}$, which acts on a function and produces a rotated and translated version of that function. In certain proofs we rely on one or the other, so we are careful to point out the distinction. 

The affine transform operator (and its inverse), 
are defined by
\begin{subequations}
\label{eq:fwd_inv_affine}
\begin{align}
\scrT_{A,\bfb}\{\psi\}(\bfx)&=
\big[\psi \circ T_{A,\bfb}^{-1}\big](\bfx)
%\psi\!\left( T_{A,\bfb}^{-1} (\bfx)\right)
=\psi(A^T(\bfx-\bfb)),
\\
\scrT_{A,\bfb}^{-1}\{\psi\}(\bfu)&=
%\psi\!\left( T_{A,\bfb} (\bfu)\right)
\big[\psi \circ T_{A,\bfb}\big](\bfu)
=\psi(A\bfu+\bfb).
\end{align} 
\end{subequations}

If a function $\psi(\bfx)$, which describes a spatial distribution, is shifted \emph{in one direction}, this corresponds its argument $\bfx$, which is a coordinate vector, being shifted \emph{in the opposite direction}. 
Accordingly, $\scrT$ is defined by composition with the inverse of $T$. 
In this sense, $T$ and $\scrT$ behave oppositely.

\hfill

In our model, the probe orientation is specified with a rotation matrix. 
Arbitrary rotation matrices can be formed by applying a sequence of 
 Euler matrices corresponding to 
\emph{roll}, \emph{pitch} and \emph{yaw} angles. 
To generate an arbitrary rotation matrix $A$, 
we use the convention
\begin{align}
A=R_1(\theta_1)R_2(\theta_2)R_3(\theta_3), 
\end{align}
where the matrices for roll, pitch and yaw are 
\begin{subequations}
\begin{align}
R_1(\theta_1) &=
\begin{bmatrix}
1 & 0 & 0\\
0 & \cos(\theta_1) & -\sin(\theta_1)  \\
0&  \sin(\theta_1) & { }\cos(\theta_1)  
\end{bmatrix},&\text{(roll)}
\\
R_2(\theta_2) &=
\begin{bmatrix}
\cos(\theta_2) & 0 & \sin(\theta_2) \\
0 & 1 & 0 \\
-\sin(\theta_2) &0& \cos(\theta_2)  \\
\end{bmatrix},&\text{(pitch)}
\\
R_3(\theta_3) &=
\begin{bmatrix}
\cos(\theta_3) & -\sin(\theta_3) & 0 \\
\sin(\theta_3) & \cos(\theta_3) & 0 \\
0 & 0 & 1
\end{bmatrix}.&\text{(yaw)}
\label{eq:R3_yaw}
\end{align}
\end{subequations}

\hfill

In subsequent sections, we convert between coordinates 
in equations involving both convolution and composition.  
The following formula for changing variables is useful when working with an impulse response $h(\bfu)$ defined in a local coordinate system, and a spatial distribution $\psi(\bfx)$ defined in a global coordinate system. 
The coordinate $\bfx$ is related to the coordinate $\bfu$ by 
$\bfx=T_{A,\bfb}(\bfu)$.
For convolution of the functions $h(\bfu)$ and $\psi(\bfx)$, a change of coordinates corresponds to the relationship
\begin{align*}
\left[\left(h\circ T_{A,\bfb}^{-1}\right)\circledast\psi\right]\!(\bfx) =
\frac {1}{\abs{\det A}} \Big[h\circledast \left(\psi \circ T_{A,\bfb}\right) \Big](\bfu).
\end{align*}
This can be seen by writing out the convolutional expression $h(\bfu)\circledast\psi(\bfx)$ in expanded form with $\bfu=T^{-1}_{A,\bfb}(\bfx)$. The differential element $\d\bfu$ is scaled by the determinant of $A$ because $\d\bfu=\d{(A^{-1}(\bfx-\bfb))}=\abs{\det A}^{-1}\d\bfx$. 
For a rotation matrix, since $\det A$ always equals $1$, 
the relationship we are interested in simplifies to
\begin{align}
\label{eq:change_var_conv_comp}
\Big[\scrT_{A,\bfb}\{h\}\circledast\psi\Big](\bfx) =
%\frac {1}{\det A} 
\Big[h\circledast  \scrT_{A,\bfb}^{-1}\{\psi\}\Big](\bfu),
\end{align}
where $\bfx=T_{A,\bfb}(\bfu)=A\bfu+\bfb$. 

Also, if no rotation is performed, and the coordinates are in the same frame, this will reduce to the (more well known) translation property of convolution, which is 
\begin{align*}
\Big[\scrT_{I,\bfb}\{h\}\circledast\psi\Big](\bfx) =
%\frac {1}{\det A} 
\Big[h\circledast  \scrT_{I,\bfb}\{\psi\}\Big](\bfx),   
\end{align*} 
where $I$ is the identity matrix, so $\scrT_{I,\bfb}$ is a pure translation. The interpretation is that a translation can go with either 
function in the convolution, when compared in the same coordinate frame. 

\subsection{Opto-acoustic signals }
\label{sec:oasig}

The opto-acoustic signal generation model is now described. 
We develop a generalized \emph{system response operator}, that includes: i) transducer aperture, ii) light distribution,  
iii) probe position, and iv) probe orientation.  
In our notation, 
the operator
 is written $\scrH_{f,\PHI}^{A,\bfb}\{\PSI\}(\bfu,t)$.
The most simple way to think of this is
 as a \enquote{function} that outputs a time-domain signal (in the variable $t$),
for a transducer centered at position $\bfu$ in the probe's coordinate frame. 
Here, $A$ and $\bfb$ specify the probe orientation and position. 
The function $f$ specifies the aperture geometry,  
and  $\PHI$  specifies
optical energy distribution. 
The 3D volume is described by $\PSI$.

The system response operator is developed in three parts.  
First, 
we model a 
system response operator 
$\scrH_f$ in a local coordinate frame, 
 and generalize this to transformed coordinates with $A$ and $\bfb$. 

Next, 
 we describe opto-acoustic wave propagation in more detail and explain the physics for the transducer impulse response.  
This will provide context for our derivations in the following sections. 

Finally, 
we build on this to include light distribution $\PHI$, emitted from the coordinate  frame of a probe.

\hfill

\subsubsection{Spatio-temporal impulse response}
\label{sec:spatiotemporal_impluse}

Heating a medium with an optical pulse at time $t=0$ causes an \emph{initial acoustic source distribution} $\psi(\bfx)$, 
which propagates as an acoustic wave.
In our notation, the \emph{simulated opto-acoustic signal} $s(\bfx,t)$,  measured by an ideal transducer with aperture $f$,
at time $t$, and position $\bfx$, 
  is written
\begin{align}
s(\bfx, t) = 
\scrH_{f}\{\psi\}(\bfx, t). 
\end{align}

The operator $\scrH_f$ 
performs 
convolution of 
  $\psi(\bfx)$ with 
 $h_f(\bfx, t)$. 
 Thus, 
\begin{align}
\label{eq:scrH_bfx_t_def}
%\label{eq:eq:scrH_f_defn}
\scrH_f\{\psi\}(\bfx, t) &= h_f(\bfx, t) \conv{\bfx} \psi(\bfx).
\end{align}
The 
 function $h_f(\bfx, t)$ is the
\emph{pressure impulse response for aperture $f$}.  
It is defined by convolution of  $f(x)$ with 
$h(\bfx, t)$, according to
\begin{align}
\label{eq:h_conv_f}
h_f(\bfx, t) = h(\bfx,t)\conv{\bfx}f(\bfx).
%h_f(\bfx, t) = \big[h(\bfx,t)\alpha(\bfx)\big]\conv{\bfx}f(\bfx).
\end{align}

Here, $h(\bfx,t)$ is 
the 
\emph{free-space pressure impulse response}
given by 
\begin{align}
\label{eq:h_bfx_t_def}
h(\bfx, t) &= 
%\frac{\partial}{\partial t}\frac{\delta\!\left(t-\frac{\norm \bfx}{c_0}\right)}{4\pi\norm\bfx}.
\frac{\partial}{\partial t}
\frac{\delta\!\left(c_0 t-{\norm \bfx}\right)}{4\pi  t}
%\delta\!\left(t-\frac{\norm \bfx}{c}\right)
%\alpha(\bfx)
,
\end{align}
which is described 
in the next subsection. 
 The speed of sound $c_0$ is assumed to be constant, and $\delta$ is the Dirac impulse function. 

When $f$ is an aperture function, it is defined
\begin{align}
f(\bfx)=\begin{cases}
1, &\text{if $-\bfx$ is on transducer's surface},\\
0, &\text{otherwise}.
\end{cases}
\end{align}
A weighted aperture could also be used (where the value on the transducer surface depends on $\bfx$), but this is not 
implemented in this work. 

The aperture function describes the geometry of the transducer elements. 
A rectangular transducer has a length $a_1$ and width $a_2$.  
The \emph{rectangular transducer aperture function},
where $\bfa=(a_1,a_2)$,
is defined as
\begin{align}
f_\bfa(\bfx)=\rect\left(\frac{x_1}{a_1},\frac{x_2}{a_2}\right)\delta(x_3),
\end{align}
where 
\begin{align*}
\rect\left(x_1,x_2\right)
=\begin{cases}
1, & \abs{x_1} \le \frac 1 2 \text{ and } \abs{x_2} \le \frac 1 2,\\
0, &\text{otherwise.}
\end{cases}
\end{align*}
When the transducer is an \emph{ideal point detector}, the aperture is $f_\delta(\bfx)=\delta(\bfx)$.
In this case, the system response operator can be written
$\scrH\{\psi\}(\bfx,t)$ (with no subscripts), since \eqref{eq:h_conv_f} yields $h_{f_\delta}(\bfx, t)=h(\bfx, t) \conv{\bfx} \delta(\bfx) = h(\bfx, t)$. 

\hfill

 To provide 
 transducer responses
with a directional component,
we define
a \emph{modified system response operator} (written using a bar symbol)
\begin{align}
\label{eq:bar_scrH_f_defn}
 \bar\scrH_f\{\psi\}(\bfx, t) &= \bar h_f(\bfx, t) \conv{\bfx} \psi(\bfx),
\end{align}
where
 \begin{align}
 \label{eq:bar_h_f_defn}
%h_f(\bfx, t) = h(\bfx,t)\conv{\bfx}f(\bfx).
\bar h_f(\bfx, t) = \big[
\underbrace{ h(\bfx,t)\alpha(\bfx)}_{\bar h(\bfx, t)}\big]\conv{\bfx}f(\bfx).
\end{align}
The \emph{modified pressure impulse response} $\bar h(\bfx,t)$ is spatially weighted by $\alpha(\bfx)$.
The function $\alpha(\bfx)$, 
which is called the \emph{obliquity factor}
\cite{lasota1982application,
selfridge1980theory,
stepanishen1971transient,
hunt1983ultrasound}, 
describes the directional response to waves arriving from different angles. 
It
is defined by 
\begin{align}
\label{eq:alpha_defn}
\alpha(\bfx)=\frac{\bfx\cdot\hat\bfn}{\norm\bfx},
\end{align}
where $\hat\bfn$ is the outward normal to 
the transducer 
at $\bfx$. 
In our notation, the bar symbol (over the system operator and impulse responses) indicates that
the obliquity factor
$\alpha(\bfx)$
is included. 
When it is present, 
a transducer measures
the unbalanced component of force, normal to the transducer's surface.
Otherwise, 
the 
transducer measures
the excess acoustic pressure. 
The choice depends on the type of transducer that is being modeled. For a soft-baffle transducer, which is constrained to receive axial forces, 
the obliquity factor is present; for a rigid-baffle transducer, which receives omni-directional forces, it is not present. 

In the sections below we will show a proof for both types of transducer, with and without obliquity factor. 
When either case is relevant, this will be made clear from the context.

\hfill

It is convenient to include position and orientation in the notation of the system response operator. 
This is because when dealing with rotation, both the aperture and probe must be rotated simultaneously, 
which follows from \eqref{eq:change_var_conv_comp}.
 (If only $\bfu$ is rotated, then $\scrH_f$ is not applicable because its aperture $f$ is non-rotated). 
Accordingly, the definition of $\scrH_f^{A,\bfb}$ is 
\begin{align} 
\label{eq:scrH_A_b_f_defn}
\scrH_f^{A,\bfb}\{\psi\}(\bfu, t)
&=
\scrH_f \big\{\scrT_{A,\bfb}^{-1}\{\psi  \}\big\}(\bfu, t) \nonumber
\\
&=
\big[\scrH_f \circ \scrT_{A,\bfb}^{-1} \big]\{\psi  \}(\bfu, t) \nonumber
\\&=
\scrH_f  \{\psi \circ T_{A,\bfb} \}(\bfu, t) \nonumber
\\&=
\scrH_f  \{\psi ^{A,\bfb} \}(\bfu, t), %\nonumber
\end{align}
where we have defined the oppositely rotated and translated volume as
\begin{align}
\psi^{A,\bfb}(\bfu)=
\scrT_{A,\bfb}^{-1}\{\psi\}(\bfu)=\psi\!\left( T_{A,\bfb} (\bfu)\right).
\end{align}
  
  \hfill
  
After the simulated 
system response
 has been computed, a purely temporal \emph{electro-mechanical impulse response} $h_\text{em}(t)$ can be applied to it, if desired, 
to model the system bandwidth. Post-processing with temporal convolution is written
\begin{align}
\label{eq:h_em_conv}
h_\text{em}(t)
\conv{t}
\scrH_f^{A,\bfb}\{\psi\}(\bfx,t).
\end{align}
Equation~\eqref{eq:h_em_conv} 
decouples the temporal response from the geometry of the transducer element's aperture. 
The remainder of this paper assumes an
ideal electro-mechanical impulse response, where $h_\text{em}(t)$ is equal to $\delta(t)$, 
which 
 leaves $\scrH_f^{A,\bfb}$ unaltered and does not affect computation. 
This assumption is used because opto-acoustic signals have broadband frequency content, 
which is due to using nearly instantaneous optical pulses, 
as described in the next section.

\hfill

\subsubsection{Opto-acoustic wave propagation}
\label{sec:oa_wave_prop}

The opto-acoustic physics relevant to our model is now described.  
First, the pressure impulse response function $h(\bfx, t)$ 
in  \eqref{eq:h_bfx_t_def} is 
 derived.
   This corresponds to solving the acoustic wave equation with an instantaneous  heating source term (see Chapter 7 of Morse and Ingard (1968)\cite{morse1968theoretical}; Oraevsky (2014)\cite{oraevsky2014optoacoustic}; and Xu et al. (2006)\cite{xu2006photoacoustic}).
Next, a derivation is provided for  \eqref{eq:bar_scrH_f_defn} that incorporates transducer directionality 
 into the model. 
Some of the details of this section will be used in demonstrating the separable cascade described in Section~\ref{sec:separable}.

In opto-acoustics, a pulse of light rapidly heats optical absorbers in a volume. 
Thermal expansion  due to the heating pulse creates a 
spatial distribution of omni-directional sources 
that radiate acoustic waves\cite{morse1968theoretical}. 
The acoustic waves propagate 
according to the acoustic wave equation for pressure $p(\bfx,t)$, 
given by 
\begin{align}
\label{eq:oa_wave_fcn}
\left(\frac{\partial^2  }{\partial t^2} %\phi(\bfx, t) 
- c_0^2 \nabla^2 \right)\!p(\bfx, t) 
&=
\psi(\bfx)
\frac{\partial}{\partial t}q(t).
%\\
%&=
%\psi(\bfx)
%\frac{\partial}{\partial t}\delta(t)
\end{align}
Here, the source term 
$\psi(\bfx)
\frac{\partial}{\partial t}q(t)
$
is present only for the short duration of the pulse. 
We have arranged the source term 
as the product of a temporal factor $(\partial/\partial t) q(t)$ and a spatial factor $\psi(\bfx)$, which simplifies the derivation below. 
The function $q(t)$ is the temporal pulse shape, i.e., the time-domain waveform of the laser.
It is assumed that $q(t)$ is sufficiently 
short
so  thermal and stress conferment occurs (i.e.,
thermal conduction and wave propagation in the medium are negligible for the duration of the pulse.) 
This maximizes energy converted to acoustic waves. 
 We also assume $q(t)>0$ for all $t$, and $\int_{-\infty}^{+\infty} \!q(t) \,\d{t}=1$, so $q(t)$ does not impact the total amount of energy delivered, just its shape. 
For an instantaneous pulse, $q(t)$ is equal to $\delta(t)$, 
which is an assumption used in most of this paper for simplification.

The spatial distribution $\psi(\bfx)$ is an instantaneous acoustic source term known as the \emph{initial excess pressure},
which results from optical heating as
 described in the next section (see equation~\eqref{eq:psi_Psi_Phi}).

At an arbitrary position $\bfx$ and time $t$, 
the wave equation from \eqref{eq:oa_wave_fcn} has a solution 
for pressure $p(\bfx, t)$ of 
\begin{align}
\label{eq:pressure_soln_int}
p(\bfx, t) = g(\bfx, t) 
\conv{\bfx,t} 
\psi(\bfx)\frac{\partial}{\partial t}q(t),
\end{align}
where 
\begin{align}
\label{eq:g_defn}
%g(\bfx, t) = \frac{1}{4\pi\norm\bfx}\delta\!\left(t- \frac{\norm \bfx}{c} \right). 
g(\bfx, t) = \frac{\delta\!\left(c_0 t- \norm\bfx \right)}{4\pi t} . 
\end{align}
The function $g(\bfx,t)$
is known as the causal 
Green's function solution to the three-dimensional d'Alembert wave equation in free-space (a.k.a. the \emph{delayed free-space Green's function})\footnote{
The Green's function $g(\bfx, t)$ can appear in several 
equivalent forms.  
This is because: i) the impulse occurs with $\norm\bfx= c t$; 
and ii) the scaling property of the Dirac impulse function obeys the identity $\delta(t-x/c)\equiv c\delta(ct-x)$. Therefore,
$
g(\bfx, t) 
\equiv \frac{1}{4\pi\norm\bfx}\delta\!\left(t- \frac{\norm \bfx}{c} \right)
\equiv \frac{1}{4\pi c t}\delta\!\left(t- \frac{\norm \bfx}{c} \right)
\equiv \frac{1}{4\pi t}\delta\!\left(c t- {\norm \bfx} \right)
\equiv \frac{c}{4\pi\norm\bfx}\delta\!\left( c t- {\norm \bfx} \right)
$. 
In addition, iii) some authors divide the wave equation (Equation \eqref{eq:oa_wave_fcn})  on both sides by $c^2$, 
in which case $g(\bfx,t)$ would appear with an additional factor of $c^{-2}$ to conserve units.
We are concerned with the pressure impulse response $h(\bfx,t)$, which is proportional to the time-derivative of $g(\bfx,t)$.  
In cases where this time-derivative is evaluated analytically, 
it is convenient to use a form of $g(\bfx, t)$ where $t$ is in its denominator  (instead of $\norm\bfx$), 
to make it apparent that 
the dependency of $\norm\bfx$ on $t$
 requires using the product rule.
However, for evaluating a spatial gradient of $g(\bfx, t)$, 
this is most apparent
in forms with $\norm\bfx$  in the denominator. 
},
which permits Equation \eqref{eq:pressure_soln_int} to be written in convolutional form. 

Since the source term of \eqref{eq:pressure_soln_int} is multiplicatively separable in $\bfx$ and $t$, 
it
can also be written
\begin{align*}
p(\bfx,t)=
			  g(\bfx, t) 
			  \conv{t}
			   \frac{\partial q(t)}{\partial t}			  
			  %dot{q}(t)
			   \conv{\bfx} \psi(\bfx).
\end{align*}
Due to the commutative-derivative property of convolution (the derivative can be performed on either convolutional factor), the $\frac{\partial}{\partial t}$ can be moved from $q$ to $g$. 
Therefore, $p(\bfx, t)$ is equal to
\begin{align*}
%			  g(\bfx, t) 
			  \frac{\partial g(\bfx,t)}{\partial t}			  
			  \conv{t}
			  % \frac{\partial q(t)}{\partial t}			  
			  {q}(t)
			   \conv{\bfx} \psi(\bfx).
\end{align*}
By defining the free-space pressure impulse response as 
\begin{align}
\label{eq:h_defn}
h(\bfx, t) = \frac{\partial g(\bfx,t)}{\partial t}, 
\end{align}
this becomes
\begin{align}
p(\bfx, t) =
			  %\frac{\partial g(\bfx,t)}{\partial t}			  
				h(\bfx, t)			  
			  \conv{t}
			  % \frac{\partial q(t)}{\partial t}			  
			  {q}(t)
			   \conv{\bfx} \psi(\bfx), 
\end{align}
which demonstrates that the pressure resulting from an arbitrary pulse shape 
is a temporal convolution with 
$q(t)$.
For an instantaneous pulse, 
when $q(t)=\delta(t)$, this simplifies to 
\begin{align}
\label{eq:p_eq_h_conv_psi}
p(\bfx, t) =	 
h(\bfx, t)			  
  \conv{\bfx} \psi(\bfx).
\end{align}

When a transducer centered at position $\bfx$ measures pressure by integrating over the aperture $f(\bfx)$,  
the resulting signal is
\begin{align*}
h(\bfx, t)	
 \conv{\bfx} f(\bfx)		  
  \conv{\bfx}  \psi(\bfx), 
\end{align*}
which matches the form of Equation~\eqref{eq:scrH_bfx_t_def} that defines $\scrH_f$. 
To include an arbitrary pulse shape $q(t)$ in the resulting time-domain signal, a temporal convolution with $q(t)$ can be performed after the measured signal has been computed (similar to Equation \eqref{eq:h_em_conv}). 
Since this can be applied in post-processing, and grouped with $h_\text{em}$ of Equation~\eqref{eq:h_em_conv},  we focus only on the case where an instantaneous pulse shape is used. 
Therefore, $q(t)$ is not part of our core model, although the effect can be incorporated if necessary. 
Rapid or instantaneous pulses generate opto-acoustic signals with broadband frequency content. 
Increasing the pulse width causes a low-pass filtering effect in accordance with the frequency response of $q(t)$.

\hfill

The physics for $\bar \scrH_f$, when the obliquity factor $\alpha(\bfx)$ is present,  is now described. 
If $\bfx$ represents position on the transducer surface,
the net component of force density
resulting from the pressure gradient,
acting on the transducer in its inward normal direction $\hat\bfn$,  is
$$
-\nabla p(\bfx,t)\cdot\hat\bfn.
$$
The total force acting on the transducer surface $S$ is thus
\begin{align*}
\int_S 
\nabla p(\bfx,t)\cdot\hat\bfn
\,\d S. 
\end{align*}
This can be written as a spatial convolution with an aperture function $f(\bfx)$. 
Furthermore, by using a temporal convolution, 
it can be applied to  
 an electro-mechanical impulse response 
for pressure $h_\text{em}^\text{p}(t)$,
which describes how the transducer responds to a pressure input. 
This yields the transducer output signal as
\begin{subequations}
\begin{align}
\label{eq:output_sig_p}
 h_\text{em}^\text{p}(t) \conv{t}
\left(
\nabla p(\bfx,t)\cdot\hat\bfn 
\right)
\conv{\bfx} f(\bfx).
\end{align}
Equation~\eqref{eq:output_sig_p} can also be written in terms of the \emph{velocity potential} $\phi(\bfx, t)$, which is related to pressure by $p(\bfx, t)=\varrho_0 \frac{\partial }{\partial t} \phi(\bfx, t)$, 
where  $\varrho_0$ is mass-density. 
In this case, the  electromechanical impulse response for velocity $h_\text{em}^\text{v}(t)$ can be used instead of  $h_\text{em}^\text{p}(t)$. These are related by 
$h_\text{em}^\text{p}(t)=\varrho_0\frac{\partial}{\partial t}h_\text{em}^\text{v}(t)$   
(i.e., the step response for measuring pressure is the impulse response for measuring velocity.) 
Thus, equation~\eqref{eq:output_sig_p} can be written equivalently as 
\begin{align}
\label{eq:output_sig_vpot}
%\frac{\partial h_\text{em}^\text{v}(t)}{\partial t} 
h_\text{em}^\text{v}(t)
\conv{t}
\left(
 \frac{1}{\varrho_0}
\nabla \phi(\bfx,t)\cdot\hat\bfn 
\right)
\conv{\bfx} f(\bfx).
\end{align}
\end{subequations}
By taking the gradient of Equation~\eqref{eq:p_eq_h_conv_psi}, the pressure gradient is equal 
to%
\footnote{
The gradient can be grouped with either convolutional factor, due to the commutative-derivative property of convolution
$\nabla[f(x)\circledast g(x)]=\nabla f(x)\circledast g(x) = 
 f(x)\circledast \nabla g(x) $}
$$
\nabla p(\bfx,t) = 
\nabla h(\bfx, t) \conv{\bfx} \psi(\bfx)
=
 h(\bfx, t) \conv{\bfx} \nabla\psi(\bfx). 
$$ 
Using this with the definition of velocity potential, and from Equation \eqref{eq:h_defn},
we can  write the gradient of the velocity potential in terms of the free-space Green's function $g(\bfx, t)$. Thus 
\begin{align}
\label{eq:grad_g_div_rho}
 \frac{1}{\varrho_0}
\nabla \phi(\bfx,t)
= \nabla g(\bfx, t) \conv{\bfx} \psi(\bfx).
\end{align}
From Equation~\eqref{eq:g_defn}~and~\eqref{eq:h_defn},
and since the impulse occurs at $\norm\bfx=c_0 t$, 
the spatial gradient of $g(\bfx, t)$ is
\begin{align*}
\nabla g(\bfx, t) %&=  \frac{\partial}{\partial t} \nabla g(\bfx, t) \\
&= %\frac{\partial}{\partial t} 
%\left( 
\sum_i \frac{\partial  g(\bfx, t)}{\partial x_i} \hat\bfe_i
%\right)
\\&=
%\frac{\partial}{\partial t} 
%\left( 
%\sum_i \frac{\partial \norm \bfx}{\partial x_i} \frac{\partial g(\bfx, t)}{\partial \norm\bfx}
\sum_i  \frac{\partial g(\bfx, t)}{\partial t}\frac{\partial t}{\partial x_i} \hat\bfe_i
%\sum_i  \frac{\partial g(\bfx, t)}{\partial \norm \bfx}\frac{\partial \norm \bfx}{\partial x_i} \hat\bfe_i
%\right)
\\&=
\frac{1}{c_0} \frac{\partial g(\bfx, t)}{\partial t} \sum_i\frac{\partial \norm \bfx}{\partial x_i} \hat\bfe_i
\\
%&=
%\frac{1}{c_0}\frac{\partial g(\bfx, t)}{\partial t}  \frac{\bfx}{\norm \bfx}
%\\
&= \frac{1}{c_0} h(\bfx, t) \frac{\bfx}{\norm \bfx}. % + \ldots %\mathcal O ({\norm\bfx}^{-3})
\end{align*}
Therefore
\begin{align*}
\nabla g(\bfx, t) \cdot \hat\bfn &= \frac{h(\bfx, t)}{c_0} \frac{\bfx\cdot\hat\bfn}{\norm \bfx}
= \frac{ h(\bfx, t)}{c_0} \alpha(\bfx). 
\end{align*}
By applying the dot product of $\hat\bfn$ to both sides of \eqref{eq:grad_g_div_rho}, this gives
\begin{align*}
\frac{1}{\varrho_0}\nabla\phi(\bfx,t)\cdot\hat\bfn 
&=  \frac{ \bar h(\bfx, t)}{c_0}   \conv{\bfx} \psi(\bfx).
\end{align*}
By substitution into equation~\eqref{eq:output_sig_vpot}, 
and by ignoring the 
electromechanical impulse response of the transducer (and its gain),  
the overall transducer output signal is proportional to
\begin{align*}
\overbrace{
\big[
\underbrace{
h(\bfx, t) \alpha(\bfx)}_{\bar h(\bfx, t)}
\big]
\conv{\bfx}f(\bfx)
}^{\bar h_f(\bfx,t)}
\conv{\bfx}\,\psi(\bfx).
\end{align*}
This matches the form of Equation~\eqref{eq:bar_scrH_f_defn}, where $\bar\scrH_f$ is defined. 

By omitting the obliquity factor $\alpha(\bfx)$, this reduces to the form of equation~\eqref{eq:scrH_bfx_t_def}, where the transducer is modeled by integrating pressure over an aperture.

\hfill

\subsubsection{Optical energy delivery}
\label{sec:optical_energy}

In our model, the optical energy distribution 
$\PHI(\bfu)$
remains stationary relative to the local coordinate frame $\Fp$ of the probe. 
The optical energy that gets delivered in
the global coordinate frame $\Fo$ corresponds to rotating and translating  $\PHI(\bfu)$
according to the position and orientation of the probe. 
As shown in Figure~\ref{fig.probe_geom}, the probe is positioned at $\bfb$ and rotated by 
$
A
%=\begin{bmatrix}
%\hat\bfn_1&\hat\bfn_2&\hat\bfn_3
%\end{bmatrix}
$. 
The relationship between local coordinate $\bfu$ and global coordinate $\bfx$ is
$$\bfu =  T_{A,\bfb}^{-1}(\bfx) = A^T(\bfx-\bfb). $$ 
 Therefore, the light distribution $\PHI(\bfu)$ in frame $\Fo$ corresponds to 
\begin{align}
\PHI_{A,\bfb}(\bfx) 
&= [\PHI(\bfu)]_{\bfu=T_{A,\bfb}^{-1}(\bfx)} \nonumber
\\
&= \PHI\big(T_{A,\bfb}^{-1}(\bfx)\big)
%\\
%&=\PHI\big(A^T(\bfx-\bfb)\big)
%\\
= \scrT_{A,\bfb}\{\PHI\}(\bfx)
.
\end{align}
Here, the rotated and translated optical energy distribution that is delivered to the global frame is called 
$\PHI_{A,\bfb}(\bfx)$.
Notice that applying the affine transformation operator $\scrT_{A,\bfb}$ to $\PHI$, which  
forwardly 
 rotates the distribution by $A$ and translates it by $\bfb$, 
 is equivalent to 
composition of 
 the inverse transformation function $T_{A,\bfb}^{-1}$ with its input. 
  
  \hfill

In opto-acoustics, the  initial excess pressure distribution $\psi(\bfx)$ 
is the product of the opto-acoustic conversion efficiency $\PSI(\bfx)$ times the  optical energy distribution  $\PHI(\bfu)$. 
Thus, 
\begin{align}
\label{eq:psi_Psi_Phi}
\psi(\bfx)&=
{\PSI(\bfx)}\PHI_{A,\bfb}(\bfx).
%\\&=
%\underbrace{
%\Gamma_{\!0}\,
%\mu(\bfx)}_{\PSI(\bfx)}\PHI(\bfx).
\end{align}

The opto-acoustic conversion efficiency $\PSI$ describes the amount 
of light 
per unit energy 
that 
is
converted into acoustic pressure 
at each location in the medium. 
It is equal to the optical absorption coefficient $\mu_\text{a}$ 
times the Gr\"uneisen parameter $\Gamma$, both which may vary spatially in the medium.
Therefore, $$\PSI=\Gamma\mu_\text{a}.$$
The coefficient $\mu_\text{a}$ determines how much optical energy is absorbed and converted to heat.  
The pressure increase due to thermal expansion per unit energy gained is determined by $\Gamma$. 
It is equal to
\begin{align*}
\Gamma  = \frac{\beta}{\kappa\varrho_0 c_V} 
=\frac{\beta c_0^2}{c_P},
\end{align*}
which depends on the speed of sound $c_0$, specific heats $c_P$ and $c_V$, mass density $\varrho_0$, compressibility $\kappa$, and thermal expansion $\beta$ coefficients. 

\hfill
 
To facilitate including  light distribution in our model, we extend the definition of the system response operator $\scrH$.
Recall from Equation~\eqref{eq:scrH_A_b_f_defn}, the affine transformed system response for $\psi$ is 
\begin{align*}
\scrH_f^{A,\bfb}\{\psi\}(\bfu,t)&=\scrH_f\big\{\scrT^{-1}_{A,\bfb}\{\psi\}\big\}(\bfu,t)
\\&=
\scrH_f\{\psi^{A,\bfb}\}(\bfu,t), 
\end{align*}
 where  
 $\psi^{A,\bfb}(\bfu)=\scrT_{A,\bfb}^{-1}\{\psi\}(\bfu)$. 
The operator 
applies an inverse affine transform operator to $\psi$, or equivalently stated, it operates directly on $\psi^{A, \bfb}$.
From Equation~\eqref{eq:psi_Psi_Phi},  in frame $\Fo$, the acoustic source distribution $\psi$ is given by the product
\begin{align*}
\psi(\bfx)&=\PSI(\bfx)\,\PHI_{A,\bfb}(\bfx)
\\&=\big[\PSI\odot\PHI_{A,\bfb}\big](\bfx).
\end{align*}
The symbol \enquote{$\odot$} represents  multiplication, which is applied pointwise on the two functions%
\footnote{For discretized functions, the  $\odot$ operation  corresponds to a Hadamard product.
Support for this is directly built in to the syntax of some high-level languages, including Matlab.   
}. 
Thus, $$\psi=\PSI\odot\PHI_{A,\bfb}.$$
Therefore, by applying $\scrT_{A,\bfb}^{-1}$ to both sides,
the acoustic source distribution in the local frame $\Fp$, denoted $\psi^{A,\bfb}$, is
\begin{align*} 
\psi^{A,\bfb}(\bfu)
&=
\scrT_{A,\bfb}^{-1}\left\{
\PSI\odot \scrT_{A,\bfb}\{\PHI\}
\right\}\!(\bfu)
\\&=
%\big[
%\big( 
\scrT_{A,\bfb}^{-1}\big\{
\PSI\odot\PHI_{A,\bfb}
%\big)
%\circ T_{A,\bfb}
\big\}(\bfu)
\\&=
\PSI\big(T_{A,\bfb}(\bfu)\big)\,\PHI_{A,\bfb}\big(T_{A,\bfb}(\bfu)\big)
%\scrT_{A,\bfb}\big\{ 
%\{\big\}
\\&=\PSI^{A,\bfb}(\bfu)\,\PHI(\bfu) 
\\&=\big[\PSI^{A,\bfb}\odot\PHI\big](\bfu), 
\end{align*}
where $\PSI^{A,\bfb}(\bfu)=\scrT_{A,\bfb}^{-1}\{\PSI\}(\bfu)$.
In summary, 
the relationship 
for the acoustic source distribution in either frame
is
\begin{subequations}
\begin{align}
\psi(\bfx)&=\big[\PSI\odot\PHI_{A,\bfb}\big](\bfx),
\\
%\shortintertext{and for the probe coordinate frame $\Fp$}
\psi^{A,\bfb}(\bfu)&=\big[\PSI^{A,\bfb}\odot\PHI\big](\bfu). 
\end{align}
\end{subequations}

Here, because the optical energy distribution can be represented in the local frame $\PHI(\bfu)$, it is independent of the position and orientation of the probe. Therefore, 
to include optical energy distribution in the model, the system response operator 
must perform a pointwise multiplication by $\PHI$, which remains the same even if the probe moves. 

Accordingly, the system response operator that includes the optical energy distribution is defined as
\begin{align}
\label{eq:scrH_ABfPhi}
\scrH_{f, \PHI}^{A,\bfb}\{\PSI\}%(\bfu,t) 
&= 
\scrH_{f}^{A,\bfb}\{\psi\}%(\bfu,t) 
\nonumber
\\&=
\scrH_{f}\{\scrT_{A,\bfb}^{-1}\{\PSI\}\odot\PHI\}
\nonumber
\\&=
\scrH_{f}\{\PSI^{A,\bfb}\odot\PHI\}.
\end{align}
Here, the optical energy distribution $\PHI$ appears in the subscript of $\scrH_{f, \PHI}^{A,\bfb}\{\PSI\}$
to indicate that 
it is multiplied with
the conversion efficiency $\PSI$ after being rotated.
This notation is convenient because when the probe  is translated and rotated, 
the light distribution stays fixed relative to the probe.

\hfill 

Computing the optical energy distribution $\PHI(\bfu)$ is now briefly described. 
In our model, we assume that regardless of the probe orientation and position, 
$\PHI(\bfu)$ will not change. 
This would be applicable in two situations: a) when the bulk optical properties of the tissue medium are assumed to be homogeneous; and, b) the bulk optical properties of the tissue are layered and the probe moves along the surface, so that by translational symmetry, the motion doesn't alter the optical fluence distribution relative to the probe. 
In both cases, it is assumed that $\PHI(\bfu)$ is not affected by local changes in optical absorption $\mu_a(\bfx)$.

The optical energy distribution is governed by the radiation transport equation\cite{farrell1992diffusion,zemp2013phase, liemert2012light,
liemert2012analytical,
hielscher1998comparison, wang1995mcml,wilson1983monte}.
The number of photons that reach positions in the medium depend on 
its local optical properties, which
change with
 optical wavelength and the configuration of the optical source aperture.  
Light distributions can be solved in several ways, including  diffusion approximations \cite{liemert2012analytical, liemert2012light, farrell1992diffusion, zemp2013phase},
finite-element models \cite{hielscher1998comparison}, 
or  Monte Carlo based methods\cite{wang1995mcml, wilson1983monte}.

\begin{figure}
\ifdefined\SHORTTEX
    \centerline{\includegraphics[width=1.0\columnwidth, trim=15px 35 10 40, clip]{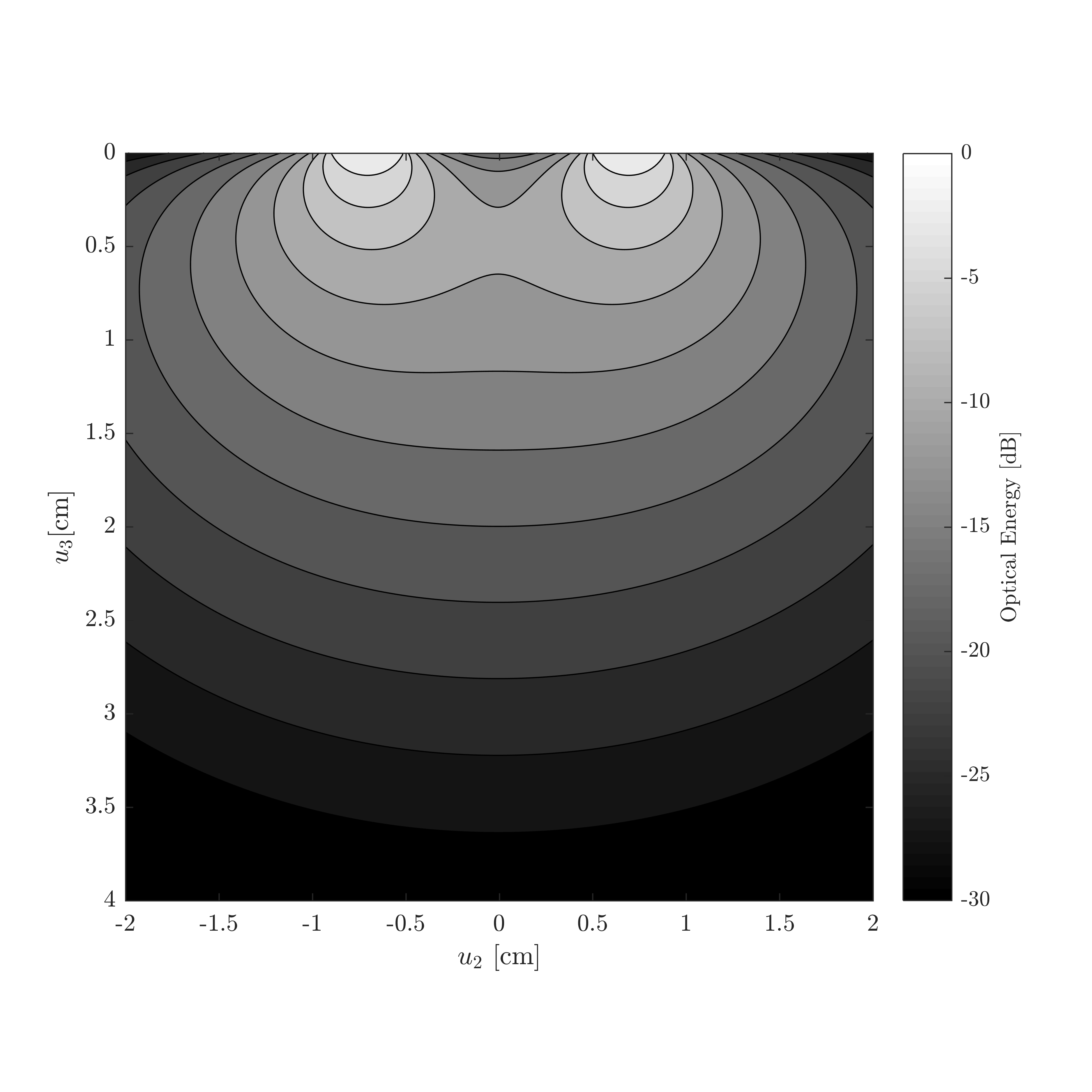}}
\else
    \centerline{\includegraphics[width=1.0\columnwidth, trim=15px 35 10 40, clip]{fig_fluence1.png}}
 \fi
    \centering
    \vspace{-10px}
\caption[Cross section of a typical optical energy distribution]{Cross section of a typical optical energy distribution $\PHI(\bfu)$ for probe in contact with tissue. 
The 2D cross-section represents the $u_2$-$u_3$-plane of the transducer at $u_1=0$.  
The delivered optical energy is highest near each of the optical source apertures located at the tissue surface.
The imaging plane corresponds to a vertical line at $u_2=0$. 
Tissue is modeled with 
effective optical attenuation $\mu_\text{eff}=1.8\text{cm}^{-1}$
for optical wavelength $\lambda=757$nm. 
As the probe moves, $\PHI(\bfu)$ remains fixed relative to the probe coordinate system $\bfu$. 
 }
\label{fig.opticalEnergyDist}
\end{figure}

When delivering light through an optical source aperture, 
the optical energy distribution 
 $\PHI(\bfu)$ can be computed in two steps. 
First,  
an intermediate 
distribution  $\tilde\PHI(\bfu)$ for a pencil-beam (point aperture)  is produced. 
Then, spatial convolution of $\tilde\PHI(\bfu)$ with an optical aperture function yields $\PHI(\bfu)$. 
For a rectangular optical source aperture with dimensions $\bfa=(a_1, a_2)$, centered at the origin,  
the energy distribution is 
\begin{align}
\label{eq:optical_conv}
\PHI(\bfu) = 
\tilde\PHI(\bfu)\conv{u_1, u_2}\rect\left(\frac{u_1}{a_1}, \frac{u_2}{a_2}\right). 
\end{align}

A simple method to compute 
$\tilde\PHI(\bfu)$ 
involves
an empirically derived analytical model for a half-space of optically homogeneous tissue\cite{farrell1992diffusion}. 
The optical energy distribution from a point source
is modeled in terms of a material's \emph{effective optical attenuation} $\mu_\text{eff}$ by 
\begin{align}
\label{eq:optical_energy_point_source}
%%%\nonumber
%%%\tilde\PHI(\bfu) &=
%%% %\frac{1}{2 \pi D}
%%%\frac{\mu_\text{eff}^2}{4 \pi }
%%% \Bigg(
%%%%	\besselk{ \left( \mu_\text{eff} \sqrt{x^2+z^2} \right) }
%%%%	\besselk{ \left( \mu_\text{eff} \norm\bfu \right) }
%%%	\frac{\exp{ \left(- \mu_\text{eff} \norm\bfu \right) }}{\norm\bfu}
%%%	%\right.
%%%\ifdefined\SHORTTEX	
%%%%\\&\qquad	
%%%\eqbreak
%%%%\\ \qquad	
%%%\fi
%%%	%\left.
%%%  %- \besselk{ \left( \mu_\text{eff} \sqrt{x^2+(z+z_b)^2}\right)}
%%%%  - \besselk{ \left( \mu_\text{eff} \sqrt{u_1^2+u_2^2+(u_3+z_b)^2}\right)}
%%%%  - \frac{\exp{ \left( -\mu_\text{eff} \sqrt{u_1^2+u_2^2+(u_3+2z_b)^2}\right)}}{s}
%%%  - \frac{\exp{ \left( -\mu_\text{eff} \sqrt{\norm\bfu^2+4 u_3 z_b + 4 z_b^2}\right)}}
%%%  {\sqrt{\norm\bfu^2+4 u_3 z_b + 4 z_b^2}}
%%%  \Bigg),
\nonumber
\tilde\PHI(\bfu) &=
 %\frac{1}{2 \pi D}
\frac{\mu_\text{eff}^2}{4 \pi }
 \Bigg(
	\frac{\exp{ \left(- \mu_\text{eff} \norm{\bfu - (0,0,z_0) } \right) }}{\norm{\bfu - (0,0,z_0)}}
\ifdefined\SHORTTEX	
\eqbreak
\fi
  - \frac{\exp{ \left( -\mu_\text{eff} \norm{\bfu+ (0,0,z_0+2z_b)}\right)}}
  {\norm{\bfu+ (0,0,z_0+2z_b)}}
  \Bigg),
\end{align}
where
$z_0$ is the  mean-free path length and
$z_b$ is a negative image source distance  
(see Farrell et al. (1992)\cite{farrell1992diffusion}). 
Further enhancements  to include directed photons are described by Zemp (2013)\cite{zemp2013phase}. 

In Figure~\ref{fig.opticalEnergyDist}, the optical energy distribution  
 is shown 
when applied to a probe with two 
rectangular optical source apertures
using spatial convolution of Equation \eqref{eq:optical_conv} with \eqref{eq:optical_energy_point_source}. 
This matches the geometry of Figure~\ref{fig.probe_geom}. 
The delivered optical energy is highest near each of the optical source apertures, which are located at the tissue surface.
As the probe moves, $\PHI(\bfu)$ remains fixed relative to the probe coordinate system $\bfu$. 
Therefore $\PHI(\bfu)$ only needs to be computed once (per optical wavelength), and can then be cached. 
This permits more elaborate methods of computing a fluence distribution to be used in place of 
Equations \eqref{eq:optical_conv} and \eqref{eq:optical_energy_point_source}.

\section{Separable Cascade}
\label{sec:separable}

We demonstrate that 
the system response operator $\scrH^{A,\bfb}_{f,\PHI}$
described in Section~\ref{sec:model} can be computed by a separable cascade. 
Using a separable operator permits an order-of-magnitude reduction in processing time
compared to a non-separable operator.

First, in Section~\ref{sec:separable_A}, we show that the 
system response operator
$\scrH^{A,\bfb}$ 
can be separated
into operator factors, in an arbitrary coordinate frame. 

In Section~\ref{sec:separable_B}, we show that when a multiplicatively separable transducer aperture function $f(\bfx)=f_1(x_1)f_2(x_2)\delta(x_3)$ is used, then the multiplicative factors $f_1$ and $f_2$ can be grouped with the factors of the  system response. 

In Section~\ref{sec:separable_C}, we show that when the obliquity function is used to include directionality, 
the modified operator $\bar\scrH^{A,\bfb}_{f,\PHI}$ is still separable. 

In Section~\ref{sec:separable_D}, we demonstrate that when a rectangular transducer aperture is used, 
a far-field approximation can be used to make the computation even more efficient. 

\hfill

To simplify the expressions in this section, 
we set speed of sound $c_0=1$ and mass-density $\varrho_0=1$.

\begin{figure*}[t]
    \centering
    \subfloat[$  \scrG_{
}^{A,\bfb}\{\psi\}%(\upsilon_1, 0, \tau) 
\rightarrow \sigma(\upsilon_1, 0, \tau) $]{%
    \includegraphics[width=0.5\textwidth, trim=15px 30px 30px 40px, clip]{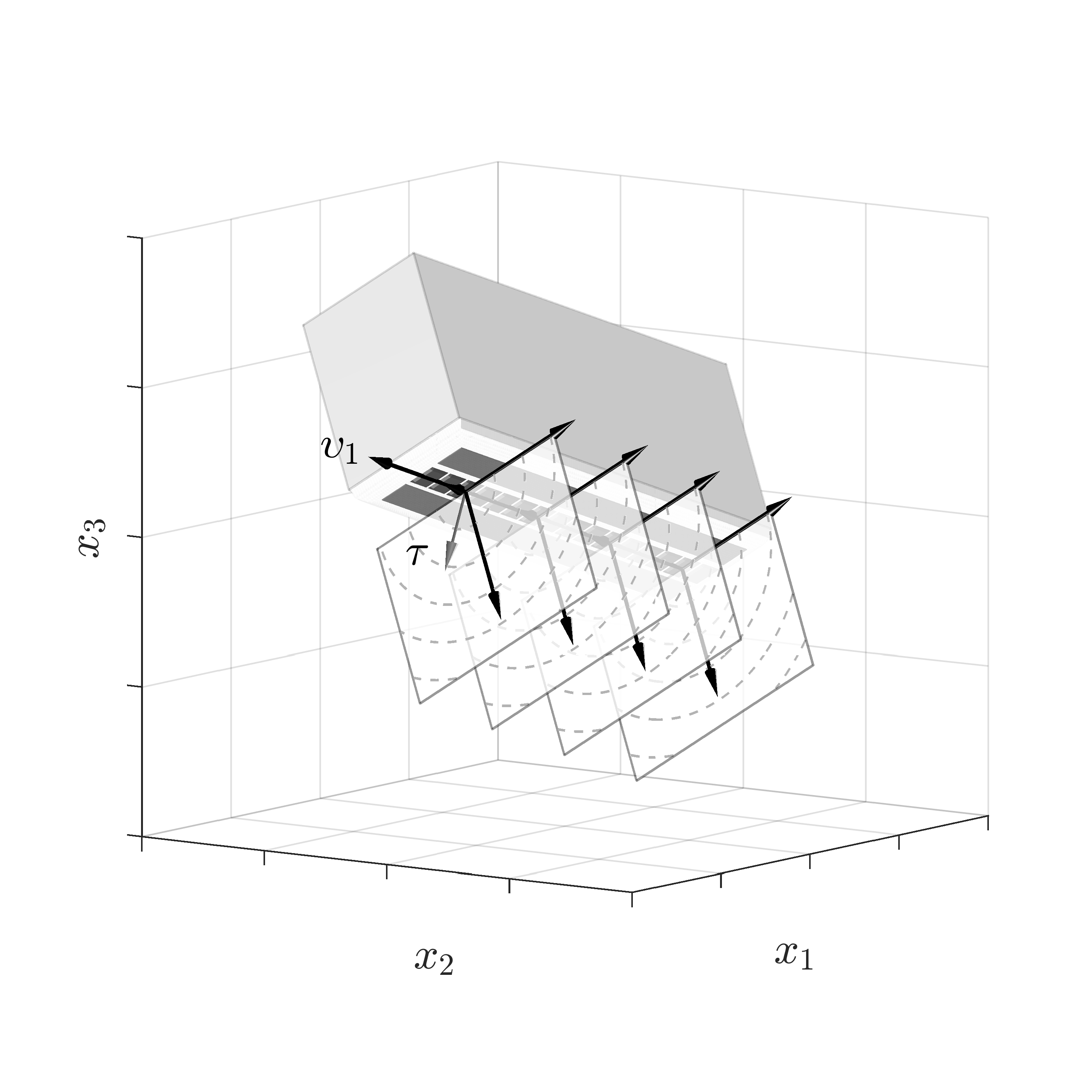}%
}%
\subfloat[ $%\scrG^{R,\bfzero}
   \tilde \scrG
    \{\sigma\}%(u_1, 0, t)
    \rightarrow %\tilde
     s(u_1, 0, t)
    %f_1
   % }^{\left[R_3^{90\text\textdegree}\!\!,\,\bf0\right]}
    $
]{%
    \includegraphics[width=0.5\textwidth, trim=15px 30px 30px 40px, clip]{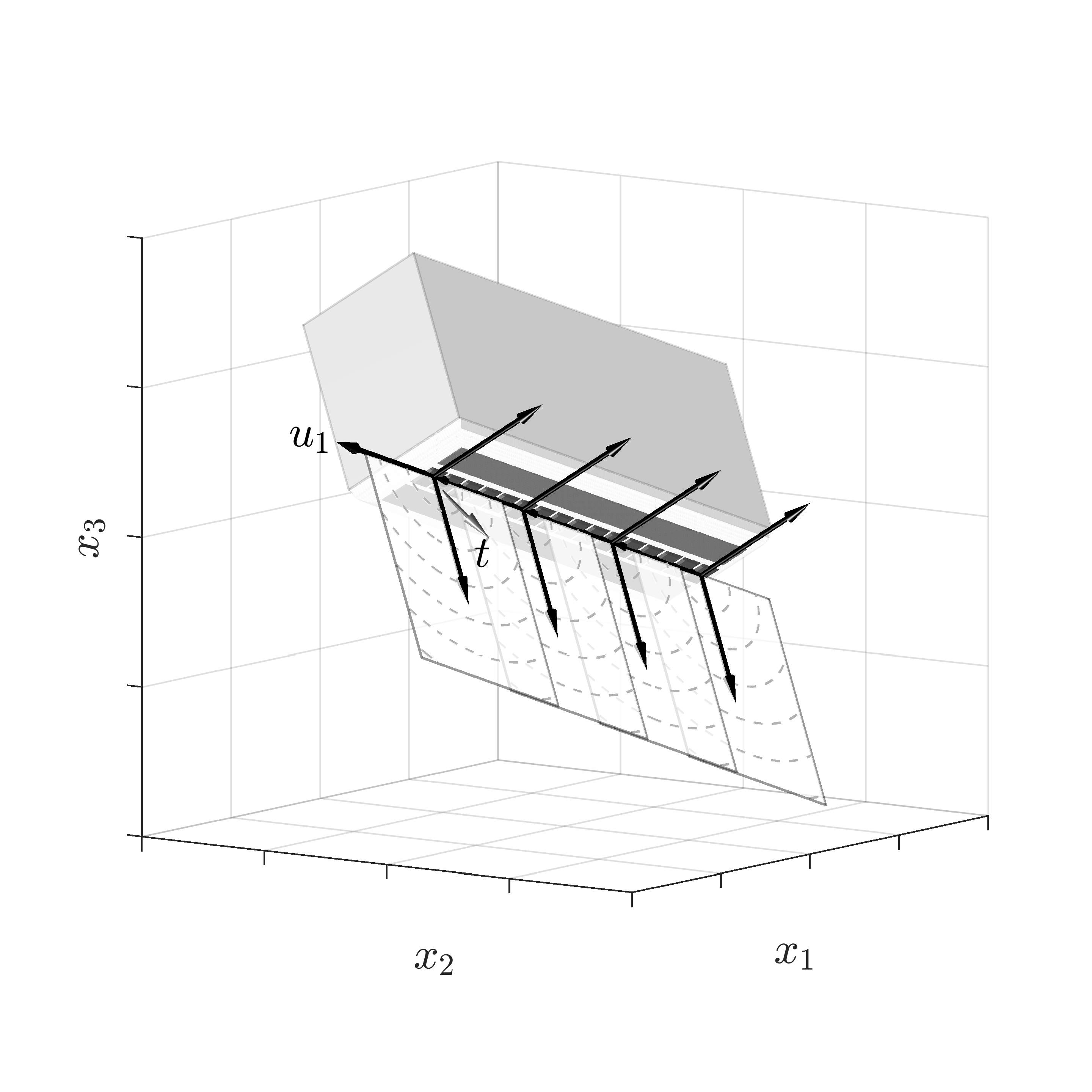}%	    
    }
\caption[Cascaded separable operators for linear array]{%
Cascaded separable operators for linear array applied to a spatial source distribution $\psi(\bfx)$. 
(a) At each displacement $\upsilon_1$ along the line of transducer elements $\bfx=A[\upsilon_1,0,0]^T+\bfb$, the operator 
$\mathcal G_{
}^{A,\bfb}\{\psi\}
$ %([\upsilon_1,0,\tau]^T)$ 
produces $\sigma(\upsilon_1,0,\tau)$, an intermediate result, by
 performing integration along 
an arc of radius $\tau$, in a plane normal to the line of transducer elements. 
The integration paths span the three-dimensional space $\psi$ (in the planes as shown) by varying $\upsilon_1$ and $\tau$.     
(b) At each displacement $u_1$ in the local coordinate system, 
the operator $\mathcal G_{
}^{\left[R_3^{90\text\textdegree}\!\!,\,\mathbf{0}\right]}\{\sigma\}$ %([u_1,0,t]^T)$ 
produces $s(u_1, 0, t)$ by performing integration along 
an arc of radius $t$ in a plane rotated by 90\textdegree. The integration paths span a two-dimensional subspace of $\sigma$ (shown as the imaging plane), coinciding with the domain $(\upsilon_1,0,\tau)$ of the previous operation.%
}
%\vspace{-20px}
\label{fig.probe_geom_pll}
\end{figure*}

%\afterpage{\clearpage}

\subsection{Separability of the system response for acoustic pressure }
\label{sec:separable_A}

In the local probe frame $\Fp$, 
the system response for acoustic pressure, from Equation~\eqref{eq:scrH_A_b_f_defn}, is 
\begin{align*}
p(\bfu, t) = \scrH\{\psi^{A,\bfb}\}(\bfu,t) = h(\bfu,t)\conv{\bfu}\psi^{A,\bfb}(\bfu), 
\end{align*}
where $\psi^{A,\bfb}=\scrT_{A,\bfb}^{-1}\{\psi\}$ is the (oppositely) rotated and translated acoustic source distribution for $\psi$. 

To show that 
this
is separable in an arbitrary coordinate frame, 
in the following proposition
we split
$\scrH^{A,\bfb}$
into the factors $\scrG^{A,\bfb}$ and $\tilde\scrG$. 
 The first factored operator $\scrG^{A,\bfb}$ acts on two-dimensional slices, to span the entire three-dimensional volume 
(see Figure~\ref{fig.probe_geom_pll}a).
The second factored operator $\tilde\scrG$ also acts in two-dimensional slices, 
but these are rotated, so they only span a two-dimensional subspace of the previous output, because all of the transducer elements 
in the array are co-linear (see Figure~\ref{fig.probe_geom_pll}b).

To keep the dimensions equal for each operator factor, and show that both factors involve rotated versions of the same operator, 
we set $u_3=0$ in the definition below. 
Thus, $\scrH^{A,\bfb}$
outputs a function of three arguments $u_1$, $u_2$ and $t$.
However, in a linear array, only $u_1$ and $t$ are relevant, since transducer positions 
are on the line $u_3=u_2=0$.

\begin{proposition}[Separability of acoustic pressure impulse response in transformed coordinates]
\label{prop:s1}
Let the operator $\scrH^{A,\bfb}$ be defined by
\begin{align*}
\scrH^{A,\bfb}\{\psi\}(u_1, u_2, t) &= 
\left[
h(\bfu, t) \conv{\bfu} \psi^{A,\bfb}(\bfu)
\right]_{u_3=0},
\end{align*}
where 
$
h(\bfu, t) =
%\frac{1}{4\pi\norm\bfu}\frac{\partial}{\partial t}
\frac{1}{4\pi}\frac{\partial}{\partial t}
\frac{\delta\!\left(t-\norm\bfu\right)}{t}
%\frac{1}{t}
%\delta\!\left(t-\norm\bfu\right)
$
and
$\psi^{A,\bfb}=\scrT_{A,\bfb}^{-1}\{\psi\}$. 
Then it can be computed using a separable cascade according to
\begin{align*}
\scrH^{A,\bfb} &= \tilde\scrG \circ \scrG^{A,\bfb},
\end{align*}
where
\begin{align*}
\scrG^{A,\bfb}\{\psi\}(u_1, u_2, t) &= 
 t \! \int_{0}^{2\pi} \! \psi^{A,\bfb} \left( 
\begin{bmatrix}
u_1\\u_2-t \cos \theta \\ -t \sin \theta
\end{bmatrix}
 \right)
 \d \theta,  	
\end{align*}
and
\begin{align*}
\tilde\scrG\{\psi\}(u_1, u_2, t) = 
\frac{1}{4\pi }\frac{\partial}{\partial t}
\left(
\frac{1}{t}
\scrG^{R_3(90^\circ),\bfzero}
\{
\psi
\}
(u_2, -u_1, t)
\right).
\end{align*}
\end{proposition}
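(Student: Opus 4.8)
The plan is to first collapse the three–dimensional convolution onto a spherical shell, then foliate that shell into the circular slices computed by $\scrG^{A,\bfb}$, and finally recognize the integral that stacks those slices as a second, $90^\circ$–rotated circular transform. To begin, I would use $h=\partial_t g$ with $g(\bfu,t)=\delta(t-\norm\bfu)/(4\pi t)$ (as in the Green's-function discussion) so that the convolution $g\conv{\bfu}\psi^{A,\bfb}$ sifts the delta onto the sphere $\norm{\bfu'}=t$. Evaluating at $\bfu=(u_1,u_2,0)$ this gives
\[
\scrH^{A,\bfb}\{\psi\}(u_1,u_2,t)=\frac{1}{4\pi}\frac{\partial}{\partial t}\frac1t\int_{\norm{\bfu'}=t}\psi^{A,\bfb}(\bfu-\bfu')\,\d S(\bfu'),
\]
so the entire claim reduces to showing the shell integral factors through $\sigma=\scrG^{A,\bfb}\{\psi\}$ followed by one more circular operation, with the prefactor $\tfrac{1}{4\pi}\partial_t\tfrac1t$ reproducing exactly the $\tilde\scrG$ defined in the statement.

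Next I would foliate the sphere of radius $t$ by planes normal to the array axis (the first coordinate), slicing at height $s=\upsilon_1'\in[-t,t]$. Each slice is a circle of radius $\rho=\sqrt{t^2-s^2}$ in the $(\upsilon_2',\upsilon_3')$–plane, and the surface measure factors as $\d S = t\,\d s\,\d\theta$ (the Jacobian magnitude is $\sqrt{s^2+\rho^2}=t$ by the Pythagorean relation). The point of setting $u_3=0$ in the definition is precisely that the argument $\bfu-\bfu'=(u_1-s,\,u_2-\rho\cos\theta,\,-\rho\sin\theta)$ then matches the integrand of $\scrG^{A,\bfb}$ exactly, so each circular slice integral equals $\tfrac1\rho\,\sigma(u_1-s,u_2,\rho)$. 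Hence the shell integral becomes $t\int_{-t}^{t}\tfrac1\rho\,\sigma(u_1-s,u_2,\rho)\,\d s$.

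The key manoeuvre is then the substitution $s=t\cos\phi$, $\rho=t\sin\phi$ with $\phi\in[0,\pi]$: the factor $\d s=-t\sin\phi\,\d\phi$ cancels the $1/\rho$, turning the height-stacking integral into a circular-arc integral $t\int_0^\pi\sigma(u_1-t\cos\phi,\,u_2,\,t\sin\phi)\,\d\phi$ in the $(\upsilon_1,\tau)$–plane, where $\tau$ is the radial/time slot produced by the first stage. I would then identify this arc integral with $\scrG^{R_3(90^\circ),\bfzero}\{\sigma\}$ evaluated at $(u_2,-u_1,t)$: the rotation $R_3(90^\circ)$ interchanges the first and second coordinates of $\sigma$ (with a sign) so that the base circular operator now sweeps the $(\upsilon_1,\tau)$–plane at fixed $\upsilon_2$, while the coordinate permutation $(u_2,-u_1)$ undoes the relabeling and keeps the array coordinate $u_1$ intact. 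Reinstating $\tfrac{1}{4\pi}\partial_t\tfrac1t$ reproduces $\tilde\scrG$, giving $\scrH^{A,\bfb}=\tilde\scrG\circ\scrG^{A,\bfb}$.

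The main obstacle I anticipate is reconciling the half-range $\int_0^\pi$ that emerges naturally from the sphere foliation with the full-range $\int_0^{2\pi}$ written in the definition of $\scrG$. Over the second half of the circle the radial/time slot $\tau=-t\sin\phi$ runs negative, and one must invoke that $\sigma$ is a causal time-domain intermediate supported on $\tau\ge 0$, so that the full-circle integral collapses onto the physical semicircle and the substitution $\phi\mapsto\phi+\pi$ maps the $[\pi,2\pi]$ portion cleanly onto the desired $[0,\pi]$ integrand. Careful sign bookkeeping for the $R_3(90^\circ)$ rotation together with the $(u_2,-u_1)$ evaluation, and justifying the interchange of $\partial_t$ with the shell integral, are the remaining technical points; the geometric content, however, is entirely carried by the $\d S=t\,\d s\,\d\theta$ factorization and the $s=t\cos\phi,\ \rho=t\sin\phi$ change of variables.
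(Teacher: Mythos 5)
Your proof is correct, but it reaches the factorization by a genuinely different route than the paper. The paper never forms the spherical-shell (Poisson/Kirchhoff) representation you start from; instead it splits the Green's-function delta algebraically, writing $\delta(t-\norm\bfx)=\int\delta(t-r)\,\delta(\tau-\rho)\,\d\tau$ with $r=\sqrt{x_1^2+\tau^2}$ and $\rho=\sqrt{x_2^2+x_3^2}$, and then performs two successive polar-coordinate sifting steps inside the convolution integral — first in $(x_2',x_3')$ to produce $\sigma=\scrG^{A,\bfb}\{\psi\}$, then in $(x_1',\tau)$ to produce the outer arc integral. Your argument replaces that delta bookkeeping with geometry: the convolution collapses onto the sphere $\norm{\bfu'}=t$, the hat-box factorization $\d S=t\,\d s\,\d\theta$ (Jacobian $\sqrt{s^2+\rho^2}=t$, which you verified) foliates the shell into exactly the circles computed by $\scrG^{A,\bfb}$, and the substitution $s=t\cos\phi$, $\rho=t\sin\phi$ produces the second arc; your identification of that arc with $\scrG^{R_3(90^\circ),\bfzero}\{\sigma\}(u_2,-u_1,t)$ matches the paper's. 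What your version buys is transparency: the geometric content is explicit, and, importantly, you confront the half-range-versus-full-range issue that the paper dispatches with the single remark that \enquote{due to symmetry of integration over $\theta$, the sign change has no impact.} That remark is valid only under precisely the convention you state: $\sigma$ must be taken as the causal intermediate supported on $\tau\ge 0$, not as its odd-in-$\tau$ formula extension, under which the full-circle integral would cancel to zero rather than reduce to the physical semicircle; making this explicit is a genuine improvement in rigor. What the paper's delta-splitting buys instead is reusability: the same splitting identity, restated as a cascade of convolutions with kernels $\delta(u_1)\delta(t-\norm\bfu)$, is what carries the aperture-function and obliquity generalizations in Propositions~3.2--3.4 and extends to a wider class of separable Green's-function kernels, which is harder to see from the sphere-specific foliation.
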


\begin{proof}
The proof 
for Proposition~\ref{prop:s1} 
 is demonstrated in \mbox{Appendix~\ref{sec:proof1}}. 
\end{proof}

\subsection{Separability with transducer aperture function}
\label{sec:separable_B}

In this section we show that the 
the operator  $\scrH^{A,\bfb}_f$ retains 
its compositional separability  when a multiplicatively separable aperture function
$f(\bfu)=f_1(u_1)f_2(u_2)\delta(u_3)$
 is used. 
This becomes useful for computing the operator efficiently 
when 
using a rectangular aperture function $f_\bfa(\bfx)=\rect\left(\frac{x_1}{a_1}, \frac{x_2}{a_2}\right)\delta(x_3)$.

Here,  we are interested developing the factors $\scrG_{f_i}$ so that the separable cascade has the form
\begin{align*}
\scrH_f^{A,\bfb}=\tilde\scrG_{f_1 } \circ  \scrG^{A,\bfb}_{f_2}.  
\end{align*}
The subscripts $f_1$ and $f_2$ indicate that a factored 1D aperture function is paired 
with each separable operator factor.

\begin{proposition}[Compositional separability for multiplicatively separable aperture]
\label{prop:sep_conv}
If an aperture function $f(\bfu)$ can be written as a multiplicatively separable function 
$f(\bfu)=f_1(u_1)f_2(u_2)\delta(u_3)$, then 
the operator $\scrH_f^{A,\bfb}$ can be computed by a separable cascade, according to
\begin{align*}
\scrH_f^{A,\bfb}=\tilde\scrG_{f_1 } \circ  \scrG^{A,\bfb}_{f_2}, 
\end{align*}
where 
\begin{align*}
\tilde \scrG_{f_1 }(u_1, u_2, t)&=\tilde\scrG(u_1, u_2, t) \conv{u_1} f_1(u_1), 
\intertext{and}
\scrG^{A,\bfb}_{f_2 }(u_1, u_2, t)&=\scrG^{A,\bfb}(u_1, u_2, t) \conv{u_2} f_2(u_2). 
\end{align*}
\end{proposition}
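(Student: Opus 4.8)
The plan is to reduce everything back to Proposition~\ref{prop:s1} by first peeling the aperture convolution off the operator and then showing that the two one-dimensional factors $f_1$ and $f_2$ can be redistributed onto the two cascade stages without disturbing the composition. I would start from the definitions $\scrH_f^{A,\bfb}\{\psi\}(u_1,u_2,t)=\big[h_f(\bfu,t)\conv{\bfu}\psi^{A,\bfb}(\bfu)\big]_{u_3=0}$ (restricted to $u_3=0$ as in Proposition~\ref{prop:s1}) and $h_f=h\conv{\bfu}f$. By associativity and commutativity of spatial convolution, $h_f\conv{\bfu}\psi^{A,\bfb}=f\conv{\bfu}\big(h\conv{\bfu}\psi^{A,\bfb}\big)$, so the aperture acts on the \emph{output} of the aperture-free response, $\scrH_f^{A,\bfb}\{\psi\}=\big[f\conv{\bfu}(h\conv{\bfu}\psi^{A,\bfb})\big]_{u_3=0}$.

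Next I would use the factored form $f(\bfu)=f_1(u_1)f_2(u_2)\delta(u_3)$. Because the $\delta(u_3)$ renders the $u_3$-convolution trivial, convolving with $f$ reduces to a $u_1$-convolution with $f_1$ and a $u_2$-convolution with $f_2$; moreover, since $f_1$ and $f_2$ touch only $u_1$ and $u_2$, the restriction $u_3=0$ commutes past them. Combined with Proposition~\ref{prop:s1} this yields the intermediate identity
\begin{align*}
\scrH_f^{A,\bfb}\{\psi\}=f_1\conv{u_1}f_2\conv{u_2}\scrH^{A,\bfb}\{\psi\}=f_1\conv{u_1}f_2\conv{u_2}\,\tilde\scrG\{\scrG^{A,\bfb}\{\psi\}\}.
\end{align*}
Writing $\sigma=\scrG^{A,\bfb}\{\psi\}$, it remains to relocate $f_2$ onto $\scrG^{A,\bfb}$ and $f_1$ onto $\tilde\scrG$.

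The key lemma, and the main obstacle, is the intertwining relation
\begin{align*}
\tilde\scrG\{\sigma\conv{u_2}f_2\}=f_2\conv{u_2}\tilde\scrG\{\sigma\},
\end{align*}
i.e.\ that $\tilde\scrG$ is shift-invariant in $u_2$, so that a $u_2$-convolution on its input equals the same $u_2$-convolution on its output. To prove this I would unfold $\tilde\scrG\{\sigma\}(u_1,u_2,t)=\frac{1}{4\pi}\frac{\partial}{\partial t}\big(\frac{1}{t}\,\scrG^{R,\bfzero}\{\sigma\}(u_2,-u_1,t)\big)$ with $R=R_3(90^\circ)$ and compute the inner term explicitly. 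The argument swap induced by $R_3(90^\circ)$ is exactly what matters: after substituting the arguments $(u_2,-u_1,t)$, the integrand evaluates $\sigma$ at a point whose \emph{second} coordinate is precisely $u_2$, with the first coordinate $u_1+t\cos\theta$ carrying the arc integration (the same computation as in the proof of Proposition~\ref{prop:s1}). Hence the $u_2$ axis passes through the inner operator untouched, and a $u_2$-convolution of the input slides out as a $u_2$-convolution of $\scrG^{R,\bfzero}\{\sigma\}(u_2,-u_1,t)$. Finally, since $f_2=f_2(u_2)$ does not depend on $t$, it commutes with both the factor $1/t$ and the derivative $\partial/\partial t$, so it can be pulled entirely outside $\tilde\scrG$, establishing the lemma.

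With the lemma in hand the assembly is routine. Using commutativity of the $u_1$- and $u_2$-convolutions (they act on disjoint variables) together with the definitions $\scrG^{A,\bfb}_{f_2}\{\psi\}=\sigma\conv{u_2}f_2$ and $\tilde\scrG_{f_1}\{\chi\}=\tilde\scrG\{\chi\}\conv{u_1}f_1$, I would compute
\begin{align*}
\tilde\scrG_{f_1}\{\scrG^{A,\bfb}_{f_2}\{\psi\}\}
&=\tilde\scrG\{\sigma\conv{u_2}f_2\}\conv{u_1}f_1 \\
&=f_1\conv{u_1}f_2\conv{u_2}\tilde\scrG\{\sigma\}
=\scrH_f^{A,\bfb}\{\psi\},
\end{align*}
which is precisely the claimed cascade $\scrH_f^{A,\bfb}=\tilde\scrG_{f_1}\circ\scrG^{A,\bfb}_{f_2}$. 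The only points requiring genuine care are the bookkeeping of the $u_3=0$ restriction in the first step and the coordinate-swap computation inside $\tilde\scrG$; everything else is associativity and commutativity of convolution.
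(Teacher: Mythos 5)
Your proof is correct, and it reaches the result by a genuinely different route than the paper, even though both start from the same observation: the aperture convolution commutes with the impulse response, so $\scrH_f^{A,\bfb}\{\psi\}=f_1\conv{u_1}f_2\conv{u_2}\scrH^{A,\bfb}\{\psi\}$, after which Proposition~\ref{prop:s1} applies. The divergence is in how the redistribution of $f_1$ and $f_2$ onto the two stages is justified. The paper works at the kernel level: it rewrites $\scrG$ as a spatial convolution with a ring-shaped delta kernel (Equation~\eqref{eq:G_eq_delta_conv}), expresses the entire cascade as one nested convolution of the two split delta factors (in the style of Equation~\eqref{eq:split_delta}), and then regroups $f_1$ with the kernel involving $(u_1,\tau)$ and $f_2$ with the kernel involving $(u_2,u_3)$ by associativity and commutativity of convolution. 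You instead treat Proposition~\ref{prop:s1} as a black box and isolate the structural reason the regrouping works as a standalone intertwining lemma, $\tilde\scrG\{\sigma\conv{u_2}f_2\}=f_2\conv{u_2}\tilde\scrG\{\sigma\}$, proved by noting that the $R_3(90^\circ)$ argument swap in the definition of $\tilde\scrG$ makes $u_2$ a pure pass-through parameter of the arc integration (and that $f_2$ commutes with $1/t$ and $\partial/\partial t$). This is the same underlying fact the paper exploits implicitly --- the outer kernel simply does not involve $u_2$ --- but your packaging is more modular: the commutation relation is stated once, is reusable for any outer operator with a spectator coordinate, and avoids re-expanding the delta-function machinery; what it gives up is the explicit ring-kernel form of $\scrG$, which the paper's proof produces as a byproduct and then recycles in the proof of Proposition~\ref{prop:s4}. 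The two points you flag as needing care --- commuting the $u_3=0$ restriction past the $u_1$- and $u_2$-convolutions, and the coordinate-swap computation inside $\tilde\scrG$ --- are exactly the right ones, and both check out.
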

Here, the operators $\tilde\scrG$ and $\scrG^{A,\bfb}$ are defined as in Proposition~\ref{prop:s1}.

\begin{proof}
The proof for 
Proposition~\ref{prop:sep_conv}
 is provided in \mbox{Appendix~\ref{sec:proof2}}. 
\end{proof}

\subsection{Separability with obliquity factor}
\label{sec:separable_C}

In this section we demonstrate that operator $\bar\scrH_f^{A,\bfb}$ is separable
 when the obliquity factor $\alpha(\bfx)=\frac{\bfx\cdot\hat\bfn}{\norm\bfx}$ is included to obtain a more realistic pressure impulse response. 
From equations~\eqref{eq:bar_scrH_f_defn} and \eqref{eq:scrH_A_b_f_defn}, the modified operator 
in the local coordinate $\bfu$ is
\begin{align*}
 \bar\scrH_f^{A,\bfb}\{\psi\}(\bfu, t) &= %\bar h_f(\bfu, t) \conv{\bfu} \psi(\bfu)
%\\  &= 
\big[
{ h(\bfu,t)\alpha(\bfu)}\big]\conv{\bfu}f(\bfu) \conv{\bfu} \psi^{A,\bfb}(\bfu), 
\end{align*}
where
\begin{align*}
\psi^{A,\bfb}(\bfu)=
\scrT_{A,\bfb}^{-1}\{\psi\}(\bfu)
%=\psi\!\left( T_{A,\bfb} (\bfu)\right)
.
\end{align*}
We prove that its computation only requires
adding \mbox{simple} pre- and post-multiplications to the separable operator $\scrH$.

\begin{proposition}[Separability of system response with obliquity factor] 
\label{prop:s3}
The operator $\bar\scrH_f^{A,\bfb}$ can be computed by separable cascade according to 
\begin{align*}
\bar\scrH_f^{A,\bfb}\{\psi\}(u_1, u_2, t)=
\frac{1}{t}
\scrH\{\tilde\psi^{A,\bfb}\}(u_1, u_2, t)\!\conv{u_1, u_2}\!f(u_1, u_2), 
\end{align*}
where
\begin{align*}
\tilde\psi^{A,\bfb}(\bfu) = %u_3 \scrT_{A,\bfb}^{-1}\{\psi\}(\bfu). 
u_3 \,\psi^{A,\bfb}(\bfu).
\end{align*}

\end{proposition}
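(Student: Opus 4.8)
The plan is to reduce $\bar\scrH_f^{A,\bfb}$ to the already-separable operator $\scrH$ by transferring the obliquity weight off the Green's-function kernel and onto the source, then reattaching the aperture exactly as in Proposition~\ref{prop:sep_conv}. First I would specialize the obliquity factor to the probe frame: the element surface lies in the plane $u_3=0$ (its aperture carries the factor $\delta(u_3)$), so its normal is $\hat\bfn=\pm\hat\bfe_3$ and $\alpha(\bfu)=\bfu\cdot\hat\bfn/\norm\bfu$ collapses to $\mp u_3/\norm\bfu$; hence $\bar h(\bfu,t)=h(\bfu,t)\,\alpha(\bfu)=\mp\frac{u_3}{\norm\bfu}\,h(\bfu,t)$. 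The sign merely reflects whether the inward or outward normal is taken, and I will fix it so that the final prefactor comes out as $+1/t$, consistent with the soft-baffle force convention of Section~\ref{sec:oa_wave_prop}.

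Next I would exploit that $h(\bfu,t)$ is supported on the sphere $\norm\bfu=t$---the argument of its delta vanishes only there, which is exactly the \enquote{impulse occurs at $\norm\bfx=c_0t$} reasoning already used for $\nabla g$. On that support $1/\norm\bfu=1/t$, so $\bar h(\bfu,t)=\frac{1}{t}\,u_3\,h(\bfu,t)$ up to the fixed sign, and the factor $1/t$ is purely temporal and pulls out front. This converts the obliquity problem into the ordinary impulse response $h$ premultiplied by the \emph{kernel} coordinate $u_3$. I would then transfer that $u_3$ onto the \emph{source} through the relabelling $u_3'=u_3-(u_3-u_3')$ inside the convolution integral, which splits $[u_3\,h]\conv{\bfu}\psi^{A,\bfb}$ into $u_3\,\big(h\conv{\bfu}\psi^{A,\bfb}\big)$ minus $h\conv{\bfu}\big(u_3\,\psi^{A,\bfb}\big)$. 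The first term carries the output coordinate $u_3$ and therefore vanishes on the linear array, where $u_3=0$; what remains is precisely $\scrH\{\tilde\psi^{A,\bfb}\}$ with $\tilde\psi^{A,\bfb}(\bfu)=u_3\,\psi^{A,\bfb}(\bfu)$.

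Then I would reattach the aperture. Since $\bar h_f=\bar h\conv{\bfu}f$ and convolution is associative and commutative, $\bar\scrH_f^{A,\bfb}\{\psi\}=\bar\scrH^{A,\bfb}\{\psi\}\conv{\bfu}f$, and because $f(\bfu)=f_1(u_1)f_2(u_2)\delta(u_3)$ the $u_3$-convolution is trivial, reducing to a two-dimensional convolution in $(u_1,u_2)$. The extracted $1/t$ is a temporal multiplier that commutes with this spatial convolution, giving $\bar\scrH_f^{A,\bfb}\{\psi\}=\frac{1}{t}\,\scrH\{\tilde\psi^{A,\bfb}\}\conv{u_1,u_2}f$. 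Separability of the right-hand side is inherited from Propositions~\ref{prop:s1} and~\ref{prop:sep_conv}, and if one wishes the factors $f_1,f_2$ can be distributed across the two cascade operators exactly as there.

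I expect the delicate step to be the replacement $1/\norm\bfu\to1/t$ in the presence of the time derivative hidden in $h=\partial_t g$. Taken literally $h$ carries $\delta'(t-\norm\bfu)$, so commuting this replacement past the derivative is only legitimate in the same impulsive sense already used in Section~\ref{sec:oa_wave_prop} to write $\nabla g=\frac{1}{c_0}h\,\bfx/\norm\bfx$ and hence $\bar h=h\alpha$. I would make it precise by manipulating the delayed delta directly---keeping $1/t$ and $\partial/\partial t$ as the outer operations, as in the proof of Proposition~\ref{prop:s1}, so that the obliquity contributes only the coordinate weight $u_3$ and the single extra temporal factor $1/t$---or by cross-checking the resulting identity against the spherical-mean representation of $\scrH$. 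Tracking the overall sign through the inward-versus-outward choice of $\hat\bfn$ is the other place that needs care.
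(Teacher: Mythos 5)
Your proposal follows essentially the same route as the paper's proof: specialize the obliquity factor to $\alpha(\bfu)=\pm u_3/\norm\bfu$ in the probe frame, use the support of the delayed delta at $\norm\bfu = t$ to trade $1/\norm\bfu$ for the purely temporal factor $1/t$, transfer the coordinate weight onto the source (your splitting $u_3'=u_3-(u_3-u_3')$ reduces, at $u_3=0$, to exactly the paper's regrouping of $u_3'\,\psi(\bfu-\bfu')$ as $\mp\tilde\psi(\bfu-\bfu')$), and reattach the flat aperture as a 2D convolution in $(u_1,u_2)$. If anything you are more careful than the paper, whose own proof silently drops the sign coming from the choice of normal and performs the $1/\norm{\bfu'}\to 1/t$ replacement under the hidden $\delta'$ in $h=\partial_t g$ without acknowledging the correction term this incurs---precisely the delicate step you flag and propose to repair by keeping $1/t$ and $\partial/\partial t$ as outer operations on the delayed delta.
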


\begin{proof}
The proof for 
Proposition~\ref{prop:s3}
 is provided in \mbox{Appendix~\ref{sec:proof3}}. 
\end{proof}

\subsection{Far-field approximation to improve performance}
\label{sec:separable_D}

In this section, a far-field approximation to improve performance for computing
$\scrG_{f_i}$ 
is demonstrated for  
rectangular apertures. 
This avoids computing a spatial convolution with the aperture function $f_\bfa(\bfx)$ on the 3D volume $\psi(\bfx)$. It also avoids using 1D spatial convolutions as described in Proposition~\ref{prop:sep_conv}.
The resulting computation involves integration along a path consisting of two positive circular arcs, two negative circular arcs 
and connecting lines, 
 as shown in Figure~\ref{fig.arcpath}b. 
Following this, efficient time-domain filtering operations (integration and multiplication) are performed, which results in the desired signals. 
The far-field approximation becomes more accurate when $\bfx \gg a$. Thus, the approach is applicable when extreme accuracy is not required in the near-field. However, even when used in the near-field, the far-field approximation may provide more realism compared to modeling an ideal point aperture.

\begin{proposition}[Far-field approximation to improve performance]
\label{prop:s4}
Let a separable factor of the system response be defined by 
\begin{align*}
\scrG\{\psi\}
(u_1, u_2, t)
=
\left[
\psi(\bfu)
\conv{\bfu}
\delta(u_1)
\delta(t - \norm\bfu)
%\frac {\delta(t - \norm\bfu)}{4\pi\norm\bfu}
\right]_{u_3=0}
,
\end{align*}
and its convolution with a rectangular pulse of width ${a}$ be defined by 
\begin{align*}
\scrG_{a}
\{\psi\}(u_1, u_2, t)
=
\rect\left(\frac{u_2}{a}\right)\conv{u_2}
\scrG\{\psi\}
(u_1, u_2, t).
%
%\rect\left(\frac{u_2}{a}\right)\conv{u_2}
%\underbrace{
%\left[
%\psi(\bfu)
%\conv{\bfu}
%\delta(u_1)
%\frac {\delta(t - \norm\bfu)}{4\pi\norm\bfu}
%\right]_{u_3=0}
%}
%_{
%\scrG\{\psi\}
%(u_1, u_2, t)
%}
\end{align*}
Then, $\scrG_{a}$ can be computed as 
\begin{align*}
\scrG_{a}
\{\psi\}(u_1, u_2, t)
 = t\int_0^t\!\d{t}\, \frac{\partial}{\partial t}\!
 \left(
 \frac{1}{t}\scrG_{a}\{\psi\}(u_1, u_2, t) \right)
 ,  
\end{align*}
where $\frac{\partial}{\partial t}\left(\frac 1 t \scrG_{a} \right)$ 
is approximated by integrating 
$\psi(z,x,y)$  
over six contours
according to
\begin{align*}
&\frac{\partial}{\partial t}\left(\frac 1 t \scrG_{a}\{\psi\}(u_1, u_2, t)\right)\approx
\\&\quad+
t\int_{\frac{\pi}{2}}^{\frac{3\pi}{2}}
\psi\left(
\begin{bmatrix}
u_1\\ (u_2+a)-t\cos\theta\\ t\sin\theta
\end{bmatrix}
\right)
\frac{1}{\gamma(t, \theta)}\,
\d\theta
\\&\quad-
t\int_{-\frac{\pi}{2}}^{\frac{\pi}{2}}
\psi\left(
\begin{bmatrix}
u_1\\ (u_2+a)-t\cos\theta\\ t\sin\theta
\end{bmatrix}
\right)
\frac{1}{\gamma(t, \theta)}\,
\d\theta
\\&\quad-
t\int_{\frac{\pi}{2}}^{\frac{3\pi}{2}}
\psi\left(
\begin{bmatrix}
u_1\\ (u_2-a)-t\cos\theta\\ t\sin\theta
\end{bmatrix}
\right)
\frac{1}{\gamma(t, \theta)}\,
\d\theta
\\&\quad+
t\int_{-\frac{\pi}{2}}^{\frac{\pi}{2}}
\psi\left(
\begin{bmatrix}
u_1\\ (u_2-a)-t\cos\theta\\ t\sin\theta
\end{bmatrix}
\right)
\frac{1}{\gamma(t, \theta)}\,
\d\theta
\\&\quad+
\int_{-\frac a 2}^{\frac a 2}
\psi\left(
\begin{bmatrix}
u_1\\ u_2-x \\ t
\end{bmatrix}
\right)
\d{x}
%+
%\psi\left(
%\begin{bmatrix}
%u_1\\ u_2-x \\ -t
%\end{bmatrix}
%\right)
%\d{x}.
\\&\quad+
\int_{-\frac a 2}^{\frac a 2}
\psi\left(
\begin{bmatrix}
u_1\\ u_2-x \\ -t
\end{bmatrix}
\right)
\d{x},
\end{align*}
where
$$
\gamma(t, \theta) = \max\left(\frac a 2, t\abs{ \cos \theta}\right).
$$
\end{proposition}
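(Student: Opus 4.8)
The plan is to reduce the entire far-field claim to an approximation of the single quantity $\frac{\partial}{\partial t}\big(\frac 1 t \scrG_a\{\psi\}\big)$. The outer identity $\scrG_a\{\psi\}(u_1,u_2,t)=t\int_0^t \frac{\partial}{\partial t'}\big(\frac{1}{t'}\scrG_a\{\psi\}(u_1,u_2,t')\big)\,\d{t'}$ is merely the fundamental theorem of calculus together with the physical fact that $\frac 1 t\scrG_a\{\psi\}\to 0$ as $t\to 0^+$ (the expanding arcs reach no source until $t$ exceeds the standoff distance, so $\scrG_a\{\psi\}$ vanishes for small $t$). Hence everything reduces to approximating that $t$-derivative, and the factor $t$ and the integration are applied afterward in post-processing.

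First I would put $\scrG_a$ in explicit planar form. Using \eqref{eq:G_eq_delta_conv}, the factor $\delta(u_1)$ freezes $u_1$, so $\scrG_a\{\psi\}(u_1,u_2,t)$ is a two-dimensional convolution in $(u_2,u_3)$, evaluated at $u_3=0$, of the slice $\psi(u_1,\cdot,\cdot)$ against the kernel
\begin{equation*}
K_a(u_2,u_3,t)=\int_{-a/2}^{a/2}\delta\!\big(t-\sqrt{(u_2-s)^2+u_3^2}\big)\,\d{s}.
\end{equation*}
Carrying out the $s$-integral by the sifting property gives $\frac 1 t K_a=\frac{1}{\sqrt{t^2-u_3^2}}\,[\chi_L+\chi_R]$, where $\chi_L,\chi_R$ are indicators restricting to the two roots $s=u_2\pm\sqrt{t^2-u_3^2}$ that lie in $[-a/2,a/2]$. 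Geometrically this is the region swept out by a circle of radius $t$ whose centre slides along a segment of length $a$; its support is bounded by arcs of the two circles of radius $t$ centred at the aperture edges $u_2\pm a/2$ (their inner and outer halves), joined by the two horizontal caps at $u_3=\pm t$ spanning $[u_2-a/2,u_2+a/2]$, which is exactly the span of the two straight-segment terms.

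The heart of the argument is to differentiate this in $t$ by a Leibniz/transport argument: $\frac{\partial}{\partial t}$ of the weighted area integral splits into boundary terms, in which the moving arcs sweep new mass, and an interior term from $\frac{\partial}{\partial t}\frac{1}{\sqrt{t^2-u_3^2}}=-t\,(t^2-u_3^2)^{-3/2}$. The boundary terms are contour integrals of $\psi$ over the four arcs; parametrising each arc by $\theta$ so that $\d\ell=t\,\d\theta$ and $\sqrt{t^2-u_3^2}=t\abs{\cos\theta}$ turns the weight into $\frac{1}{t\abs{\cos\theta}}$, and the outer versus inner halves of the two edge-circles acquire opposite signs (outer $+$, inner $-$) because the covered region grows outward, matching the signs of the four arc terms.

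The far-field approximation — the only inequality in the proof, and where I expect the main obstacle to lie — enters in two places. First, the interior term is $O(a/t)$ relative to the arc terms away from grazing incidence and is discarded in the regime $t\gg a$; quantifying this and confirming it is genuinely lower-order is the delicate estimate. Second, the arc weight $\frac{1}{t\abs{\cos\theta}}$ is integrable but singular as $\theta\to\pm\pi/2$ (i.e. $u_3\to\pm t$, grazing), precisely where the two edge-circles pinch together a distance $a$ apart; the far-field step caps it at $\tfrac 2 a$ via $\frac{1}{t\abs{\cos\theta}}\to\frac{1}{\gamma(t,\theta)}$ with $\gamma=\max(a/2,t\abs{\cos\theta})$, and the grazing contribution removed by the cap is restored by the two straight segments at $u_3=\pm t$. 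Assembling the four signed, capped arc integrals with the two cap integrals reproduces the stated six-contour expression for $\frac{\partial}{\partial t}(\frac 1 t\scrG_a)$, and feeding it back through the fundamental-theorem identity of the first step recovers $\scrG_a$. The principal difficulties are thus the sign bookkeeping on the inner and outer arcs, the clean handling of the arc-to-cap transition encoded by the $\max$ in $\gamma$, and the justification that both the discarded interior term and the weight-capping error are controlled for $t\gg a$.
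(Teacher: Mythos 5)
Your plan matches the paper's proof in its skeleton: the fundamental-theorem identity $\scrG_a = t\int_0^t \frac{\partial}{\partial t}\bigl(\tfrac 1t \scrG_a\bigr)\,\d{t}$, the exact closed form of the kernel (weight $(t^2-u_3^2)^{-1/2}$ times indicators of the two roots $s=u_2\pm\sqrt{t^2-u_3^2}$ lying in the aperture, which is exactly the paper's Eq.~\eqref{eq:ga_full}), the $\gamma$-capping, and the six-contour assembly. The genuine divergence --- and the gap --- is your order of operations: you differentiate the \emph{exact} kernel first and approximate afterwards, whereas the paper approximates first and differentiates second. Your Leibniz/transport split into ``arc boundary terms plus interior term'' is ill-defined precisely where it matters: the weight $(t^2-u_3^2)^{-1/2}$ is singular exactly on the moving cap boundary $u_3=\pm t$, so the transport boundary term there is infinite, and the interior term $-t\,(t^2-u_3^2)^{-3/2}$ is \emph{non-integrable} near the caps. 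Your estimate that the interior term is $O(a/t)$ relative to the arcs holds only away from grazing; near $u_3=\pm t$ it diverges and cannot simply be discarded. Consequently the two straight-segment terms are never actually derived in your plan: the claim that they ``restore the grazing contribution removed by the cap'' cannot be taken literally, because the uncapped grazing contribution of any single arc is itself divergent --- only the signed combination of the pinched $+/-$ arcs, the interior term, and the cap sweep is finite, and disentangling that combination is exactly the computation your proposal defers.

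The paper's proof sidesteps all of this with the one idea your proposal lacks: apply the far-field approximation to the \emph{kernel} before differentiating. Writing the $\rect$ aperture as a difference of two shifted rays, the exact kernel is $\frac{t}{\sqrt{t^2-y^2}}$ times products of step functions (Eqs.~\eqref{eq:object_solid}--\eqref{eq:ga_full}); the far-field step replaces the weight $\frac{1}{\sqrt{t^2-y^2}}$ by the bounded, $t$-\emph{independent} $\frac{1}{\gamma}$. After that substitution, $\tfrac 1t g_a$ depends on $t$ only through its indicator factors, so $\frac{\partial}{\partial t}\bigl(\tfrac 1t g_a\bigr)$ is exactly (not approximately) a sum of Dirac deltas on the six contours: four arcs from the moving edges of the $\rect$-in-$x$ indicators, and the two lines at $u_3=\pm t$ from differentiating the factor $\step(y+t)-\step(y-t)$, each carrying the finite weight $\frac 1\gamma$. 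No interior term ever arises, the singular-weight-on-a-moving-boundary problem never occurs, and the cap segments fall out mechanically rather than by assertion. To rescue your differentiate-first ordering you would have to regularize the weight near $u_3=\pm t$ before invoking transport, which in effect reconstructs the paper's approximate-then-differentiate step.
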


\begin{figure}
	\centerline{\includegraphics[width=0.5\textwidth, trim=20 3 20 20, clip]{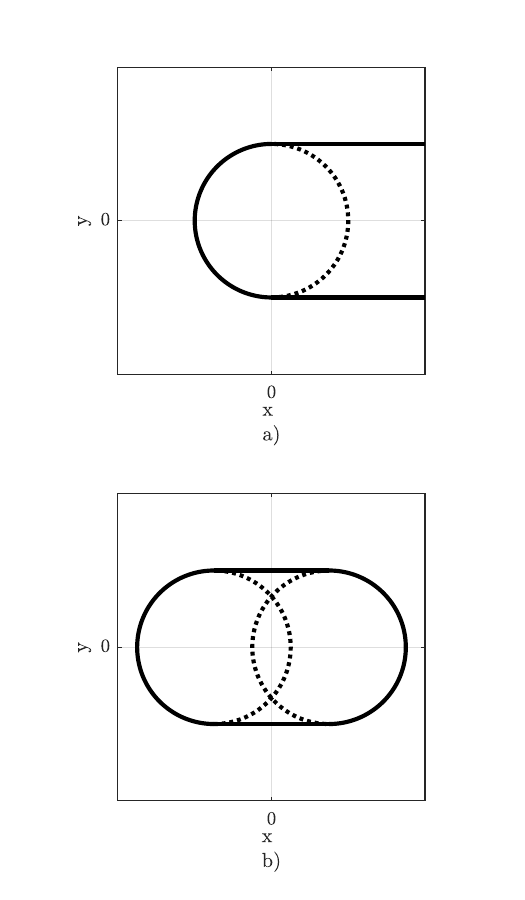}}
    %\centerline{\includegraphics[width=0.35\textwidth, trim=20 3 20 20, clip]{arcpath1.pdf}}
    \centering
    \vspace{-10px}
\caption[Spatial impulse responses  in the $z=0$ plane]{Spatial impulse responses  in the $z=0$ plane. Solid line represents positive wavefront and dotted line represents negative wavefront. 
  a) The ray impulse response $\varsigma(x,y,t)$ at time $t$ 
  is $\varsigma(x,y,t)=\frac 1 {2} \delta(t\pm y)\step(x) -
%\frac 1 t  
\delta(t-\sqrt{x^2+y^2})\signum(x)$, which 
 traces out the path shown.   
  b) The impulse response $h(x,y,t)$ at time $t$ from a line segment of width $a$. This is formed by superposition of responses from two rays, with 
$h(x,y,t)=\varsigma(x+a/2,y,t)-\varsigma(x-a/2,y,t)$.   
}
\label{fig.arcpath}
\end{figure}

Here, the 2D system response factor operator $\scrG\{\psi\}(u_1,u_2,t)$ has a 3D input domain $\psi$
and produces a 3D scalar output that is parametrized by $(u_1, u_2,t)$. 
It is considered 2D because at a fixed value of $u_1$, the span of its integration is on a 2D plane, as shown in Figure~\ref{fig.probe_geom_pll}a. 
This is also the case for Figure~\ref{fig.probe_geom_pll}b, but where $u_2$ is the fixed parameter (along the line of the array elements, $u_2$ equals $0$). 
\begin{proof}
The proof for 
Proposition~\ref{prop:s4}
 is provided in \mbox{Appendix~\ref{sec:proof4}}. 
\end{proof}

Note that an additional assumption can be made if there is zero signal that arrives from behind the transducer, which is typically desired in practice. 
Instead of six contours, there would only be five (since one of the connecting lines is on the opposite side of the transducer). In addition, each arc would cover an angle of $\frac \pi 2$ rather than $\pi$, since integration does not need to be performed in the half-space behind the transducer.

\section{Results and Implementation}
\label{sec:impl}

A model for opto-acoustic simulation 
using a separable mathematical operator 
 was implemented as described in Section~\ref{sec:model} and Section~\ref{sec:separable}. 
To achieve high performance
for computing $\scrH_{f,\PHI}^{A, \bfb}$, 
a custom GPU (Graphics Processing Unit) kernel was developed to implement
 the separable operator factor $\scrG_{f}^{A,\bfb}$ and its \mbox{adjoint}. 
The GPU kernel was developed using CUDA. Testing involved using a GeForce GTX Titan X graphics processor, which has 3072 parallel cores and 12GB of graphics memory. 
The GPU kernel was configured to run from within a Matlab software environment. 
A user interface was developed to adjust the position and orientation of a linear-array probe relative to a volume of acoustic sources, as illustrated in Figure~\ref{fig.vol3d}. Controls were implemented to adjust roll, pitch and yaw, as well as the geometry parameters for the transducer elements. 
Simulation from a volume consisting of $512\times512\times512$ voxels was capable of achieving real-time performance for generating time-domain data with output of 256 transducer channels and 1024 samples per channel. 
In Matlab, the display was updated at a rate of 5 frames per second, which could be increased by reducing the size of the volume.
Computation times for various grid sizes are provided in Table~\ref{tab:sep_timings}. 

In Figure~\ref{fig.vol3d}, 
the configuration of a 128-channel linear-array and its imaging plane 
are shown, 
relative to a volume consisting of seven spherical opto-acoustic sources.  
The linear-array output is shown in Figure~\ref{fig.sinogram1}, generated from simulation based on Prop~\ref{prop:s1} using the ideal point detector to produce 512 time-domain samples. 
A pictorial representation of the time-domain data, called a \emph{sinogram}, is shown along with time-domain traces from selected channels of the dataset. In our implementation, 
to avoid aliasing, 
the temporal sampling rate $f_s$ is set by default to the critical rate $f_c$. 
This is determined
based on the desired number of samples $n_s$ and the voxel spacing $\Delta x=L_z/n_z$, 
according to 
$f_c=\frac{c_0}{\Delta x}$, 
where $L_z$ and $n_z$ are length and number of voxels along the $z$-axis, and $c_0$ is speed of sound.

\begin{figure}[h]
\ifdefined\SHORTTEX
    \centerline{\includegraphics[width=1.0\columnwidth, trim=20 20 25 40, clip]{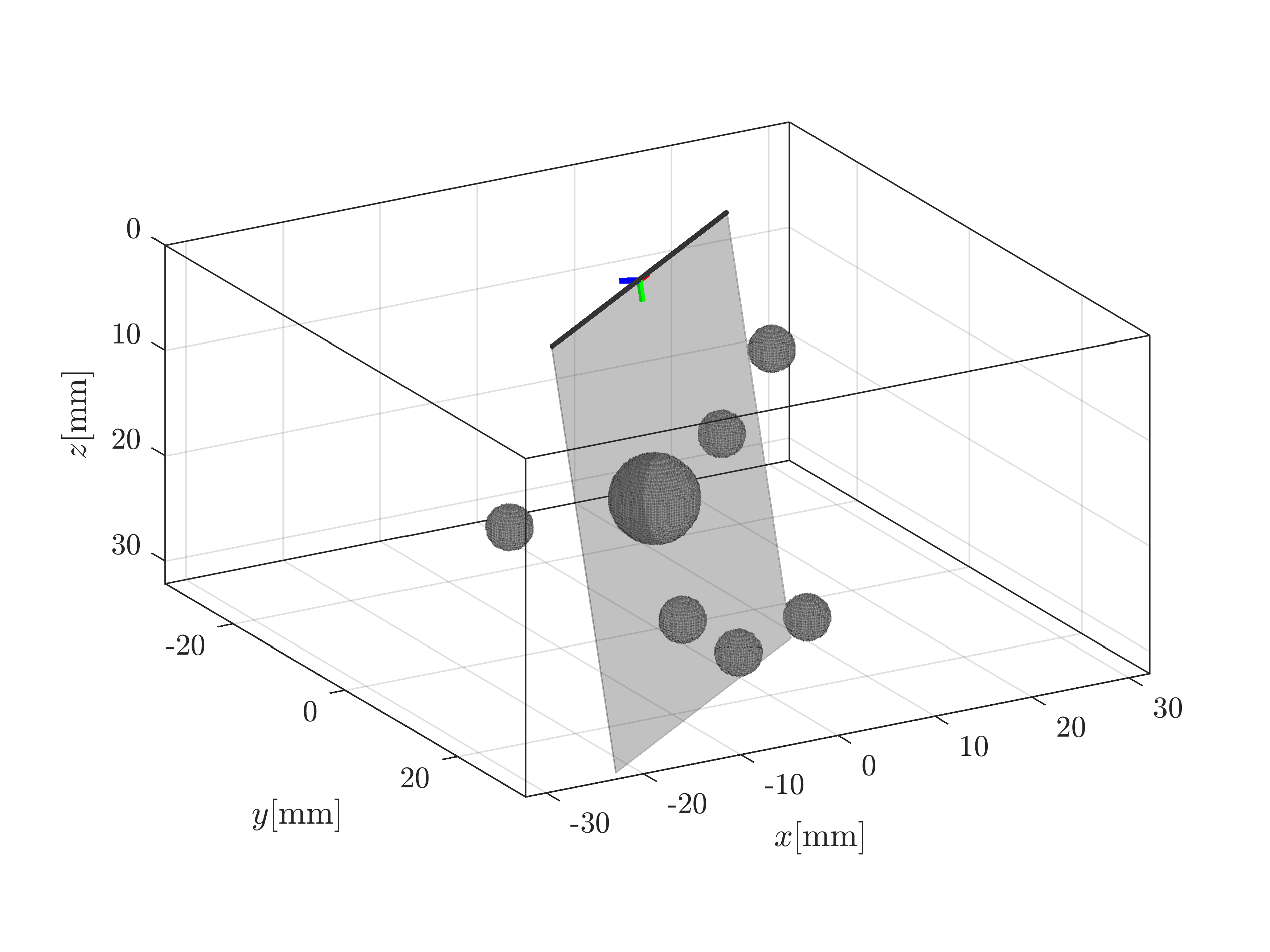}}
\else
    \centerline{\includegraphics[width=0.9\textwidth, trim=20 20 25 40, clip]{fig_volume1.png}}
\fi
    \centering
    \vspace{-10px}
\caption[3D volume consisting of several spherical opto-acoustic sources of equal intensity]{A 3D volume consisting of several spherical opto-acoustic sources of equal intensity. 
Position of linear array is indicated by a dark line, shown adjacent to imaging plane cross-section.
The volume consists of $256\times 256\times 128$ voxels, corresponding to dimensions of
$64\times64\times32$mm. 
}
\label{fig.vol3d}
\end{figure}

\begin{figure}[h]
\ifdefined\SHORTTEX
    \centerline{\includegraphics[width=1.0\columnwidth, trim=10px -5 16px 0, clip]{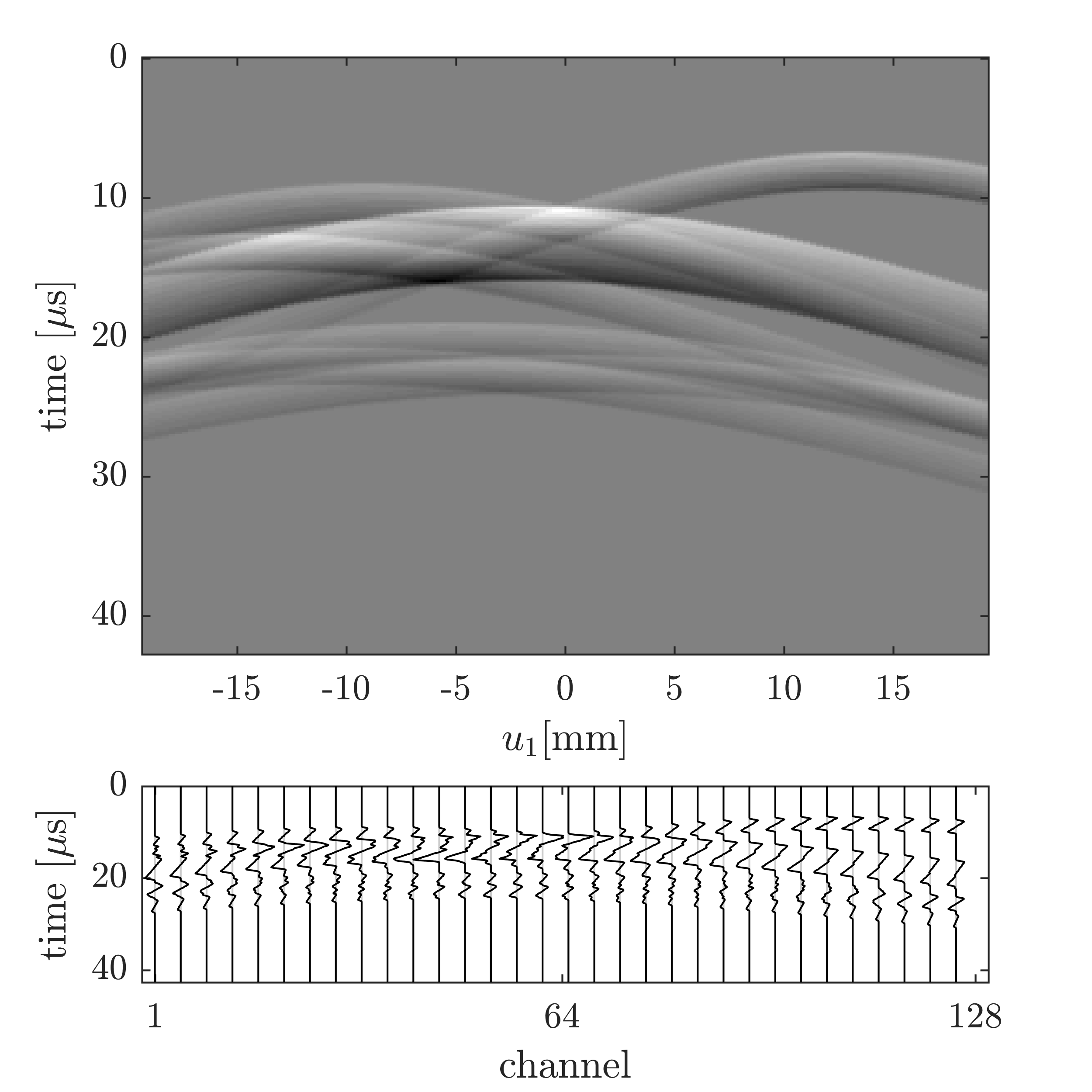}}
\else
    \centerline{\includegraphics[width=0.7\textwidth, trim=10px -5 16px 0, clip]{fig_sinogram1.png}}
\fi    
    \centering
    \vspace{-10px}
\caption[Simulated time-domain signals generated by opto-acoustic sources]{Simulated time-domain signals generated by opto-acoustic sources 
as configured in 
 Figure~\ref{fig.vol3d}. (top) Sinogram image showing signal intensity.
  (bottom) Line plot of the signals for selected channels. 
  }
\label{fig.sinogram1}
\end{figure}

\begin{table}[bh]
\caption[Computation time for simulating 3D volumes]{Computation time (milliseconds)
for
simulating 3D volumes based on
 Propositions~\ref{prop:s1}, \ref{prop:sep_conv} and \ref{prop:s4}.  
The 
grid size (number of voxels) 
was varied. 
All cases used 256 transducer channels with 1024 output samples per channel. 
The total execution time includes run-times for the GPU implementations of $\scrG$ and $\tilde\scrG$, plus computational overhead for memory transfers and CPU processing. 
}
\label{tab:sep_timings}
\centering
%\begin{adjustbox}{width=\columnwidth}
%\begin{adjustbox}{width=0.5\textwidth}
\begin{tabular}{ccccccc}
%\SingleSpacing
\hline
%\vspace{5px}
\textbf{Grid Size}  & 
\textbf{Method} & \raisebox{-.3ex}{$\bm\scrG$}  & \raisebox{-.3ex}{$\bm{\tilde\scrG}$} & \textbf{Overhead} & \textbf{Total}  
\\
\hline\hline
\centering
  $128\times 128\times 128$ & Prop.~\ref{prop:s1}			&	4	&	14	&   13  & 	31
\\$256\times 256\times 256$ & Prop.~\ref{prop:s1}			&	10	&	14	&	18	& 	42
\\$512\times 512\times 512$ & Prop.~\ref{prop:s1}			&	39	&	16	&	102	& 	157
\\$768\times 768\times 768$ & Prop.~\ref{prop:s1}			&	48	&	15	&	336	& 	399
\\\hline
  $128\times 128\times 128$ & Prop.~\ref{prop:sep_conv}		&	7	&	24	&	12	&	43
\\$256\times 256\times 256$ & Prop.~\ref{prop:sep_conv}		&	18	&	24	&	45	&	86
\\$512\times 512\times 512$ & Prop.~\ref{prop:sep_conv}		&	30	&	22	&	305	&	357
\\$768\times 768\times 768$ & Prop.~\ref{prop:sep_conv}		&	92	&	19	&  1001	&	1112
\\\hline
  $128\times 128\times 128$ & Prop.~\ref{prop:s4}			&	12	&	99	& 11	&	122
\\$256\times 256\times 256$ & Prop.~\ref{prop:s4}			&	25	&	96	& 20	&	141
\\$512\times 512\times 512$ & Prop.~\ref{prop:s4}			&	91	&	93	& 55	&	239
\\$768\times 768\times 768$ & Prop.~\ref{prop:s4}			&	208	&	92	& 365	&	665
\\\hline
\end{tabular}
%\end{adjustbox}

\end{table}

To demonstrate the accuracy of the model, 
a volume consisting of a single point absorber was compared to time-domain impulse responses 
that were 
simulated using the acoustic software package Field II\cite{jensen1996field,jensen1992calculation}.  
In Figure~\ref{fig.exact_impulse}, the impulse response output for rectangular aperture using the exact separable method according to Proposition~\ref{prop:sep_conv} is shown. The output corresponds to signal produced by the point absorber, which is modeled as point source at a fixed location that is detected by different transducers in the linear-array. 
The output from the exact separable method closely matches that of Field II, except for differences that are attributed to a low-pass filtering effect due to the voxel spacing used in the simulation, where a linear interpolation kernel is used. The effect reduces the intensity of the peak impulses in time-domain, and the magnitude of the transfer-function in frequency-domain is reduced at high frequencies, as shown.   This occurs because Field II is capable of simulating a point source with a high bandwidth. However, our method is voxel-based. A voxel is a finite sized object which depends on the 3D grid sizing of the volume. Tri-linear interpolation filters 
smooth the  impulse response from each voxel, and this causes the observed time-domain low-pass effects.  However, the grid can be made arbitrarily fine to alleviate this if the number of voxels is increased, or if the volume dimensions are decreased, which can increase memory usage and/or processing time. 

In Figure~\ref{fig.approx_impulse}, the far-field approximation of Proposition~\ref{prop:s4} was used to generate time-domain output. As expected, it is seen that  the far-field approximation does not match the Field II output as closely as for the exact approach of Figure~\ref{fig.exact_impulse}, although there is still a close correspondence. Also, the timing of the rising and falling edges do match closely. Some features of the waveform shape 
and the approximate area of under the curves
are also captured by the approximation. 
Use of Proposition~\ref{prop:s4} is suitable 
when a 
balance between speed and accuracy is required. 

In Figure~\ref{fig:two_spheres_sinogram_uffc}, 
simulation 
with transducer aperture for
an ideal point detector is compared to a 4mm-wide line aperture. 
A digital phantom 
 consisting of two spherical absorbers was generated as shown in 
Figure~\ref{fig:two_spheres_3d_uffc}. 
This phantom was designed to produce {out-of-plane artifact} in 2D imaging, 
which results 
because
one sphere generates acoustic interference when the probe is positioned over the other. 
The spheres have 10mm diamter, a 16mm center depth, and  
 homogeneous optical absorption coefficient 
of 1 arb.u. 
The
 background was assumed to have negligible absorption. 
The simulated volume 
had
 $256\times 256 \times 128$ voxels, with grid spacing $\Delta x=0.5$mm/voxel. 
Both transducers 
modeled cosine directivity using the 
obliquity factor $\alpha(\bfx)$ according to Proposition~\ref{prop:s3}, 
with ideal electro-mechanical impulse response $h_\text{em}(t)=\delta(t)$. 
The spheres produced \enquote{N-shaped} waveforms, which was expected. 
The wide aperture provides increased directionality,
which filters out more signal from the out-of-plane sphere, as shown. 
Further analysis regarding image reconstruction using the dataset of Figure~\ref{fig:two_spheres_sinogram_uffc}
is described in our follow-up paper\cite{zalev2020convex}.

\begin{figure*}[t]
%\hfill%
%\begin{minipage}[t]{1.0\columnwidth}
%\hspace{-5px}%
\begin{minipage}[t]{0.48\textwidth}
    \centerline{\includegraphics[width=1.0\textwidth, trim=70 0 70 0, clip]{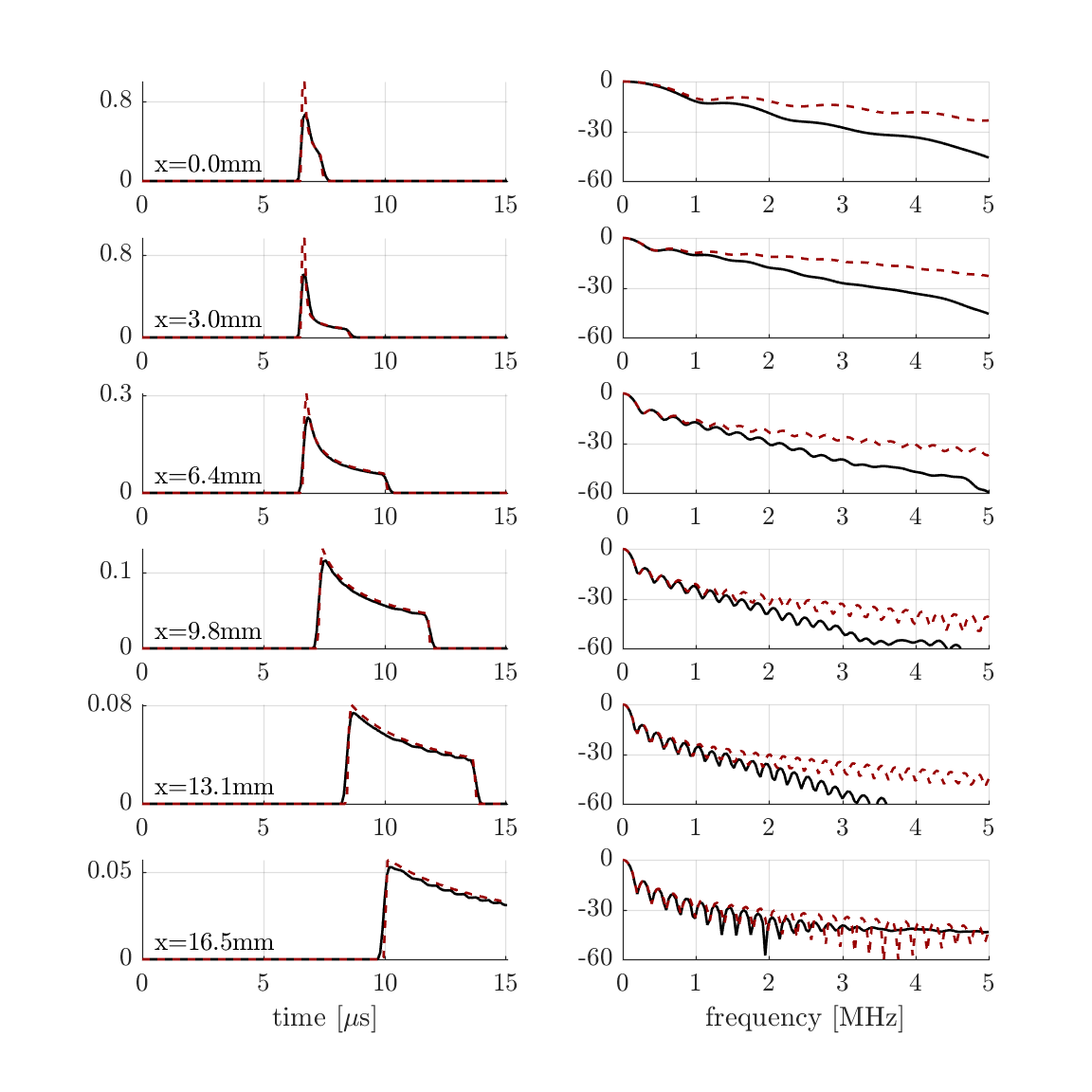}}
    \centering
    \vspace{-10px}
\caption[Signals from point source detected by transducer (separable approach)]{Signals from point source at (x,y,z)=(0,0,10)mm, detected by transducer of width $6\text{mm}\times6\text{mm}$. 
(solid line) computed using separable approach of Proposition 3.2. (dotted line) computed using Field II. 
(left) time-domain. (right) frequency domain. 
Due to the finite grid spacing of 3D voxels in the opto-acoustic source distribution, 
this results in a low-pass filtering effect of the simulated signals, which accounts for 
the differences seen here. 
 }
\label{fig.exact_impulse}
\end{minipage}%
\hfill\hfill%
%\hspace{10px}%
%\begin{minipage}[t]{1.0\columnwidth}
\begin{minipage}[t]{0.48\textwidth}
%\end{figure}
%\begin{figure}[h]
    \centerline{\includegraphics[width=1.0\textwidth, trim=70 0 70 0, clip]{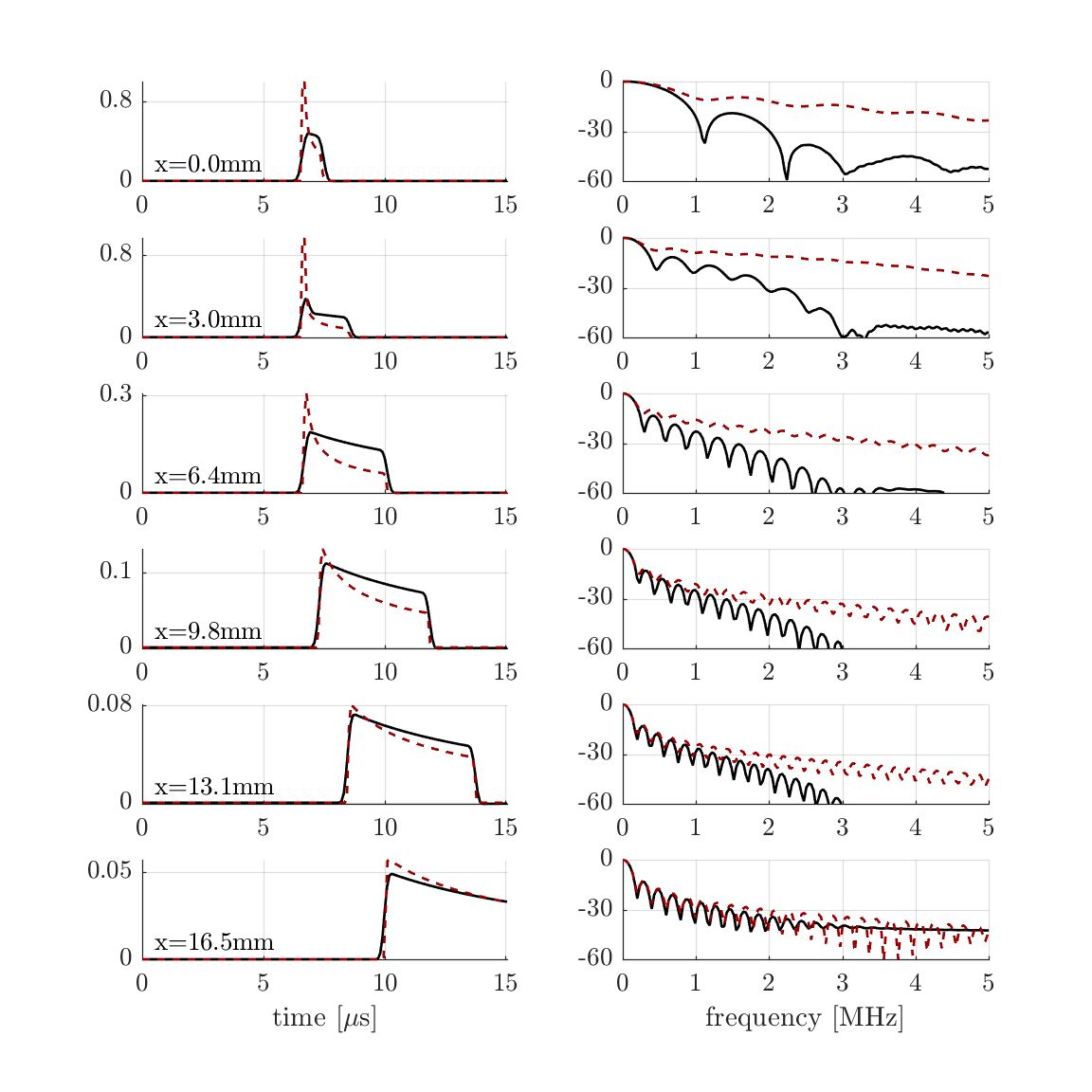}}
    \centering
    \vspace{-10px}
\caption[Signals from point source detected by transducer (approximation)]{
Signals from point source at (x,y,z)=(0,0,10)mm, detected by transducer of width $6\text{mm}\times6\text{mm}$. 
(solid line) computed using separable far-field approximation of Proposition 3.4. (dotted line) computed using Field II. 
(left) time-domain. (right) frequency domain. 
The far-field approximation of Proposition 3.4 results in increased computational performance
compared to Proposition 3.2, however the accuracy is reduced. 
}
\label{fig.approx_impulse}
\end{minipage}%
\hfill

\end{figure*}

%\begin{figure*}[bt!]
\begin{figure*}[bt]
\centering
\subfloat[$0$mm width]{\label{fig:sinogram_0mm_uffc}\includegraphics[width=0.45\textwidth, trim=0px 0px 0px 30px, clip]{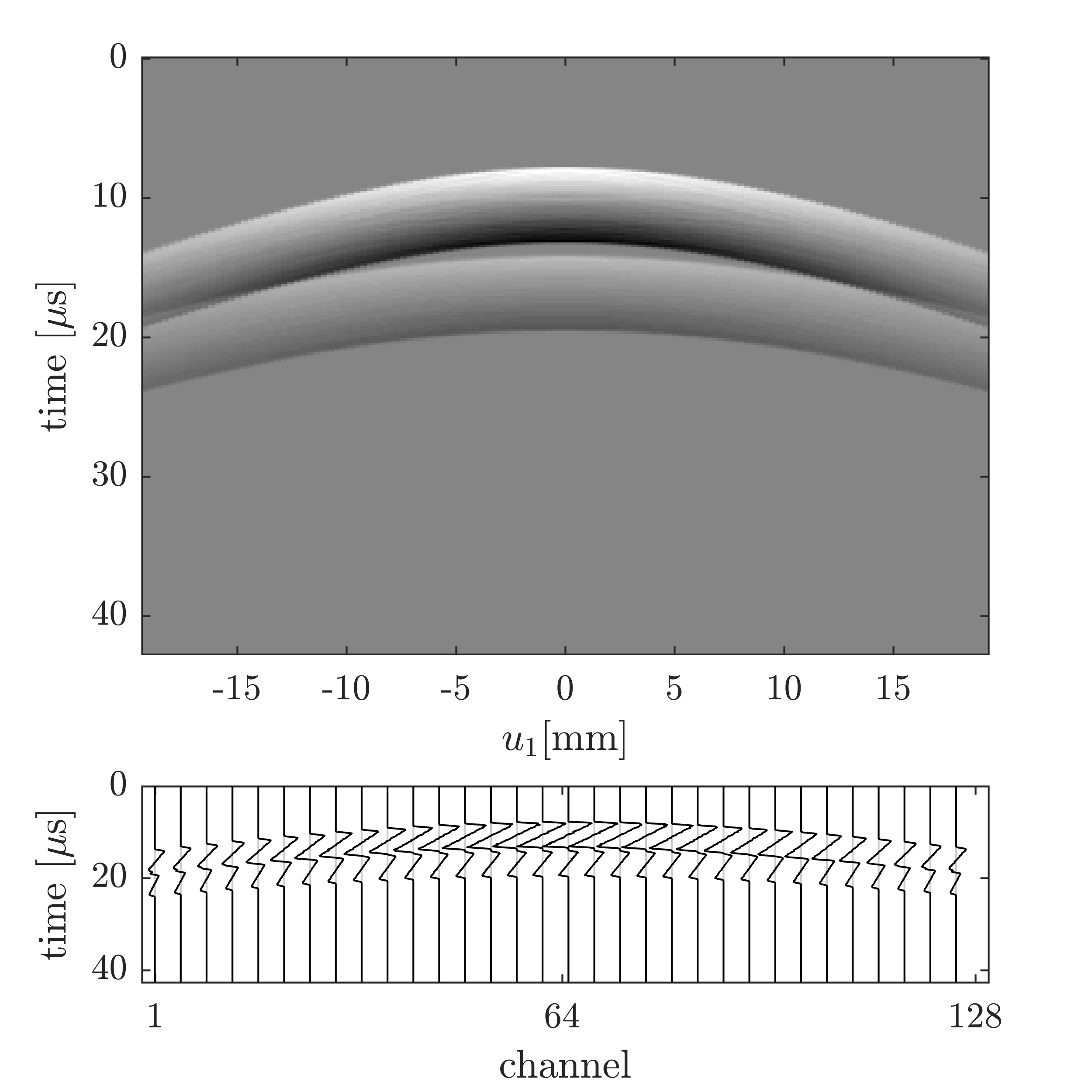}}%
\subfloat[$4$mm width]{\label{fig:sinogram_4mm_uffc}\includegraphics[width=0.45\textwidth,   trim=0px 0px 0px 30px, clip]{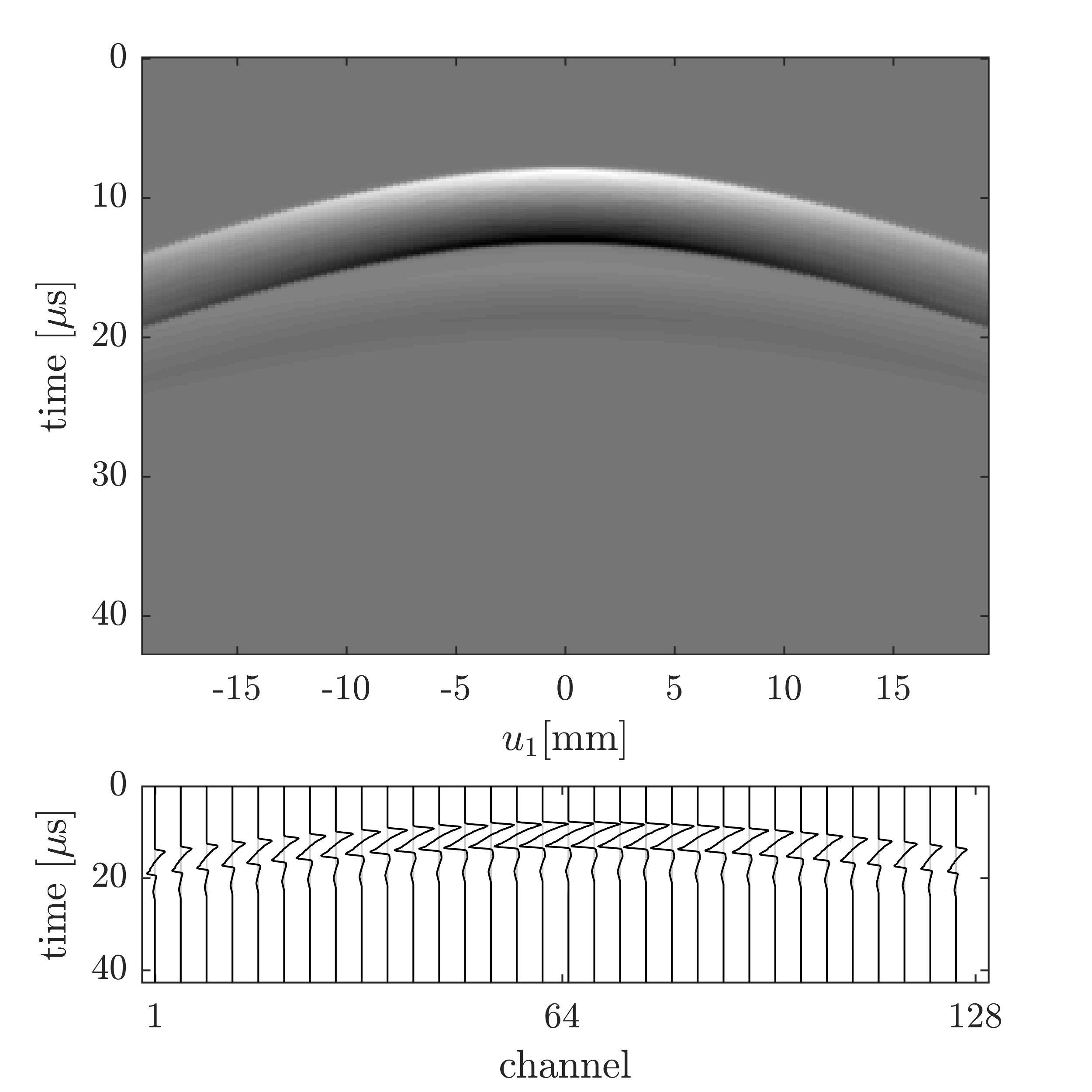}}
\caption[Simulated data for $0$mm and $4$mm width transducer]{\label{fig:two_spheres_sinogram_uffc} 
Simulated data for 
linear array. Time domain data is shown graphically (top) and plotted (bottom).  
The array is positioned as shown in Figure~\ref{fig:two_spheres_3d_uffc}.  
The transducer elements are modeled using (a) an ideal point detector and (b) a $4$mm wide line aperture. 
The time-domain plots (bottom) illustrate that each spherical absorber generates an N-shaped waveform, 
which arrives earlier for the in-plane absorber than the out-of-plane absorber. 
The out-of-plane waveform is weaker in (b) compared to (a) because the transducer aperture acts as a spatially-dependent filter.  
  }
\end{figure*}

\begin{figure}[b]
\centering
\ifdefined\SHORTTEX
\includegraphics[width=0.7\columnwidth, trim=0px 0px 0px 0px, clip]{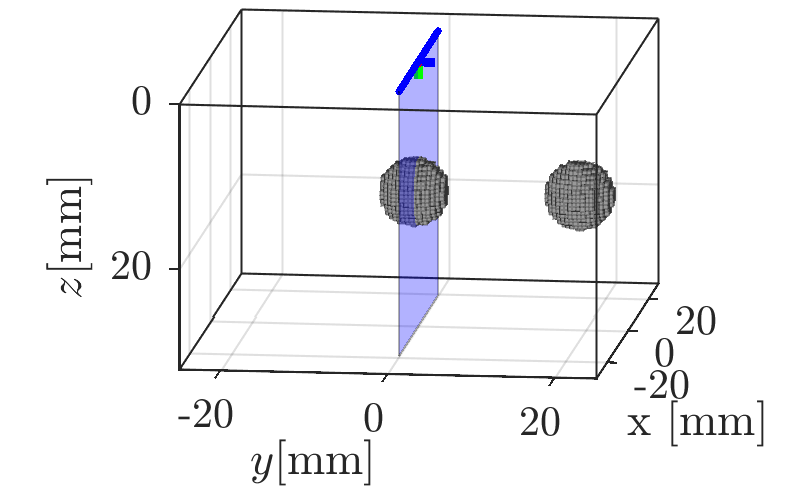}
\else
\includegraphics[width=0.9\textwidth, trim=0px 0px 0px 0px, clip]{two_spheres_3Da.png}
\fi
\caption[3D volume with spherical absorbers]{\label{fig:two_spheres_3d_uffc} 
3D volume with spherical absorbers. 
A linear array (blue line) 
and the imaging plane (light blue) are shown.
 }
\end{figure}

\section{Discussion}
\label{sec:discussion}

\begin{table}[h]
\caption{Computational complexity of simulation models.}
\label{tab:comp_complex}
\centering
%\begin{adjustbox}{width=\columnwidth}
%\begin{adjustbox}{width=0.5\textwidth}
\begin{tabular}{cc}
%\SingleSpacing
\hline
\textbf{Method} & \textbf{Computational Complexity} 
\\
\hline\hline
\centering
Non-separable & $\mathcal{O}\big( m_x m_y  n_x n_y n_z k_a\big)$ 
\\
\centering
Sep. Exact (Prop.~\ref{prop:s1}~\&~\ref{prop:sep_conv}) &  
$\mathcal{O}\big( m_y m_x  (n_y n_z+ n_x n_z)  k_b\big)$ 
\\
\centering
Sep. Approx. (Prop.~\ref{prop:s4}) & 
$\mathcal{O}\big(m_y  m_x (n_y n_z+ n_x n_z + k_c) \big)$ 
\\
\centering
Mixed-domain  & 
$\mathcal{O}\big(m_y n_z (m_x n_y  k_d +   n_x \log(n_x n_z))\big)$\\
\hline
\end{tabular}
%\end{adjustbox}

\vspace{5px}

\caption{Symbol names for complexity analysis}
\label{tab:comp_complex_notation}
\centering
%\begin{adjustbox}{width=0.8\columnwidth}
%\begin{adjustbox}{width=0.5\textwidth}
\begin{tabular}{cc}
\hline
{\centering \textbf{Symbol}} &  
{\centering \textbf{Description}} \\
\hline\hline
$n_x$, $n_y$, $n_z$& number of voxels in $x$-, $y$- and $z$-axis \\
$m_x$& number of transducers in linear-array \\
$m_{x'}$& number of intermediate positions   \\
$m_y$& number of frames (or transducer rows)  \\
$k_a$, $k_b$, $k_d$ & spatial samples per output sample\\
$k_c$ & temporal samples per output sample \\
\hline
\end{tabular}
\vspace{10px}
%\begin{adjustbox}{width=\columnwidth}
{
}
%\end{adjustbox}
\end{table}

As mentioned earlier, it may be less efficient to perform spatial convolutions per the separable method of Proposition~\ref{prop:sep_conv}, compared to the separable far-field approximation described in Proposition~\ref{prop:s4}. 
This is because in modern programmable computing architectures, the time required to perform arithmetic operations may be dwarfed by the time required for non-cached memory accesses. This means that copying data from one memory bank to another, which requires little or no arithmetic processing, could take nearly as much processing time as a well implemented but extensive mathematical transformation performed on the same data.
 For computing a convolution with an aperture function, by including a step of forming a temporary intermediate output, or by having to fetch additional samples from memory, this can be less efficient than grouping a few extra arithmetic operations within the separable cascade.
When speed is more critical than accuracy, a far-field approximation such as the one proposed in  Proposition~\ref{prop:s4} can improve throughput. 
  However, if higher accuracy is required, then Proposition~\ref{prop:sep_conv}
may be preferred over Proposition~\ref{prop:s4}. Nevertheless, even when Proposition~\ref{prop:sep_conv} is used, the computation is still an order-of-magnitude reduced compared to a non-separable implementation. 

An analysis of the computational complexity for each approach is summarized in Table~\ref{tab:comp_complex}. 
In Table~\ref{tab:comp_complex_notation}, a description of the variables used in this analysis is provided.
In the 3D grid, the number of voxels in the $x$, $y$ and $z$ directions are $n_x$, $n_y$ and $n_z$. 
The total number of voxels is $N=n_x \times n_y \times n_z$.
The number of frames that are acquired by the linear array is $m_y$
(we assume that there is one frame acquired per laser pulse). 
Since each frame uses all transducer elements in the linear-array to record data,  
 $m_y$ would also correspond to the number of rows in a  matrix-array transducer, if each frame is treated as a row.  
The number of transducers in the linear array is $m_x$. To group these, we have $M=m_x \times m_y$.
In future work\cite{zalev2020convex}, we describe the case where multiple frames and multiple optical wavelengths are used. 

For a non-separable approach, the complexity is 
$\mathcal{O}( m_x m_y  n_x n_y n_z k_a)=\mathcal{O}( M N k_a)$. 
The parameter $k_a$ is the ratio of spatial samples (voxels) that must be processed per time-domain sample per transducer channel. 
In other words, each voxel maps onto $k_a$ time-domain samples per channel. 
Various implementation design decisions effect $k_a$. 

For a separable approach, the computational complexity is around $\mathcal{O}( M N^{\frac 2 3} k)$, where $k$ represents 
an implementation dependent parameter (either $k_a$, $k_b$, $k_c$ or $k_d$, as described below). 
The complexity is therefore reduced by an order-of-magnitude compared to a non-separable approach. 
For example, consider a volume of $100\times 100\times100$ voxels. Here, $N=100^3$ and $N^{\frac 2 3}=100^2$. This means that for the separable case there are 100 times fewer operations than with a non-separable approach. 
In this case, a 100 times speed-up would easily be achievable in practice, and the speed-up would be higher for larger volumes.

The parameter $k_a$ is  proportionate to the number of samples required to model the transducer aperture geometry.  
It is also proportionate to the ratio $\frac{f_s}{f_c}$, 
where $f_c=\frac{c_0}{\Delta x}$ is the critical sampling rate (described above), 
$f_s=\frac{n_s}{t_\text{max}}$ is the sampling rate used in simulation, 
$n_s$ is number of time-domain samples, and $t_\text{max}$ is the sampling duration. 
It is assumed that sampling occurs at the critical rate $f_s=f_c$. 
For a point aperture, $k_a$ is assumed to equal 1.

For the separable approach, during the first application of the operator $\scrG$, which is an intermediate step, 
the number of output channels is not 
the number of transducers $m_x$ (this matches after the second application of $\scrG$). The number of output positions for the intermediate step is called $m_{x'}$. It is recommended that $m_{x'}$ should be sufficiently greater than $m_x$ to avoid boundary problems, so that the first and final transducers have enough adjacent data for accurate simulation. Using $m_{x'}=3 m_{x}$ will supply additional padding equal to the length of the array. 
For Propositions~\ref{prop:s1}~or~\ref{prop:sep_conv},  the complexity is $\mathcal{O}\big( m_y  (m_{x'}n_x n_z+ m_x n_y n_z)  k_b\big)$. 
Generally,  $m_x$ can be assumed to be a constant multiple of $ m_{x'}$, so they are the same for complexity purposes.
The parameter $k_b$ is similar to $k_a$ although its implementation may be different. 
For a point aperture, 
Proposition~\ref{prop:sep_conv}
reduces to 
Proposition~\ref{prop:s1} and $k_b=1$. 
In our implementation of Proposition~\ref{prop:sep_conv}, 
the convolution $\scrG \circledast f$ is implemented separate from the GPU computation of $\scrG$, which 
adds computational overhead with complexity $\mathcal{O}\big(n_x n_y n_z k_b m_y\big)$ that becomes dominant in large grid sizes. 
This is one reason why Proposition~\ref{prop:s4} performs better in Table~\ref{tab:sep_timings} for large grid sizes.

When the far-field approximation of Proposition~\ref{prop:s4} is used, 
spatial-domain filtering operations are moved to time-domain operations. 
In this case, $k_b$ would have a constant value of 2, and a new parameter $k_c$ is introduced corresponding to additional time-domain sampling operations. Thus, if $k_c$ is only proportional to the number of output channels, the complexity is changed from 
$\mathcal{O}\big( m_y m_x  (n_x n_z+ n_y n_z)  k_b\big)$ to 
$\mathcal{O}\big(m_y  m_x (n_x n_z+ n_y n_z + k_c) \big)$. This is another reason why  
Proposition~\ref{prop:s4} can be made faster than Proposition~\ref{prop:sep_conv}.  
Both reduce to approximately the same complexity for Proposition~\ref{prop:s1}, where $k_b=k_c=1$. 
If it is desired to include the obliquity factor according to Proposition~\ref{prop:s3}, 
since this consists of time-domain operations, it 
corresponds to a mixture of the $k_b$ and $k_c$ parameters, depending on implementation, and the complexity analysis is similar.

Table~\ref{tab:sep_timings} illustrates run-times measured using our implementation. 
In Table~\ref{tab:sep_timings}, 
the run-time for $\scrG$ varies in proportion to grid size because the intermediate data is critically sampled relative to the 3D volume.
However, 
in Table~\ref{tab:sep_timings}, 
the run-time of $\tilde \scrG$ does not change significantly (within each method) with grid size because 
all tests used a constant $n_s=1024$. 
If $n_s$ is increased with grid size, the run-time for $\tilde \scrG$ scales in a similar manner to $\scrG$. 
For large grid sizes, the run-times for $\tilde \scrG$ and $\scrG$ are significantly faster compared to the computational overhead. 
In applications such as image reconstruction, where $\tilde \scrG$ and $\scrG$ (and their adjoints) must be repeated multiple times, 
the the computational overhead becomes less significant because portions of it are performed only once.

Separable computation can also be performed by separable frequency-domain approaches\cite{jakubowicz1989two}. In the frequency-domain, this is computationally efficient when transducers are aligned on a grid. This is because the Fast Fourier Transform (FFT) uses spatial sampling at regular intervals. The computational complexity of performing a 2D-FFT for an image of size $n_y \times n_z$ is $\mathcal O (n_y n_z \log n_y n_z)$. 
To perform separable simulation in the frequency-domain, this involves using a pair of two-dimensional cascaded operators (analogous to the cascaded time-domain operators). 
In each cascaded 2D frequency-domain operation, the space of 2D Fourier data is warped using a 
non-linear transformation. This can be implemented as interpolation and re-sampling operations, so this does not increase computational complexity, aside from 
determining 
coordinate indices of a non-linear function. 
Convolution with an aperture function, 
can  be performed in the frequency-domain 
 using ordinary multiplication. An inverse 2D-FFT applied to the warped data generates the final output for $\scrG$. 

The frequency-domain output corresponds to the same 
 2D data 
  expected by 
   $\scrG$ (see Figure~\ref{fig.probe_geom_pll}b). 
Consequently, a separable time-domain factor can be replaced with its frequency-domain equivalent. 
The advantage to this replacement is that the frequency-domain version is able to compute data for all transducer positions simultaneously, using the $\mathcal O(n\log n) $ FFT complexity. 

However, when a linear-array undergoes motion along an arbitrary trajectory, the
acquisitions (each coinciding with a laser pulse) occur when the probe has non-uniformly spaced position and orientation. 
 This means that using a frequency-domain version of $\scrG$ 
 loses its efficiency, since the planes in Figure~\ref{fig.probe_geom_pll}a would not line up, unless the probe moves uniformly. 
The lowest computational complexity would arise when a mixed-domain approach is used, 
where the first application of $\scrG$ uses the time-domain approach, and the second application of $\scrG$ uses the frequency-domain. The computational complexity for this is   
$\mathcal{O}\big(m_y n_z (m_x n_y  k_d +  k_e n_x \log(n_x n_z))\big)$, where $k_d$ and $k_e$ are implementation dependent parameters. 
Moreover, convolution of the transducer aperture function can  be applied efficiently in the frequency domain using multiplication. 
In this paper, only the mathematical details  for time-domain based operators was discussed. 
Details for comparing the output of time-domain to frequency-domain (and mixed-domain) simulations is the subject of future work. 

Additional considerations can permit modeling of an acoustic lens or cylindrical transducers. 
For example, the contour path illustrated in Figure~\ref{fig.arcpath} can be modified to include temporal delays that approximate such effects. In this manner, the geometry for a circularly cylindrical transducer can be modeled using separable operators similar to the approach of Proposition~\ref{prop:sep_conv}, which is a subject of further study.

\section{Conclusion}
\label{sec:concl}

A fast approach 
to simulate
  opto-acoustic signal generation
in a linear array of 
 rectangular transducer elements was described. 
Volume integration using a Green's function solution of the acoustic wave equation 
was
  shown to be separable 
 into a pair of operators that act along perpendicular directions,  
which permits fast algorithms for computing opto-acoustic response. 
The computational complexity for time-domain simulation of a volume with $N=n_x \times n_y \times n_z$ voxels, as recorded by $M=m_x \times m_y$ transducers, is reduced from
 $\mathcal{O}(MNk^2)$, in the case where a non-separable approach is used, to $\mathcal{O}(m_x m_y k(n_x n_z + n_y n_z))\approx\mathcal{O}(MN^\frac{2}{3}k)$, where $k$  depends on transducer geometry. 
 A custom GPU-based implementation was used to demonstrate this technique.

\begin{subappendices}
\ifdefined\SHORTTEX
\else
\bgroup
%\numberwithin{equation}{section}
%\setcounter{equation}{0}
%\renewcommand{\theequation}{\arabic{chapter}.\Alph{section}.\arabic{equation}}
\fi

\section{Appendix}
\subsection{Proof of Proposition 3.1}
\label{sec:proof1}

\begin{proof}
First, we examine the Green's function %$g(\bfx, t)$
from Equation~\eqref{eq:g_defn}, which is
\begin{align*}
&g(\bfx, t) 
=
\frac{1}{4\pi \norm\bfx}
%\frac{1}{t}
\delta\!\left(
t-
\norm{
\bfx}
\right)
=
\frac{1}{4\pi t}
%\frac{1}{t}
\delta\!\left(
t-
\norm{
\bfx}
\right)
.
\end{align*}
It contains  $\delta\!\left(
t-
\norm{
\bfx}
\right)$, which can be factored according to the sifting property of the delta function%
\footnote{
The sifting property is
$\int_{-\infty}^{\infty} \delta(x-y) f(x) \,\d x = f(y)$.
}.  
Since $\norm\bfx^2 = x_1^2+(x_2^2+x_3^2)$, 
it follows that
\begin{align}
\delta\!\left(
t-
\norm{
\bfx}
\right) 
=
\int_{-\infty}^{\infty} 
%\frac{1}{t}\,
%\tilde g
\delta
\!\left(t-
%\sqrt{x_1^2+\tau^2}
r
\right)
\delta
\!\left(\tau-
\rho
%\sqrt{x_2^2+x_3^2}
\right)
\d \tau,
\end{align}
where 
$r = \sqrt{x_1^2+\tau^2}$
and
$\rho = \sqrt{x_2^2+x_3^2}$.
Therefore, 
\begin{align}
h(\bfx, t) &=
\frac{\partial}{\partial t}g(\bfx,t) 
\nonumber
\\&=
%\frac{1}{4\pi \norm\bfx} \frac{\partial}{\partial t} 
\frac{1}{4\pi } \frac{\partial}{\partial t} 
\left(
\frac{1}{t}\!
\int 
%\tilde g
\delta
\!\left(t-
%\sqrt{x_1^2+\tau^2}
r
\right)
\delta
\!\left(\tau-
\rho
%\sqrt{x_2^2+x_3^2}
\right)
\d \tau
\right).
\label{eq:prop1_hconv}
\end{align}

From Equation~\eqref{eq:p_eq_h_conv_psi},
the pressure $p(\bfx, t)$ measured from an acoustic source distribution $\psi(\bfx)$ is
\begin{align*}
\scrH\{\psi\}(\bfx, t) &=
h(\bfx, t) \conv{\bfx} \psi(\bfx)
\\&
= \int h(\bfx',t) \psi(\bfx - \bfx') \d \bfx'.
%\\
%&=
%\int
%\frac{1}{4\pi c}\frac{\partial}{\partial t} 
%\int 
%\frac{1}{t}\,
%%\tilde g
%\delta
%\!\left(t-
%%\sqrt{x_1^2+\tau^2}
%r
%\right)
%\delta
%\!\left(\tau-
%\rho
%%\sqrt{x_2^2+x_3^2}
%\right)
%\psi(\bfx -\bfx') 
%\,
%\d \tau
%\d \bfx'
%%\\
%%&=
%%\int
%%
\end{align*}
By substituting \eqref{eq:prop1_hconv}, 
this is equal to 
\begin{align}
%&\iint  
%\frac{1}{t}\,
%%\tilde g
%\delta
%\!\left(t-
%%\sqrt{x_1^2+\tau^2}
%r
%\right)
%\delta
%\!\left(\tau-
%\rho
%%\sqrt{x_2^2+x_3^2}
%\right)
%\psi(\bfx -\bfx') 
%\,
%\d \tau
%\d \bfx'
%\\
%\frac{1}{4\pi \norm\bfx} \frac{\partial}{\partial t} 
\frac{1}{4\pi} \frac{\partial}{\partial t} 
%\iiiint 
\iint%_{\R\times \R^3}
\frac{1}{t}\,
%\tilde g
\delta
\!\left(t-
%\sqrt{x_1^2+\tau^2}
r'
\right)
\delta
\!\left(\tau-
\rho'
%\sqrt{x_2^2+x_3^2}
\right)
\psi(\bfx -\bfx') 
\,
\d \tau
\d\bfx',
%\d x_1' \d x_2' \d x_3' ,
\label{eq:split_delta}
\end{align}
where $r'=\sqrt{x_1'^2+\tau^2}$ and $\rho'=\sqrt{x_2'^2+x_3'^2}$.
Since $\d\bfx'=\d x_1' \d x_2' \d x_3'$, 
by replacing $\d{x_2'}\d{x_3'}$ with 
$\rho' \d{\rho'} \d{\theta'}$, where $\theta'\in[0,2\pi]$, 
the inner integral can be written in polar coordinates. Equation~\eqref{eq:split_delta} becomes
\begin{align*}
%\frac{1}{4\pi t} 
\frac{\partial}{\partial t} 
%\overbrace
{
\iint \!\!
\frac{
\delta
\!\left(t-
r'
\right)
}
{4\pi t}
\!\!
\underbrace{
%\iint_0^{2\pi}\!
\int_0^{2\pi}\!\!\!\!\int_0^\infty\!\!
\delta
\!\left(\tau-
\rho'
\right)
\psi(\bfx -\bfx') 
\rho' 
\,
\d \rho' \d \theta'
}_{
%\sigma(\bfx, \tau;x_1')
%\sigma(\bfxi)
%\scrI
\sigma(x_1-x_1',x_2,\tau)
%\scrI\{\psi\}
%\sigma
%\sigma\left(\bfx
%-\small\begin{bmatrix}
%x_1'\\0\\0
%\end{bmatrix}
%,\tau
%%;x_1'
%\right)
}
\d \tau
\d x_1'
}%^{hi2}
.
\end{align*}
Here, the inner integral is grouped as a function that depends on $\tau$, $\bfx$ and $\bfx'$.
In the inner integral, $\rho'$ can be eliminated (and replaced with $\tau$) by applying the sifting property to 
$\delta(\tau - \rho')$.  
Since $x_2'=\rho'\cos\theta'$ and $x_3'=\rho'\sin\theta'$, 
 the inner integral becomes
\begin{align}
\label{eq:sigma_def1}
%\sigma(\bfx, \tau;x_1')
%=
%&=
%&\sigma(x_1, x_2, \tau)=
%\\
\tau
\!
\int_0^{2\pi}
\!
\psi\!\left(
\bfx-
\begin{bmatrix}
%x_1-x_1'\\
0\\
\tau \cos\theta'\\
\tau \sin\theta'
\end{bmatrix}
-
\begin{bmatrix}
x_1'\\
0\\
0
\end{bmatrix}
\right)
%x_1-x_1',
%%\\ &\qquad\quad
% x_2-\rho \cos\theta,
% 0-\rho \sin\theta ) 
%\rho 
\,
 \d \theta'.
%\\
%&s(x_1, x_2, t)=
%\\
%&\int
%\sigma(x_1-r \cos\theta,
%%\\ &\qquad\quad
% x_2,
% 0-r \sin\theta ) 
%r 
%\,
%\d r \d \theta
\end{align}
To reduce the number of variables that this expression depends on, the coordinate
$x_3$ is set to $0$, so the transducer is constrained to the $x_1x_2$-plane. 
We define $\sigma$ as
\begin{align}
\label{eq:prop1_sigma}
\sigma(x_1,x_2, \tau)=
\tau
\!
\int_0^{2\pi}
\!
\psi\!\left(
\begin{bmatrix}
x_1\\
x_2\\
0
\end{bmatrix}
-
\begin{bmatrix}
0\\
\tau \cos\theta\\
\tau \sin\theta
\end{bmatrix}
\right)
%x_1-x_1',
%%\\ &\qquad\quad
% x_2-\rho \cos\theta,
% 0-\rho \sin\theta ) 
%\rho 
\,
 \d \theta.
%\\
%&s(x_1, x_2, t)=
%\\
%&\int
%\sigma(x_1-r \cos\theta,
%%\\ &\qquad\quad
% x_2,
% 0-r \sin\theta ) 
%r 
%\,
%\d r \d \theta
\end{align}

The overall expression (with $x_3 = 0$) becomes
\begin{align*}
%\frac{1}{4\pi t} 
\frac{1}{4\pi} 
\frac{\partial}{\partial t} 
%\underbrace
{
\iint \!
\frac{
\delta
\!\left(t-
r'
\right)
}{ t} 
%\sigma(\bfx,\tau;x_1')
\sigma(x_1-x_1',x_2, \tau)
\d \tau
\d x_1'
}%_{\tilde s(\bfx,t)}
.
\end{align*}
Since $r'=\sqrt{x_1'^2+\tau^2}$, this can be written in polar coordinates 
where 
$x_1'=r'\cos\theta$,
$\tau=r'\sin\theta$, 
and $\theta \in [0,2\pi]$. 
By replacing
$\d \tau \d x_1'$ with $r'\d{r'}\d \theta$, and 
using the sifting property on $\delta(t-r')$ to eliminate $r'$, 
 the  expression becomes
\begin{align}
&\frac{1}{4\pi} \frac{\partial}{\partial t} 
%\underbrace
\frac{1}{t}
{
\iint \!
{
\delta
\!\left(t-
r'
\right)
}
\sigma(x_1-r' \cos \theta,x_2,r' \sin \theta)
r'
\d r'
\d \theta
}%_{\tilde s(\bfx,t)}
\nonumber
\\
&=
\frac{1}{4\pi } \frac{\partial}{\partial t} 
%\left(
\frac{1}{t}
\underbrace
{
t \!
\int \!
\!
\sigma(x_1-t \cos \theta,x_2,t \sin \theta)
\d \theta
}_{\tilde s(x_1, x_2,t)}
%\right)
\nonumber
\\
&=
\frac{1}{4\pi } \frac{\partial}{\partial t} 
\frac{
\tilde s(x_1, x_2,t)}
{t},
\end{align}
where
\begin{align}
\label{eq:s_tilde_def1}
\tilde s(x_1, x_2, t) =
t
\!
\int
\!
\sigma\!\left(
\begin{bmatrix}
x_1\\
x_2\\
0
\end{bmatrix}
-
\begin{bmatrix}
t \cos\theta\\
0\\
t \sin\theta
\end{bmatrix}
\right)
\d \theta.
\end{align}
Here, the three arguments applied to $\sigma$ are arranged as a column vector to highlight the similarity with 
\eqref{eq:prop1_sigma}. 

Let the operator $\scrG^{A,\bfb}$ be defined as 
\begin{align}
\scrG^{A,\bfb}\{\psi\}(u_1, u_2, t)
= \scrG\{\scrT^{-1}_{A,\bfb}\{\psi\}\}(u_1, u_2, t), 
\end{align}
where 
\begin{align}
\label{eq:scrG_def}
\scrG\{\psi\}(u_1,u_2, t)
= t \! \int_{0}^{2\pi} \! \psi \left( 
\begin{bmatrix}
u_1\\u_2-t \cos \theta \\ -t \sin \theta
\end{bmatrix}
 \right)
 \d \theta,  	
\end{align}
and 
\begin{align*}
\scrT^{-1}_{A,\bfb}\{\psi\}(\bfu) = \psi(A \bfu + \bfb).
\end{align*}

Then $\scrG^{A,\bfb}\{\psi\}(u_1, u_2, t)$ is equal to
\begin{align}
t \! \int_{0}^{2\pi} \! \psi \left( 
A
\begin{bmatrix}
u_1\\u_2-t \cos \theta \\ -t \sin \theta
\end{bmatrix}
+
\bfb
%\begin{bmatrix}
%b_1\\b_2\\b_3
%\end{bmatrix} 
 \right)
 \d \theta.  	
\end{align}

Using Equation~\eqref{eq:prop1_sigma}, 
 $\sigma(u_1, u_2, \tau)$ can be written as
\begin{align*}
\scrG^{A, \bfb}\{\psi\}(u_1, u_2, \tau), 
\end{align*}
when $A=I$ and $\bfb=(0, 0, 0)$. 
Also, from \eqref{eq:s_tilde_def1}, $\tilde s(u_1, u_2, t)$ can be written as
\begin{align*}
\scrG^{R, \bfzero}\{\sigma\}(u_2, -u_1, t), 
\end{align*}
with
the order of arguments $u_1$ and $u_2$  reversed, 
and, using \eqref{eq:R3_yaw}), 
$$R=R_3(90^\circ)=\begin{bmatrix}
0&-1&0\\
1&0&0\\
0&0&1
\end{bmatrix} . 
$$
In this equation,
the rotation matrix $R_3(90^\circ)$ switches the ordering of the 
first two elements in a vector that is applied to it, with a sign change. 
In particular, 
\begin{align*}
R_3(90^\circ)\begin{bmatrix}
u_2\\-u_1-t\cos\theta\\ -t \sin\theta
\end{bmatrix}
=
\begin{bmatrix}
u_1+t\cos\theta\\u_2\\-t \sin\theta
\end{bmatrix}.
\end{align*}
Here, due to symmetry
of integration over $\theta$, the sign change has no impact.

Therefore, the operator $\scrH^{A,\bfb}\{\psi\}(u_1,u_2, t)$ can be written
\begin{align*}
%\scrH^{A,\bfb}\{\psi\}(u_1,u_2, t) &=
\frac{1}{4\pi } \frac{\partial}{\partial t} 
\Big(
\frac{1}{t}\,
\scrG^{R,\bfzero} \{  
\scrG^{A,\bfb} \{  
\psi
\}\}(u_2,-u_1,t)
\Big).
\end{align*}
This can be written more compactly as
\begin{align*}
\scrH^{A,\bfb} %\{\psi\}(u_1,u_2, t) 
&=
%\frac{1}{4\pi \norm\bfx} \frac{\partial}{\partial t} 
%\Big(
%[%\scrD\circ\scrG^{R,\bfzero} 
\tilde\scrG
\circ 
\scrG^{A,\bfb}
%] 
%\{  \psi\}(u_1, u_2,t)
%\Big).
,
\end{align*}
where
\begin{align*}
\tilde\scrG \{\psi\}(u_1, u_2, t) 
=
%\frac{1}{4\pi \norm\bfx} \frac{\partial}{\partial t}
\frac{1}{4\pi } \frac{\partial}{\partial t}
\!
\left(\frac{1}{t}\,
\scrG^{R,\bfzero}
\{\psi\}(u_2, -u_1, t)
\right)
.
\end{align*}
This completes the proof.
\end{proof}

\subsection{Proof of Proposition 3.2}
\label{sec:proof2}

\begin{proof}
This proof follows from expanding the convolution integral 
from Proposition~\ref{prop:s1}, 
when multiplicatively separable aperture $f$ is included.
We illustrate this with an expanded version of
Equation \eqref{eq:split_delta}  
when written as a convolution. 

First we observe that for an arbitrary aperture function $f(\bfu)$, 
the operator $\scrH_f^{A,\bfb}\{\psi\}(\bfu,t)$ can be written as a convolution 
\begin{align*}
\scrH_f^{A,\bfb}\{\psi\}(\bfu) &= f(\bfu)\conv{\bfu} 
\left(
\scrH^{A,\bfb}\{\psi\}(\bfu,t) \right)
\\&=
\scrH\{f \conv{} \scrT_{A,\bfb}^{-1}\{\psi\} \}(\bfu,t) . 
\end{align*}
This can be seen from Equation~\eqref{eq:scrH_bfx_t_def}, 
where by definition
\begin{align*}
\scrH_f\{\psi\}(\bfx) &=
\underbrace{
h(\bfx,t)\conv{\bfx}f(\bfx)}_{h_f(\bfx,t)}\conv{\bfx}\,\psi(\bfx)\\
&= h_f(\bfx,t)\conv{\bfx}\psi(\bfx).
\end{align*}
In the context of separability, we already know from Proposition~\ref{prop:s1} that 
the impulse response $\scrH$ is separable. 
Therefore, if we define 
\begin{align*}
\psi_f^{A,\bfb}(\bfx) = \psi^{A,\bfb}(\bfx) \conv{\bfx} f(\bfx), 
\end{align*}
then $\scrH\{\psi_f^{A,\bfb}\}$ can also be computed by separable cascade per Proposition~\ref{prop:s1}.

We now need to show that $f_1$ and $f_2$ can be paired with each of the cascaded operators. 
Since $f(\bfu) = f_1(u_1) f_2(u_2)\delta(u_3)$, 
it follows that
\begin{align}
\label{eq:psi_AB_conv_bfu_f}
\psi^{A,\bfb}(\bfu) \conv{\bfu} f(\bfu)
=
\psi^{A,\bfb}(\bfu) \conv{u_1} f_1(u_1) \conv{u_2} f_2(u_2).  
\end{align}
Here, we have used the relationship  $f_1(u_1)f_2(u_2)=f_1(u_1)\delta(u_2)\conv{u_1, u_2}f_2(u_2)\delta(u_1)$.

The operator
 $\scrG$ can be written as a 2D 
spatial convolution
with a cylindrical kernel (or a 3D spatial convolution with a ring shaped kernel). 
This can be seen 
by manipulating equation 
\eqref{eq:scrG_def}, first
by integrating with $\delta(t-\rho)$ to add the variable $\rho$ (which had been removed earlier by sifting), 
and then substituting $\d{u_2'}\d{u_3'}=\rho\d\theta \d\rho$. 
Thus 
\begin{align}
&\scrG\{\psi\}(u_1,u_2, t)
 \nonumber
= t \! \int_{0}^{2\pi} \! \psi \left( 
\begin{bmatrix}
u_1\\u_2-t \cos \theta \\ -t \sin \theta
\end{bmatrix}
 \right)
 \d \theta 
  \nonumber 	
\\&= 
 \rho \! \int_{0}^{\infty} \!\! \int_{0}^{2\pi} \! 
\delta(t-\rho)\, 
 \psi \left( 
\begin{bmatrix}
u_1\\u_2-\rho \cos \theta \\ -\rho \sin \theta
\end{bmatrix}
 \right)
 \d \theta \d \rho 	
  \nonumber
 \\&=
 \iint 
 \delta\!\left(t-\sqrt{u_2'^2+u_3'^2}\right)
 \psi\! \left( 
\begin{bmatrix}
u_1\\u_2-u_2' \\ 0 -u_3'
\end{bmatrix}
 \right) 
 \d{u_2'}\d{u_3'}
 \nonumber
\\&=
\left[
 \delta\!\left(t-\sqrt{u_2^2+u_3^2}\right)
 \conv{u_2, u_3}
 \psi(u_1, u_2, u_3) \right]_{u_3=0} 
 \nonumber
\\&=
\label{eq:G_eq_delta_conv}
%\begin{align}
\left[
\left(
 \delta(u_1)\delta\!\left(t-\norm\bfu\right)\right)
 \conv{\bfu}
 \psi(\bfu) \right]_{u_3=0}
.
\end{align}

With similar form to equation \eqref{eq:split_delta}, 
we can write $ \scrG^{R,\bfzero} \big\{ \scrG \{ \psi_f^{A,\bfb}(\bfu) \} \big\}(u_1, u_2, t)$ as
\begin{align*}
\left[
\delta\!\left(t-\norm{\begin{bmatrix}
u_1\\ \tau
\end{bmatrix}}\right)
\conv{u_1, \tau}
\delta\!\left(\tau-\norm{\begin{bmatrix}
u_2\\u_3
\end{bmatrix}}\right)
\conv{u_2, u_3}
\psi_f^{A,\bfb}(\bfu)
\right]_{u_3=0}. 
\end{align*}
Rearranging, and using Equation~\eqref{eq:psi_AB_conv_bfu_f}, this is equal to
\begin{align*}
&\left[
\left(
f_1(u_1)\conv{u_1}
\delta\!\left(t-\norm{\begin{bmatrix}
u_1\\ \tau
\end{bmatrix}}\right)
\right)
\conv{u_1, \tau}
\right.
\ifdefined\SHORTTEX
%\\&\qquad
\eqbreak
\else
%%%%%%%%%%%%%%%%%%\\&\qquad
\fi
\left.
\left(
f_2(u_2)\conv{u_2}
\delta\!\left(\tau-\norm{\begin{bmatrix}
u_2\\u_3
\end{bmatrix}}\right)
\conv{u_2, u_3}
\psi^{A,\bfb}(\bfu)
\right)
\right]_{u_3=0}. 
\end{align*}
Thus, 
\begin{align*}
\scrG^{R,\bfzero} \big\{ \scrG \{ \psi_f^{A,\bfb}(\bfu) \} \big\}
=
\scrG_{f_1}^{R,\bfzero} \big\{ \scrG_{f_2} \{ \psi^{A,\bfb}(\bfu) \} \big\}. 
\end{align*}
Since $\tilde\scrG= \frac{1}{4\pi}\frac{\partial}{\partial t}\frac{1}{t} \scrG^{R,\bfzero}$,
which can be performed independently from the convolution, 
this yields $\tilde\scrH_f^{A, \bfb}$ and
the proof is complete.
\end{proof}

\subsection{Proof of Proposition 3.3}
\label{sec:proof3}

\begin{proof}
This proof breaks down into showing 
the following three cases individually, and their combination yields the full 
$\scrH_f^{A,\bfb}$: 
\begin{itemize}
\item[i.)]
$\bar\scrH^{A,\bfb}\{\psi\}(u_1, u_2, t)=[\bar h(\bfu)\conv{\bfu}\psi(T_{A,\bfb}(\bfu)]_{u_3=0}$
\item[ii.)]
$\bar\scrH_f\{\psi\}(u_1, u_2, t)=\bar\scrH\{\psi\}(u_1, u_2, t)\conv{u_1,u_2}f(u_1, u_2)$
\item[iii.)]
$
\bar\scrH\{ \psi\}(u_1,u_2,t)
=\frac{1}{t} \scrH\{\tilde \psi\}(u_1,u_2,t)
$
\end{itemize}

Case i.) This follows from the proofs of Sections~\ref{sec:separable_A} and
\ref{sec:separable_B}, where the reasoning is unchanged by the presence of the obliquity factor. 
Thus the separable cascade remains unchanged by the presence of rotation and translation. 

Case ii.) This follows from the properties of convolution, when the aperture function $f$ is included. 
Since the aperture is flat, the 3D convolution reduces to a 2D convolution. 
Since $\bar\scrH_f\{\psi\}(u_1, u_2, t)=\bar h_f(\bfu, t)\conv{\bfu}\psi(\bfu)$, we only need to confirm that $\bar h_f(\bfu,t)$
remains separable when aperture $f$ is present. 
Accordingly, 
\begin{align*}
\bar h_f(\bfu,t)&=
\bar h(\bfu, t)\conv{\bfu}f(\bfu) 
\\&=
\bar h(\bfu, t)\conv{\bfu}f(u_1, u_2)\delta(u_3) 
\\&=
\bar h(\bfu, t)\conv{u_1,u_2}f(u_1, u_2).
\end{align*}
This implies that when the obliquity factor is included, proper convolution with the aperture function is preserved.

Case iii.) In this case, the simplification also arises because the transducer is flat. 
For a general rotated frame, $\hat\bfn$ can be any unit vector. The operator  $\bar\scrH$, as defined in  Equation~\eqref{eq:bar_h_f_defn},  can be written of a sum of three components, 
resulting from the dot product in $\bfx \cdot \hat\bfn$.  Thus,
\begin{align*}
\bar\scrH \{\psi\}(\bfx, t) &=
\big[h(\bfx, t) \alpha(\bfx)\big] \conv{\bfx} \psi(\bfx)
\\&=
\left[h(\bfx,t) \frac{\bfx \cdot \hat\bfn}{\norm \bfx} \right] \conv{\bfx} \psi(\bfx) 
%\\&=
% \hat\bfn  \cdot \left(\left[h(\bfx,t) \frac{\bfx }{\norm \bfx} \right] \conv{\bfx} \psi(\bfx) \right)
%\\&=
%\sum_i  n_i
%\left(\left[h(\bfx) \frac{x_i }{\norm \bfx} \right] \conv{\bfx} \psi(\bfx)\right)
\\&=
\sum_i  \left(
\left[h(\bfx,t) \alpha_i(\bfx) \right] \conv{\bfx} \psi(\bfx) \right),
\end{align*}
where
$\alpha_i(\bfx)=\frac{x_i n_i}{\norm\bfx}$.
When $\hat\bfn = (0,0,1)$, 
which is the case when signals are measured in the local frame $\Fp$,
then the summation reduces to a single term, and $\alpha(\bfx)=\frac{x_3}{\norm\bfx}$.  

This is always the case for the local coordinate system $\bfu$, 
where the transducer is in the plane $u_3=0$. The aperture function is $f(\bfu)=f(u_1, u_2)\delta(u_3)$. When $f(\bfu)$ is convolved with $\bar h(\bfu)$, the component
$\hat\bfe_3\cdot (\bfu-\bfu')$ will always be equal to $u_3$ due to the factor $\delta(u_3)$.

From the convolution, we have 
\begin{align*}
%s(x_1, x_2, t) &=
\bar\scrH\{\psi\}(u_1, u_2, t) &=
\left[
\bar h(\bfu, t)\conv{\bfu}\psi(\bfu)
\right]_{u_3=0}
\\&=
\left[\left(
h(\bfu, t)\frac{u_3}{\norm\bfu}\right)
\conv{\bfu}\psi(\bfu)
\right]_{u_3=0}
%\left[\left(
%\frac{x_3}{4\pi\norm\bfx^2}\frac{\partial}{\partial t}
%\delta(t-\norm\bfx)
%\right)
%\conv{\bfx}\psi(\bfx)
%\right]_{x_3=0}
\\&=
\left[
\int\!
h(\bfu', t)
\frac{u_3'}{\norm{\bfu'}}
\psi(\bfu-\bfu')
\,\d\bfu
\right]_{u_3=0}
\end{align*}
which (because $u_3=0$) is equal to
\begin{align*}
%\\&=
\int\!
\frac{h(\bfu', t)}{\norm{\bfu'}}
\underbrace{
u_3'
\,\psi\!\left(
\begin{bmatrix}
u_1-u_1'\\
u_2-u_2'\\
-u_3'
\end{bmatrix}
\right)
}_{\tilde \psi(u_1-u_1', u_2-u_2', -u_3')}
 \d\bfu', 
\end{align*}
where 
$\tilde \psi(\bfu)=u_3\psi(\bfu)$.
 When grouped this way, it shows that the expression is a convolution applied to $\tilde \psi$, evaluated at $(u_1, u_2, 0)$. Since the impulse in $h(\bfu', t)$ from \eqref{eq:h_bfx_t_def} occurs at $t=\norm{\bfu'}$, 
 the factor $t$ can be pulled out as a constant. 
  The expression becomes
\begin{align*}
&
\frac{1}{t}
\left[ h(\bfu, t)\conv{\bfu} \tilde\psi(\bfu) \right]_{u_3=0}.
\end{align*}
This implies
\begin{align}
\bar\scrH\{ \psi\}(u_1,u_2,t)
&=\frac{1}{t} \scrH\{\tilde \psi\}(u_1,u_2,t).
\end{align}

This completes the proof. 
\end{proof}

\subsection{Proof of Proposition 3.4}
\label{sec:proof4}

\begin{proof}

For convenience, in this proof we use $xyz$-coordinates with $z=u_1$, $x=u_2$, $y=u_3$.
This avoids some confusion with the $u_1$, $u_2$ and $t$ variables, which would otherwise switch with the first and second application of $\scrG$.  
Since we are only concerned with applying the impulse response on a 2D slice, 
convolution is limited to $x$ and $y$ variables. 

The relationship between $(u_1,u_2,t)$ and $(x,y,z)$ is somewhat nuanced, as indicated by the following substitutions. 
From Equation~\eqref{eq:G_eq_delta_conv}, we can write
\begin{align}
&\scrG_f\{\psi\}(u_1, u_2, t) 
\nonumber
\\&\quad=
\left[
g_a(x,y,t) \conv{x,y}
\psi(z,x,y)\right]_{x=u_2,y=0,z=u_1}, 
\label{eq:gf_psi_xyz_def}
\end{align}
where
\begin{align}
\label{eq:ga_def_eq}
g_a(x,y, t) = f_a(x, y)\conv{x,y}\tilde g(x,y,t),%\delta(t - \sqrt{x^2+y^2}),
\end{align}
and
\begin{align}
\label{eq:g_tilde_def}
\tilde g(x,y, t) = \delta(t - \sqrt{x^2+y^2}).
\end{align}

In order to model a rectangular transducer aperture, we are concerned with a separable cross section of it which is a line segment. 
Let 
$f_a$ 
 be an aperture function 
of width $a$, corresponding to a line segment located at $y=0$ in the 2D $xy$-plane, 
defined by 
\begin{align*}
f_a(x,y)&=\rect\left(\frac{x}{a}\right)\delta(y)
\\&=
\left(\step\left(x+\frac a 2\right) - \step\left(x-\frac a 2\right)\right) \delta(y), 
\end{align*}
where $\step(x)$ is the unit step function.

To simplify the mathematics, first we consider geometry for a semi-infinite ray defined by $$f(x,y)=\step(x)\delta(y).$$ This way, we will be able to form a line segment when needed,  by adding and subtracting two rays shifted by the aperture width $a$ per
\begin{align*}
f_a(x,y)&=f(x+a/2,y)-f(x-a/2,y)
\\
&=(\step(x+a/2)-
\step(x-a/2))\delta(y).
\end{align*}

Therefore, 
the convolution that we are now interested in, which is 
the ray aperture based version of Equation~\eqref{eq:ga_def_eq}, is 
\begin{align*}
%&\varsigma(x,y,t)
%\\
&
f(x,y) \conv{x,y} 
\tilde g(x,y,t)
\\
&=\int_{\R^2}f(x-x',y-y')\tilde g(x',y',t)\,\dx'\dy'
\\
&=\int_{\R^2}\step(x-x')\delta(y-y')\tilde g(x',y',t)\,\dx'\dy'.
\end{align*}
The interpretation is that we have to convolve a ray with $\tilde g(x,y,t)$, which is a circularly symmetric object. 
To do this, we will perform the integration for the convolution in polar coordinates.

In polar coordinates, using $x'=r\cos\theta$ and $y'=r\sin\theta$, this becomes
\begin{align*}
\int_{0}^{2\pi}\!\int_{0}^{\infty}\step(x-r\cos\theta)\delta(y-r\sin\theta)\delta(r-t)\,r\dr\d\theta,
\end{align*}
By using the sifting property%
\footnote{
The sifting property is
$\int_{-\infty}^{\infty} \delta(x-y) f(x) \,\d x = f(y)$.
}
in the variable $r$, the inner integral disappears, and the expression becomes
\begin{align*}
%\frac{1}{4\pi}
t
\int_{0}^{2\pi}\!\step(x-t\cos\theta)\delta(y-t\sin\theta)\,\d\theta.
\end{align*}

Since $\theta$ is not an isolated variable, 
the function $\delta(y-t \sin\theta)$ must be transformed before sifting in the variable $\theta$ can be applied to this integral%
\footnote{
To use the sifting property of the delta function, the coefficient on the variable that is being sifted must be equal to 1 (otherwise using the scaling property of the delta function is necessary). 
This also applies
in more general cases of the form \mbox{$\delta(\bfg(\bfx)-\mathbf{y})$}, where the sifting variable $\bfx$ is found within a function $\bfg(\bfx)$.  
Before sifting can properly be applied,
the inverse $\bfg^{-1}$ must be used 
to isolate $\bfx$,
and the delta function must be scaled. When $\bfg$ is a one-to-one invertible function, this leads to the scaling relationship of the delta function
\begin{align*}
\delta(\bfg(\bfx)-\mathbf{y}))
\equiv
\frac{1}{\abs{\det \bfg'(\bfg^{-1}(\mathbf{y}))}}\delta(\bfx-\bfg^{-1}(\mathbf{y})), 
\end{align*}
where $\bfg'$ is a Jacobian derivative 
(see Equation 2.99 and Section 2.9 of Faris et al.\cite{faris2016shanghai} 
for a rigorous discussion.)}.
To isolate $\theta$, 
we use
the change of variables property\cite{faris2016shanghai} of the delta function
with 
$\theta=\arcsin\left(\frac{y}{t}\right)$.
This
transforms the delta function according to
\begin{align*}
\delta(y-t \sin\theta)\equiv
\frac{1}{\sqrt{t^2-y^2}}
\delta\!\left(\theta-\arcsin\left(\frac{y}{t}\right)\right). 
\end{align*}
Now, sifting in the variable $\theta$ can be applied. 
For a given value of $t$, the term 
$\delta\!\left(\theta-\arcsin\left(\frac{y}{t}\right)\right)$
 is non-zero at two values of $\theta$ on the interval $[0, 2\pi]$. 
This occurs when $\theta=\arcsin(y/t)$, and is valid when $\abs{y}< t$. By trigonometric identity, this means $t\cos\theta=\pm \sqrt{t^2-y^2}$.
 This dual-value 
 causes integration to split into two semi-circular pieces. 
The integral reduces to
\begin{align}
\label{eq:object_solid}
&
%\frac{1}{2}
\frac{t}{\sqrt{t^2-y^2}}
\Big(\step\!\left(x+\sqrt{t^2-y^2}\right)
\,+
\step\!\left(x-\sqrt{t^2-y^2}\right)\Big)
\nonumber
\\
&\qquad \times
\Big(\!\step(y+t)-\step(y-t)\Big)
.
\end{align}
Here, the expression $\step(x+\sqrt{t^2-y^2})$ is a step function that is shifted by an amount equal to the left half of a circle of radius $t$. The expression $\step(x-\sqrt{t^2-y^2})$ is a step function shifted 
for the right
 half of the circle. Since complex values are not allowed, then $\abs{y}$ must be less than or equal to $t$. This is accomplished by including a factor of $(\step(y+t)-\step(y-t))$.

Recall that equation \eqref{eq:object_solid} was developed using the ray aperture $f(x,y)$ to simplify the mathematics. Now, we can consider the line segment aperture $f_a(x,y)$. This means adding and subtracting equation \eqref{eq:object_solid}
when it is shifted by $-\frac a 2$ and $\frac a 2$. The resulting form of
 Equation~\eqref{eq:ga_def_eq} is written 
\begin{align}
\label{eq:ga_full}
&g_a(x,y, t)= \nonumber 
\\&
\nonumber
\frac{t}{\sqrt{t^2-y^2}}
\Big(\!\step(y+t)-\step(y-t)\Big)
\ifdefined\SHORTTEX
%\\&\quad\times
\eqbreak\times
\else
%%%%%%%%\\&\quad\times
\fi
\Bigg[\rect\!\left(\frac{x+\sqrt{t^2-y^2}}{a}\right)
+
\rect\!\left(\frac{x-\sqrt{t^2-y^2}}{a}\right)\Bigg]
%\\&\qquad
. 
\end{align}

Equation \eqref{eq:ga_full} corresponds to a solid object (not an outlined object).
This can be seen because the $\rect$ function has a width equal to $a$.  
Therefore, due to the width of the transducer, the 2D spatial convolution $g_a(x,y,t) \conv{x,y} \psi(x,y)$ is not sparse, 
which leads to some inefficiency%
.
 
 Convolution with an object that is sparse  has only a few non-zero points that 
must be sampled and multiplied.
The dense spatial convolutions can be avoided if there is a time-domain filtering operation that \enquote{fills in} points that have been spatially sampled sparsely.  
As it turns out, the time derivative of an approximation of equation \eqref{eq:ga_full} is sparse spatially, and integration in the time-domain can be used to compute the convolution efficiently.

A far-field approximation can simplify equation \eqref{eq:ga_full} by assuming 
small transducer aperture width $a$ relative to working distances. 
To arrive at this approximation, equation \eqref{eq:ga_full} is written in expanded form as
\begin{align*}
&t % \int_{\R^2}
\int_{x'=-\infty}^{\infty}\!
\int_{y'=-t}^{t}\!
\frac{\psi(x-x', y-y')}{\sqrt{t^2-y'^2}}
\\
&\quad\times
\left[
\!\int_{-\frac a 2}^{\frac a 2}\delta\left(x'-x''+\sqrt{t^2-y'^2}\right)\,\d{x''}\right.
\\
&\quad
\left.
+\!\int_{-\frac a 2}^{\frac a 2}\delta\left(x'-x''-\sqrt{t^2-y'^2}\right)\,\d{x''}
\right]
%\\
%&\quad
%\times
\,\d{x'}\d{y'}.
\end{align*}

By examining the inner integrals above, over the region of integration  $\abs{x''} \ll \frac{a}{2}$.  
From the delta functions, 
 the condition for non-zero that must be met is $(x'-x'')\pm\sqrt{t^2-y'^2}=0$. 
Also, if it is assumed that $\abs{x''} < a \ll \abs{x'}$ (i.e. if ${x''}$ is small), then 
we can use the approximation $x' - x'' \approx x'$. 
This implies that we can also use the approximation $\sqrt{t^2-y'^2} \approx \abs{x'}$ for the term in the denominator, when $x'$ is not too small (i.e. if $x' \ge a $). Otherwise (if $x' \le a$), since the integration with $x''$ is performed over the interval $[-\frac a 2, \frac a 2]$, which has a width of $a$, then the denominator can be replaced with $a$. 
In either case, the denominator no longer has the dependence on $t$ or $y'$, which leads to simplification in the next step. 

Therefore, an approximation for $g_a$ is 
\begin{subequations}
\begin{align}
g_a(x,y, t) \approx
\frac{t}{\gamma(x)}
%\Big(\!\step(y+t)-\step(y-t)\Big)
\rect\!\left(\frac{y}{2t}\right)
%\\&\quad\times
\rect\!\left(\frac{x\pm\sqrt{t^2-y^2}}{a}\right),
\end{align}
where
\begin{align}
\gamma(x) = \begin{cases}
\abs{x}&\text{if $\abs{x} \ge a$},\\
a &\text{otherwise}.
\end{cases}
\end{align}
\end{subequations}
 When used in spatial convolution, the factor $t$ remains constant, and it can be factored out 
and processed independently from the spatial convolution. 
To compute $g_a$, a step of integration and differentiation, which cancel each other out, can be placed between multiplication and division by $t$. This yields the identity
\begin{align}
\label{eq:ga_t_int_diff_t}
g_a(x,y,t)=%\conv{x,y}\psi(x,y) =
t \int_{0}^t \frac{\partial}{\partial t} \left( \frac{1}{t}g_a(x,y,t)\right) \,\dt.
\end{align}

Now we return to considering a ray aperture rather than a line segment. 
For a ray aperture, (i.e., by applying the approximation to Equation \eqref{eq:object_solid}), 
the approximation is
\begin{align}
\label{eq:object_solid_ray}
&
%\frac{1}{2}
\frac{t}{\gamma(x)}
\Big(\step\!\left(x+\sqrt{t^2-y^2}\right)
\,+
\step\!\left(x-\sqrt{t^2-y^2}\right)\Big)
\nonumber
\ifdefined\SHORTTEX
%\\&\qquad \times
\eqbreak \times
\else
%%%%%%%%\\&\qquad \times
\fi
%\times
\Big(\!\step(y+t)-\step(y-t)\Big)
.
\end{align}
Observe that here the factor $\frac{1}{\gamma(x)}$ does not depend on $t$. 
We divide \eqref{eq:object_solid_ray} by $t$, compute its time-derivative, and label the result as $\varsigma(x,y,t)$. This is a sparse object, shown in Figure~\ref{fig.arcpath}a, which is equal to
\begin{align}
%\frac{1}{2}
&\varsigma(x,y,t)=
\nonumber
\\
&\frac{t}{\sqrt{t^2-y^2}}\left(
\delta\left(x+\sqrt{t^2-y^2}\right)
-\delta\left(x-\sqrt{t^2-y^2}\right)\right)
\nonumber
\\&\qquad \times
\frac{1}{\gamma(x)}
\Big(\!\step(y+t)-\step(y-t)\Big)
\nonumber
%\\
&\quad
%+\frac{1}{2}
%\Big(\step\!\left(x+\sqrt{t^2-y^2}\right)
%\,+
%\step\!\left(x-\sqrt{t^2-y^2}\right)\Big)
\\
&\qquad +
\frac{1}{\gamma(x)}
\Big(\!\delta(y-t)+\delta(y-t)\Big)\step(x)
.
\end{align}
For the line segment aperture, this can be used to compute
$\frac{\partial}{\partial t}\left(\frac 1 t \scrG_a\right)$ from $\varsigma(x,y,t)$ as
\begin{align*}
&\frac{\partial}{\partial t}
\left(\frac 1 t \scrG_a\{\psi\}(u_1, u_2, t) \right)
\\
&\qquad=
\left[
h_a(x,y,t)
\conv{x,y}\psi(u_1,x,y)\right]_{x=u_2,y=0}
\end{align*}
where
\begin{align*}
h_a(x,y,t) = \varsigma\left(x+\frac a 2,y,t\right) - 
	\varsigma\left(x-\frac a 2,y,t\right). 
\end{align*}
Writing out the convolution integral, it becomes
\begin{align*}
\int_{\R^2}
h_a(x',y',t)\,\psi(u_1, u_2-x', 0-y')
\dx'\dy'.
\end{align*}
This is integrated over $\dx'\dy'$. However, $h_a$ consists of 6 contours: two positive circular arcs, two negative circular arcs, and two lines. The circular arcs are of the form
\begin{align*}
\frac{1}{\gamma(x\pm a)}
\frac{t}{\sqrt{t^2-y^2}}
\delta\left(x\pm a \pm\sqrt{t^2-y^2}\right)\step(x\pm a)
.  
\end{align*}
This can be integrated in polar coordinates after a change of coordinates. 
Thus
\begin{align*}
&\int_{\R^2}
\frac{t}{\sqrt{t^2-y'^2}}
\delta\left(x' -\sqrt{t^2-y'^2}\right)\step(x')\frac{1}{\gamma(x')}
\ifdefined\SHORTTEX
%\\&\qquad\qquad\qquad
\eqbreak\qquad\qquad\qquad
\else
%%%%%%%%%%%%%%\\&\qquad\qquad\qquad
\fi
\psi(u_1, u_2-a-x', -y')   
\dx'\dy'
\\&=
\int_0^\infty \!\!\int_k^{\pi+k}\!
\delta\left(t - r\right)
\frac{\psi(u_1, u_2-a-\cos\theta, \sin\theta) }  {\tilde\gamma(r, \theta)}
r dr\d\theta
\\&=
 \!\!\int_k^{\pi+k}\!
 \frac{\psi(u_1, u_2-a-\cos\theta, \sin\theta)   }
{\tilde\gamma(t, \theta)}
 \,
\d\theta.
\end{align*}
In this equation, we use $\tilde\gamma(t, \theta) = \gamma(\abs{t \cos \theta})$. 
In this manner, the four arcs of $\frac{\partial}{\partial t}\frac 1 t\scrG_a$ can be computed 
as listed in Proposition~\ref{prop:s4}.  
Similarly the two lines can also be computed in this manner. 
This completes the proof. 
\end{proof}

\ifdefined\SHORTTEX
\else
%\numberwithin{equation}{chapter}
\egroup
\fi
\end{subappendices}

%\ifdefined\SHORTTEX
%\else
%%\numberwithin{equation}{chapter}
%\fi

\section*{Acknowledgement}
\addcontentsline{toc}{section}{Acknowledgement}

We acknowledge the support of the Natural Sciences and Engineering Research Council of Canada (NSERC), funding reference number RGPIN-2017-06496. 
Portions of this work were performed in collaboration with Seno Medical Instruments, Inc. 
The authors would like to thank M. J. Moore, Y. Xu, C. J. Kumaradas, B. A. Clingman, S. C. Miller, T. G. Miller and D. G. Herzog for feedback or discussion regarding this manuscript.

%\end{document}

%%%%%%%%%%%%%%%%%%%%%%%%%%%%%%%%

\printbibliography

%\begin{thebibliography}{10}
%\bibitem{oraevsky2018clinical}
%AA~Oraevsky, B~Clingman, J~Zalev, AT~Stavros, WT~Yang, and JR~Parikh.
%\newblock Clinical optoacoustic imaging combined with ultrasound for
%  coregistered functional and anatomical mapping of breast tumors.
%\newblock {\em Photoacoustics}, 12:30--45, 2018.
%\end{thebibliography}

\end{document}